\documentclass{LMCS}

\def\dOi{10(1:10)2014}
\lmcsheading%
{\dOi}
{1--33}
{}
{}
{May.~22, 2013}
{Feb.~13, 2014}
{}

\ACMCCS{[{\bf Theory of computation}]: Formal languages and automata theory}

\subjclass{F.4.3 Formal Languages}

\usepackage{hyperref}
%% read in additional TeX-packages or personal macros here:
%\usepackage{amsmath}
\usepackage{amssymb}

\usepackage{dsfont}
\usepackage{xspace}

\usepackage{setspace}

\usepackage{color}

\usepackage{multirow}

\usepackage{pifont}

\newcommand{\cal}{\mathcal}
\newcommand{\con}{\!\cdot\!}
\newcommand{\lcon}{\cdot}

\newcommand{\Nat}{\mathds{N}}
\newcommand{\Rat}{\mathds{Q}}
\newcommand{\Reals}{\mathds{R}}

\newcommand{\A}{{\mathcal A}}
\newcommand{\B}{{\mathcal B}}
\newcommand{\C}{{\mathcal C}}
\newcommand{\D}{{\mathcal D}}
\newcommand{\G}{{\mathcal G}}

\newcommand{\DSA}{DA}
\newcommand{\NDA}{N\DSA\xspace}
\newcommand{\NDAs}{N{\DSA}s\xspace}
\newcommand{\DDA}{D\DSA\xspace}
\newcommand{\DDAs}{D{\DSA}s\xspace}

\newcommand{\Func}[1]{\mathtt{#1}}
\newcommand{\range}{\Func{range}}
\newcommand{\Gap}{\Func{gap}}
\newcommand{\Cost}{\Func{cost}}
\newcommand{\Round}{\Func{Round}}

\newcommand{\letter}[1]{\mbox{`}#1\mbox{'}}

\def\abs#1{\ensuremath{\lvert #1 \rvert}}
\newcommand{\lwhole}[1]{\lfloor #1 \rfloor}
\newcommand{\uwhole}[1]{\lceil #1 \rceil}
\newcommand{\half}{\frac{1}{2}}

\newcommand{\tuple}[1]{\langle #1  \rangle}
\newcommand{\pair}{\tuple}

\newcommand{\ST}{~{\big |}\:}
\newcommand{\emptyword}{\varepsilon}

\newcommand{\NiceCheckMark}{\ding{51}}
\newcommand{\NiceCross}{\ding{55}}

\newcommand{\Paragraph}[1]{\subsection*{#1}}
%\newcommand{\SemiParagraph}[1]{\vspace*{0.15 cm}\noindent\emph{#1}.}
%\newcommand{\LittleTitle}[1]{\vspace*{0.15 cm}\noindent{\bf #1}.}

%% define non-standard environments here, for example

%% due to the dependence on amsart.cls, \begin{document} has to occur
%% BEFORE the title and author information:
\begin{document}

\title[Exact and Approximate Determinization of Discounted-Sum Automata]
      {Exact and Approximate Determinization of Discounted-Sum Automata\rsuper*}
\author[U.~Boker]{Udi Boker\rsuper a}	
\address{{\lsuper a}The Interdisciplinary Center, Herzliya, Israel}	

\author[T.~A.~Henzinger]{Thomas A. Henzinger\rsuper b}	
\address{{\lsuper b}IST Austria, Klosterneuburg, Austria}	
\thanks{{\lsuper b}This work was supported in part by the Austrian
  Science Fund NFN RiSE (Rigorous Systems Engineering) and by the ERC
  Advanced Grant QUAREM (Quantitative Reactive Modeling).}	

\keywords{Discounted-sum automata, Determinization, Approximation, Quantitative verification}
\titlecomment{{\lsuper*}The present article combines and extends \cite{BH11,BH12}.}
%%%%%%%%%%%%%%%%%%%%%%%%%%%%%%%%%%%%%%%%%%%%%%%%%%%%%%%%%%%%%%%%%%%%%%%%%%%

%% the abstract has to PRECEED the command \maketitle:
%% be sure not to issue the \maketitle command twice!

\begin{abstract}
\noindent A discounted-sum automaton (\NDA) is a nondeterministic finite automaton 
with edge weights, valuing a run by the discounted sum of visited 
edge weights.
More precisely, the weight in the $i$-th position of the run is divided by 
$\lambda^i$, where the discount factor $\lambda$ is a fixed rational number 
greater than~$1$. The value of a word is the minimal value of the automaton runs on it. 
Discounted summation is a common and useful measuring scheme, especially 
for infinite sequences, reflecting the assumption that earlier weights 
are more important than later weights. Unfortunately, determinization of \NDAs, which is often essential in formal verification, is, in general, not possible.

We provide positive news, showing that every \NDA with an integral discount factor is determinizable. 
We complete the picture by proving that the integers characterize exactly the discount factors that guarantee determinizability:
for every rational discount factor $\lambda\not\in\Nat$, there is a nondeterminizable $\lambda$-\NDA. 
We also prove that the class of \NDAs with integral discount factors enjoys closure under the algebraic operations $\min$, $\max$, addition, and 
subtraction, which is not the case for general \NDAs nor for deterministic \NDAs.

For general \NDAs, we look into approximate determinization, which is always possible as the influence of a word's suffix decays. We show that the naive approach, of unfolding the automaton computations up to a sufficient level, is doubly exponential in the discount factor. We provide an alternative construction for approximate determinization, which is singly exponential in the discount factor, in the precision, and in the number of states. We also prove matching lower bounds, showing that the exponential dependency on each of these three parameters cannot be avoided.

 All our results hold equally for automata over finite words and for automata over infinite words.
\end{abstract}

\maketitle
\vfill
\section{Introduction}\label{sec:Introduction}

Discounting the influence of future events is a key paradigm in economics and it is studied in game theory (e.g.\   \cite{ZP96,Andersson06}), Markov decision processes (e.g.\ \cite{DiscountedMarkov,DiscountedDeterministicMarkov}), and automata theory (e.g.\ \cite{DiscountingInSystems,Skew,AlternatingWeightedAutomata,ExpressivenessQuantitativeLanguages,CDH10}). Discounted summation formalizes the concept that an immediate reward is better than a potential one in the far-away future, as well as that a potential problem in the future is less troubling than a current one.

A discounted-sum automaton (\NDA) is a nondeterministic automaton with rational weights on the transitions, where the value of a run is the discounted summation of the weights along it. Each automaton has a fixed discount-factor $\lambda$, which is a rational number bigger than $1$, and the weight in the $i$th position of a run is divided by $\lambda^i$. The value of a word is the minimal value of the automaton runs on it. Hence, an \NDA realizes a function from words to real numbers.  Two automata are equivalent if they realize the same function, namely if they assign the same value to every word. 

Discounted summation is of special interest for automata over infinite words. There are two common ways to adjust standard summation for handling infinite sequences: discounting and limit-averaging. The latter, which relates to the input suffixes, has been studied a lot in mean-payoff games and, more recently, in limit-average automata \cite{CDH10, ImperfectInformation}; the former, which relates more to the input prefixes, has received comparatively little attention.

Automata are widely used in formal verification, for which automata comparison is fundamental. Specifically, one usually considers the following three questions, ordered from the most difficult one to the simplest one: general comparison (language inclusion), universality, and emptiness. In the Boolean setting, where automata assign Boolean values to the input words, the three questions, with respect to automata $\A$ and $\B$, are whether $\A\subseteq\B$, $\A=True$, and $\A=False$. In the quantitative setting, where automata assign numeric values to the input words, the universality and emptiness questions relate to a constant threshold, usually $0$. Thus, the three questions are whether $\A\leq\B$, $\A\leq 0$, and $\A\geq 0$.

A central problem with these quantitative automata is that only the emptiness question is known to be solvable \cite{CDH10}. For limit-average automata, the two other questions are undecidable \cite{ImperfectInformation}. For \NDAs, it is an open question whether universality and comparison are decidable. (For special cases, such as ``functional automata'', where all runs over a single word yield the same value, the problem is decidable \cite{FGR12}.) This is not the case with \DDAs, for which all three questions have polynomial solutions \cite{ZP96,Andersson06,CDH10}. Unfortunately, \NDAs cannot, in general, be determinized. It is currently known that for every rational discount-factor $1 < \lambda < 2$,
there is a $\lambda$-\NDA that cannot be determinized \cite{CDH10}. 

It turns out, quite surprisingly, that discounting by an integral factor forms a ``well behaved'' class of automata, denoted ``integral \NDAs'', allowing for determinization (Section~\ref{sec:Determinizability}) and closed under the algebraic operations $\min$, $\max$, addition and subtraction (Section~\ref{sec:Closure}). The above closure is of special interest, as neither \NDAs nor \DDAs are closed under the $\max$ operation (Theorem~\ref{thm:MaxInclosure}). Furthermore, the integers, above $1$, characterize exactly the set of discount factors that guarantee determinizability (Section~\ref{sec:NonDeterminizability}). That is, for every rational factor $\lambda\not\in\Nat$, there is a non-determinizable $\lambda$-\NDA. 

The discounted summation intuitively makes \NDAs more influenced by word-prefixes than by word-suffixes, suggesting that some basic properties are shared between automata over finite words and over infinite words. Indeed, all the above results hold for both models. Yet, the equivalence relation between automata over infinite words is looser than the one on finite words. That is, if two automata are equivalent with respect to finite words then they are also equivalent with respect to infinite words, but not vice versa (Lemma~\ref{lem:FiniteEquivalenceImpliesInfiniteEquivalence}). 

The above results relate to complete automata; namely, to automata in which every state has at least one transition over every alphabet letter. Also, our automata do not have a Boolean acceptance condition, which would have made them compute a partial function instead of a total one. For incomplete automata or, equivalently, for automata with $\infty$-weights, or automata where some of the states are accepting and some are not, no discount factor can guarantee determinization (Section~\ref{sec:Incomplete}). In the scope of discounted-sum automata, the restriction to complete automata is very natural, as infinite-weight edges break the property of the decaying importance of future events.

Our determinization procedure, described in Section~\ref{sec:Construction}, is an extension of the subset construction, keeping a ``recoverable-gap'' value to each element of the subset. Intuitively, the ``gap'' of a state $q$ over a finite word $u$ stands for the extra cost of reaching $q$, compared to the best possible value so far. This extra cost is multiplied, however, by $\lambda ^{|u|}$, to reflect
the $\lambda ^{|u|}$ division in the value-computation of the suffixes. A gap of $q$ over $u$ is ``recoverable'' if there is a suffix $w$ that ``recovers'' it, meaning that there is an optimal run over $uw$ that visits $q$ after reading $u$. Due to the discounting of the future, once a gap is too large, it is obviously not recoverable. Specifically, for every $\lambda$, we have that $\sum_{i=0}^\infty (\frac{1}{\lambda^i}) = \frac{1}{1-\frac{1}{\lambda}} =
\frac{\lambda}{\lambda-1} \leq 2$. Hence, our procedure only keeps gaps that are smaller than twice the maximal difference between the automaton weights. 

The determinization procedure may be used for an arbitrary $\lambda$-\NDA, always providing an equivalent $\lambda$-\DDA, if terminating. Yet, it is guaranteed to terminate for a $\lambda$-\NDA with $\lambda\in\Nat$, while it might not terminate in the case that $\lambda\in\Rat\setminus\Nat$. 

For integral \NDAs, the key observation is that there might only be finitely many recoverable gaps (Lemma~\ref{lem:Termination}). More precisely, for an integral  \NDA $\A$, there might be up to $m$ recoverable gaps, where $m$ is the maximal difference between the weights in $\A$, multiplied by the least common denominator of all weights. Accordingly, our determinization procedure generates a \DDA with up to $m^n$ states, where $n$ is the number of states in $\A$. We show, in Section~\ref{sec:StateComplexity}, that there must indeed be a linear dependency on the weight values (and thus possibly an exponential dependency on their representation), as well as an exponential dependency on the number of states. 
%As for the exact dependency of the state blowup on the combination of the weights and the number of states, we show a tight lower bound of $O(m)^{O(n)}$, using a rich alphabet of size exponential in the number of states; The exact unavoidable state blow-up for the case that the alphabet size is linear in the number of states is left as an open problem.

For nonintegral \NDAs, the key observation is that the recoverable gaps might be arbitrarily close to each other (Theorem~\ref{thm:NotNaturalNumber}). Hence, the bound on the maximal value of the gaps cannot guarantee a finite set of recoverable gaps. Different gaps have, under the appropriate setting, suffixes that distinguish between them, implying that an equivalent deterministic automaton must have a unique state for each recoverable-gap (Lemma~\ref{lem:DifferentGaps}). Therefore, an automaton that admits infinitely many recoverable gaps cannot be determinized.

As nonintegral \NDAs cannot, in general, be determinized, we investigate, in Section~\ref{sec:ApproxDet}, their approximate determinization. We define that an automaton can be determinized approximately if for every real precision $\varepsilon>0$, there is a deterministic automaton such that the difference between their values on all words is less than or equal to $\varepsilon$. Due to the discounting in the summation, all \NDAs allow for approximate determinization, by unfolding the automaton computations up to a sufficient level. This is in contradistinction to other common quantitative automata, such as sum, average and limit-average automata, which cannot be determinized approximately \cite{BH12}.
The smaller the required precision is, and the closer the discount-factor is to $1$, the more expensive it is to determinize an \NDA approximately. We represent the precision by $\varepsilon=2^{-p}$ and the discount factor by $\lambda=1+2^{-k}$. We analyze the unfolding approach to construct an automaton whose state space is exponential in the representation of the precision ($p$) and doubly exponential in the representation of the discount factor ($k$).

We provide an alternative construction for approximate determinization, by generalizing the determinization procedure of Section~\ref{sec:Construction}. Recall that this procedure need not terminate for a nonintegral discount factor, since there might be infinitely many recoverable gaps. We overcome the problem by rounding the gaps to a fixed resolution.
This obviously guarantees termination, however it raises the question of how an unbounded number of gap rounding allows for the required precision. The key observation is that the rounding is also discounted along the computation. We show that the construction is singly exponential in $k$, in $p$, and in the number of states of the automaton. We complete the picture by proving matching lower bounds, showing exponential dependency on each of these three parameters.

It turns out that closure under algebraic operation is also closely related to the question of whether the set of recoverable gaps is finite. Considering the operations of addition, subtraction, minimum, and maximum, between two automata, the latter is the most problematic one, as the value of a word is defined to be the minimal value of the automaton runs on it. For two \NDAs, $\A$ and $\B$, one may try to construct an automaton $\C=\max(\A,\B)$, by taking the product of $\A$ and $\B$, while maintaining the recoverable gaps of $\A$'s original states, compared to $\B$'s original states. This approach indeed works for integral \NDAs (Theorem~\ref{thm:MaxClosure}). Note that determinizability is not enough, as neither \NDAs nor \DDAs are closed under the $\max$ operation. Furthermore, we show, in Theorem~\ref{thm:MaxInclosure}, that there are two \DDAs, $\A$ and $\B$, such that there is no \NDA $\C$ with $\C=\max(\A,\B)$. For precluding the existence of such a nondeterministic automaton $\C$, we cannot make usage of Lemma~\ref{lem:DifferentGaps}, and thus use a more involved, ``pumping-style'', argument with respect to recoverable gaps.   

\Paragraph{Related work} Weighted automata are often handled as \emph{formal power series}, mapping words to a \emph{semiring} \cite{WeightedHandbook}. By this view, the weight of a run is the semiring-multiplication of the transition weights along it, while the weight of a word is the semiring-addition of its possible run weights. Discounted-summation is, upfront, not an associative operation. Yet, it can be encoded as such, allowing to view it as operating over a semiring. For the semiring setting, there are numerous works, including results on determinization \cite{MohriAlgorithms,WeightedHandbook}. 
Nevertheless, the algorithms for determinizing arbitrary automata over semirings are general, and do not take advantage of the special properties of discounted summation. For that reason, our algorithm is guaranteed to terminate over every integral discounted-sum automaton, which is not the case for the general determinization algorithm.
In particular, our determinization algorithm differs from the algorithm in \cite[Chapter 7.2]{WeightedHandbook} in two main aspects that are special to discounted summation: (a) the data associated with the states of the deterministic automata concern, in our algorithm, \emph{gaps}, whereas in \cite{WeightedHandbook} it concerns \emph{residuals}. A residual of a (finite) path is the difference between the accumulated weights along it and the accumulated weights along the (so far) optimal path. A gap is the ``future difference'' between these accumulated values, meaning the extra cost that will equalize the two paths, taking into account the future discounting; and (b) Due to the future discounting, gaps have a maximal relevant value (threshold), over which they can be disregarded, which is not the case with residuals.

Formal power series are also generalized, in \cite{Skew}, for handling infinite discounted summation. The weight of a run is defined to be a ``skewed multiplication'' of the weights along it, where this ``skewing'' corresponds to the discounting operation. Yet, \cite{Skew} mainly considers the equivalence between recognizable series and rational series, and does not handle automata determinization.

Some determinization algorithms of weighted automata are guaranteed to terminate, provided that the nondeterministic automaton satisfies the ``twins'' property \cite{MohriAlgorithms,WeightedHandbook,AKL11}. Roughly speaking, the twins property says that different runs over the same input provide the same value, up to a finite set of differences between their values. These differences in the values stem from the different prefixes (delays) that might lead to different loops over the same input. These loops, however, must yield the same value. For that reason, the twins property is a sufficient condition for the termination of the general determinization algorithm of \cite{WeightedHandbook}. Our notion of gaps is unrelated to the twins property, and does not depend on a specific structure of the nondeterministic automaton. It allows the determinization algorithm to terminate over every integral discounted-sum automaton, whether or not it has the twins property.

Discounted Markov decision processes (e.g.\ \cite{DiscountedMarkov,DiscountedDeterministicMarkov}) and discounted games (e.g.\   \cite{ZP96,Andersson06}) generalize, in some sense, deterministic discounted-sum automata. The former adds probabilities and the latter allows for two player choices. However, they do not cover nondeterministic automata. One may note that nondeterminism relates to ``blind games'', in which each player cannot see the other player's moves, whereas in standard games the players have full information on all moves. Indeed, for a discounted-game, one can always compute an optimal strategy \cite{ZP96}, while a related question on nondeterministic discounted-sum automata, of whether the value of all words is below $0$, is not known to be decidable.   

The discounted-sum automata used in \cite{CDH10} are the same as ours, with only syntactic differences -- they use the discount-factor $\lambda$ as a multiplying factor, rather than as a dividing one, and define the value of a word as the maximal value of the automaton runs on it, rather than the minimal one. The definitions are analogous, replacing $\lambda$ with $\frac{1}{\lambda}$ and multiplying all weights by $(-1)$. In \cite{CDH10}, it is shown that for every rational discount-factor $1 < \lambda < 2$,
there is a $\lambda$-\NDA that cannot be determinized. We generalize their proof approach, in Theorem~\ref{thm:NotNaturalNumber}, extending the result to every $\lambda\in\Rat\setminus\Nat$.

The importance of approximate determinization is well known (e.g.\ \cite{AKL11,BGW01}). In \cite{AKL11}, they consider approximate determinization of sum automata with respect to ratio, showing that it is possible in cases that the nondeterministic automaton admits a ``$t$-twins'' property, which is generalization of the twins property. In \cite{BGW01}, they also consider sum automata, providing an efficient determinization algorithm, which, however, is not guaranteed to be within a certain distance from the nondeterministic automaton. In general, we are unaware of any work on the approximation of automata over \emph{infinite} words, such as discounted-sum automata. 

\section{Discounted-Sum Automata}
We consider discounted-sum automata with rational weights and rational discount factors over finite and infinite words. 

Formally, given an alphabet $\Sigma$, a {\em word\/} over $\Sigma$ is a finite or infinite sequence of letters in $\Sigma$, with $\emptyword$ for the empty word. We denote the concatenation of a finite word $u$ and a finite or infinite word $w$ by $u\con w$, or simply by $uw$.

A discounted-sum
automaton (\NDA) is a tuple $\A = \tuple{\Sigma, Q, q_{in}, \delta, \gamma, \lambda}$ over a finite alphabet $\Sigma$, with a finite set of states
$Q$, an initial state $q_{in}\in Q$, a transition function $\delta \subseteq Q \times \Sigma \times Q$, a weight function $\gamma: \delta\to
\Rat$, and a discount factor $1 < \lambda \in \Rat$. We write $\lambda$-\NDA to denote an \NDA with a discount factor $\lambda$, for example $\frac{5}{2}$-\NDA, and refer to an ``integral \NDA'' when $\lambda$ in an integer. For an automaton $\A$ and a state $q$ of $\A$, we
denote by $\A^q$ the automaton that is identical to $\A$, except for having $q$ as its initial state.

Intuitively, $\{ q' \ST (q,\sigma,q')\in\delta \}$ is the set of states that $\A$ may move
to when it is in the state $q$ and reads the letter $\sigma$.  The automaton may have many possible transitions for each state and letter, and hence we say that $\A$ is {\em nondeterministic}. In the case where for every $q \in
Q$ and $\sigma \in \Sigma$, we have that $|\{ q' \ST (q,\sigma,q')\in\delta \}| \leq 1$, we say that $\A$ is {\em deterministic}, denoted \DDA.

In the case where for every $q \in
Q$ and $\sigma \in \Sigma$, we have that $|\{ q' \ST (q,\sigma,q')\in\delta \}| \geq 1$, we say that $\A$ is {\em complete}. Intuitively, a complete automaton cannot get stuck at some state. 

In this paper, we only consider complete automata, except for Section~\ref{sec:Incomplete}, handling incomplete automata. It is natural to restrict to complete discounted-sum automata, as infinite-weight edges break the property of the decaying importance of future events.

A run of an automaton is a sequence of states and letters, $q_0, \sigma_1, q_1, \sigma_2, q_2, \ldots$, such that $q_0=q_{in}$ and for every
$i$, $(q_i,\sigma_{i+1},q_{i+1})\in\delta$. The length of a run, denoted $|r|$, is $n$ for a finite run $r = q_0, \sigma_1, q_1, \ldots,
\sigma_n, q_n$, and $\infty$ for an infinite run. 

The value of a run $r$ is $\gamma(r) = \sum_{i=0}^{|r|-1} \frac{\gamma(q_i,\sigma_{i+1},q_{i+1})}{\lambda^i}$. The value of a word $w$ (finite or infinite) is $\A(w) = \inf \{\gamma(r) \ST \mbox{$r$ is a run of $\A$ on $w$}
\}$. A run $r$ of $\A$ on a word $w$ is said to be \emph{optimal} if $\gamma(r)=\A(w)$. By the above definitions, an automaton $\A$ over finite words realizes a function from $\Sigma^*$ to $\Rat$ and over infinite words from
$\Sigma^\omega$ to $\Reals$. Two automata, $\A$ and $\A'$, are \emph{equivalent} if they realize the same function. The equivalence notion relates to either finite words or infinite words. (By Lemma~\ref{lem:FiniteEquivalenceImpliesInfiniteEquivalence}, equivalence over finite words implies equivalence over infinite words, but not vice versa.)

Next, we provide some specific definitions, to be used in the determinization construction and in the non-determinizability proofs.

The \emph{cost} of reaching a state $q$ of an automaton $\A$ over a finite word $u$ is $\Cost(q,u) = \min \{\gamma(r) \ST \mbox{$r$ is a run of $\A$ on $u$ ending in $q$} \}$, where $\min\emptyset = \infty$. The \emph{gap} of a state $q$ over a finite word $u$ is $\Gap(q,u)=\lambda^{|u|}(\Cost(q,u) - \A(u))$. Note that when $\A$ operates over infinite words, we interpret $\A(u)$, for a finite word $u$, as if $\A$ was operating over finite words.

Intuitively, the gap of a state $q$ over a word $u$ stands for the weight that a run starting in $q$ should save, compared to a run starting in $u$'s optimal ending state, in order to make $q$'s path preferable.
A gap of a state $q$ over a finite word $u$ is said to be \emph{recoverable} if there is a suffix that makes this path optimal; that is, if  there is a word $w$, such that $\Cost(q,u)+ \frac{\A^q(w)}{\lambda^{|u|}} = \A(uw)$. The suffix $w$ should be finite/infinite, depending on whether $\A$ operates over finite/infinite words. 

Notes on notation-conventions: The discount factor $\lambda$ is often used in the literature as a multiplying factor, rather than as a dividing factor, thus taking the role of $\frac{1}{\lambda}$, compared to our definitions. Another convention is to value a word as the maximal value of its possible runs, rather than the minimal value;  the two definitions are analogous, and can be interchanged by multiplying all weights by $(-1)$.

\section{Determinizability of Integral Discounted-Sum Automata}\label{sec:Determinizability}
In this section, we show that all complete \NDAs with an integral factor are determinizable. Note that the discounting factor in an \NDA is defined to be bigger than $1$. When it equals to $1$, which is the case of non-discounting sum automata, some nondeterministic automata cannot be determinized. The decision problem, of whether a non-discounting sum automaton can be determinized, is an open problem \cite{ABK11}.

Formally, we provide the following result.

\begin{thm}\label{thm:Determinizability}
For every complete $\lambda$-\NDA $\A$ with an integral factor $\lambda\in\Nat$, there is an equivalent complete $\lambda$-\DDA with up to $m^n$ states, where $m$ is the maximal difference between the weights in $\A$, multiplied by the least common denominator of all weights, and $n$ is the number of states in $\A$.
\end{thm}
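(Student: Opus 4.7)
The plan is to implement the extended subset construction sketched in the introduction. A state of the deterministic automaton $\mathcal{B}$ is a ``gap vector'' $g: Q \to G \cup \{\bot\}$, where $g(q)$ records the current recoverable gap of $\mathcal{A}$'s state $q$, and the symbol $\bot$ marks states whose gap is too large to ever be optimal. The initial vector is $g_0(q_{in})=0$ and $g_0(q)=\bot$ otherwise. Given a vector $g$ and a letter $\sigma$, for every potential successor $q' \in Q$ I put $c(q') = \min\{\, g(q) + \gamma(q,\sigma,q') \mid g(q) \neq \bot\,\}$; the outgoing weight of $\mathcal{B}$ on $\sigma$ is $\mu = \min_{q'} c(q')$, and the successor vector is $g'(q') = \lambda(c(q')-\mu)$, with $g'(q')$ replaced by $\bot$ whenever this value exceeds $\frac{\lambda}{\lambda-1}(W_{\max}-W_{\min}) \leq 2(W_{\max}-W_{\min})$, where $W_{\max}, W_{\min}$ are the largest and smallest weights of $\mathcal{A}$.

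The first step is correctness: a straightforward induction on $|u|$ establishes that the vector $g_u$ reached along $u$ satisfies $g_u(q) = \Gap(q,u)$ whenever $q$ has a recoverable gap, and that the $\lambda$-discounted sum of $\mathcal{B}$'s edge weights along $u$ equals $\mathcal{A}(u)$. The inductive step follows from the identity $\Cost(q',u\sigma)-\mathcal{A}(u\sigma) = (c(q')-\mu)/\lambda^{|u|}$, obtained by expanding the definitions, and then multiplying through by $\lambda^{|u|+1}$. Replacing oversized entries by $\bot$ is justified by the telescoping bound $\sum_{i\geq 0}\lambda^{-i} = \frac{\lambda}{\lambda-1}$: no suffix $w$ can ``buy back'' more than $\frac{\lambda}{\lambda-1}(W_{\max}-W_{\min})$ of an inflated gap, so states with larger gaps never lie on any optimal run and can be safely pruned. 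This part of the argument works for any $\lambda > 1$.

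The integrality of $\lambda$ is used only to bound the number of reachable states. Let $d$ be the least common denominator of the weights of $\mathcal{A}$. Every cost along a run of length $|u|$ is an integer multiple of $1/(d\lambda^{|u|-1})$, so $\lambda^{|u|}(\Cost(q,u)-\mathcal{A}(u))$ is an integer multiple of $\lambda/d$, which, when $\lambda\in\Nat$, is an integer multiple of $1/d$. Combined with the bound $2(W_{\max}-W_{\min})$ on recoverable gaps, each coordinate of a gap vector ranges over a set of size at most $m$, so $\mathcal{B}$ has at most $m^n$ reachable gap vectors. The main obstacle is precisely this denominator-cancellation step: it requires that the $\lambda^{|u|}$ factor in the definition of $\Gap$ exactly cancel the $\lambda^{|u|-1}$ accumulated in the denominators of the costs, a cancellation that crucially depends on $\lambda\in\Nat$; for nonintegral rational $\lambda$, as developed in Section~\ref{sec:NonDeterminizability}, the recoverable gaps can lie arbitrarily close to one another and no such finite bound exists.
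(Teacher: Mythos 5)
Your proposal follows essentially the same route as the paper: the gap-vector extension of the subset construction with threshold pruning at $2(W_{\max}-W_{\min})$, correctness by induction on the length of the input word, and finiteness of the reachable gap set from the integrality of $\lambda$. Two loose ends worth tightening: to obtain the stated bound $m^n$ rather than $(2m)^n$ you should keep the gaps as multiples of $\lambda/d$ (giving at most $2m/\lambda\leq m$ values per coordinate since $\lambda\geq 2$) rather than weakening to multiples of $1/d$ as in your last step; and since the theorem also covers automata over infinite words, you need the additional (easy) observation, stated as a separate lemma in the paper, that equivalence over finite words implies equivalence over infinite words.
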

\begin{proof}
Lemmas~\ref{lem:Termination}--\ref{lem:DetCorrectness}, given in the subsections below, constitute the proof.
\end{proof}

Theorem~\ref{thm:Determinizability} stands for both automata over finite words and over infinite words. 

The determinization procedure extends the subset construction, by keeping a recoverable-gap value to each element of the subset. It resembles the determinization procedure of non-discounting sum automata over finite words \cite{MohriAlgorithms,WeightedHandbook}, while having two main differences: the weight-differences between the reachable states is multiplied at every step by $\lambda$, and differences that exceed some threshold are removed.

The procedure may be used for an arbitrary $\lambda$-\NDA, always providing an equivalent $\lambda$-\DDA, if terminating. It is guaranteed to terminate for a $\lambda$-\NDA with $\lambda\in\Nat$, which is not the case for $\lambda\in\Rat\setminus\Nat$. 

%The state blow-up involved in the construction is shown to be tight for a rich alphabet of size exponential in the number of states (Theorem~\ref{thm:Tightness}). The unavoidable blow-up for an alphabet of size linear in the number of states is left as an open problem.

We start, in Subsection~\ref{sec:Construction}, with the determinization procedure, continue, in Subsection~\ref{sec:Correctness}, with its termination and correctness proofs, and conclude, in Subsection~\ref{sec:StateComplexity}, with lower bounds.

\subsection{The Construction}\label{sec:Construction}
Consider an \NDA $\A = \tuple{\Sigma, Q, q_{in}, \delta, \gamma, \lambda}$. We inductively construct an equivalent \DDA
$\D = \tuple{\Sigma, Q', q'_{in}, \delta', \gamma', \lambda}$. (An example is given in Figure~\ref{fig:Determinizing}.)

Let $T$ be the maximal difference between the weights in $\A$. That is, $T = \max \{|x-y| \ST x,y \in \range(\gamma) \}$. Since $\sum_{i=0}^\infty (\frac{1}{\lambda^i}) = \frac{1}{1-\frac{1}{\lambda}} =
\frac{\lambda}{\lambda-1} \leq 2$, we define
the set  $G = \{ v \ST v\in\Rat \mbox{ and } 0 \leq v < 2T \} \cup \{\infty \}$ of possible recoverable-gaps. The $\infty$ element denotes a non-recoverable gap, and behaves as the standard infinity element in the arithmetic operations that we will be using. Note that our discounted-sum automata do not have infinite weights; it is only used as an internal element of the construction.

A state of $\D$ extends the standard subset construction by assigning a gap to each state of $\A$. That is, for $Q=\{q_1, \ldots,
q_n \}$, a state $q'\in Q'$ is a tuple $\tuple{g_1, \ldots, g_n}$, where $g_h\in G$ for every $1 \leq h \leq n$. Intuitively, the gap $g_h$ of a state $q_h$ stands for the extra cost of reaching $q_h$, compared to the best possible value so far. This extra cost is multiplied, however, by $\lambda ^l$, for a finite run of length $l$, to reflect
the $\lambda ^l$ division in the value-computation of the suffixes. Once a gap is obviously irreducible, by being larger than or equal to $2T$, it is set to be $\infty$.

In the case that $\lambda\in\Nat$, the construction only requires finitely many elements of $G$, as shown in Lemma~\ref{lem:Termination} below, and thus it is guaranteed to terminate.

For simplicity, we assume that $q_{in}=q_1$ and extend $\gamma$ with $\gamma(\tuple{q_i,\sigma,q_j})=\infty$ for every $\tuple{q_i,\sigma,q_j}\not\in\delta$. The initial state of $\D$ is $q'_{in} = \tuple{0,\infty,\ldots,\infty}$, meaning that $q_{in}$
is the only relevant state and has a $0$ gap.

We inductively build $\D$ via the intermediate automata $\D_i= \tuple{\Sigma, Q'_i, q'_{in}, \delta'_i, \gamma'_i, \lambda}$. We start with
$\D_1$, in which $Q'_1 = \{ q'_{in} \}$, $\delta'_1 = \emptyset$ and $\gamma'_1 = \emptyset$, and proceed from $\D_i$ to $\D_{i+1}$, such that
$Q'_i \subseteq Q'_{i+1}$, $\delta'_i \subseteq \delta'_{i+1}$ and $\gamma'_i \subseteq \gamma'_{i+1}$. The construction is completed once
$\D_i = \D_{i+1}$, finalizing the desired deterministic automaton $\D = \D_i$.

In the induction step, $\D_{i+1}$ extends $\D_i$ by (possibly) adding, for every state $q'=\tuple{g_1, \ldots, g_n}\in Q'_i$ and letter $\sigma\in\Sigma$, a state $q'':= \tuple{x_1, \ldots, x_n}$, a transition $\tuple{q', \sigma, q''}$ and a weight $\gamma_{i+1}(\tuple{q', \sigma, q''}) := c$, as follows:
\begin{itemize}
\item For every $1 \leq h \leq n$, $c_h := \min \{g_j + \gamma(\tuple{q_j, \sigma, q_h}) \ST 1 \leq j \leq n  \}$
\item $c: = \min\limits_{1 \leq h \leq n}(c_h)$
\item For every $1 \leq h \leq n$, $x_h := \lambda(c_h-c)$. If $x_h \geq 2T$ then $x_h := \infty$.
\end{itemize}

\begin{figure}
% Exported from Xfig in a 46% ratio
\centering\input{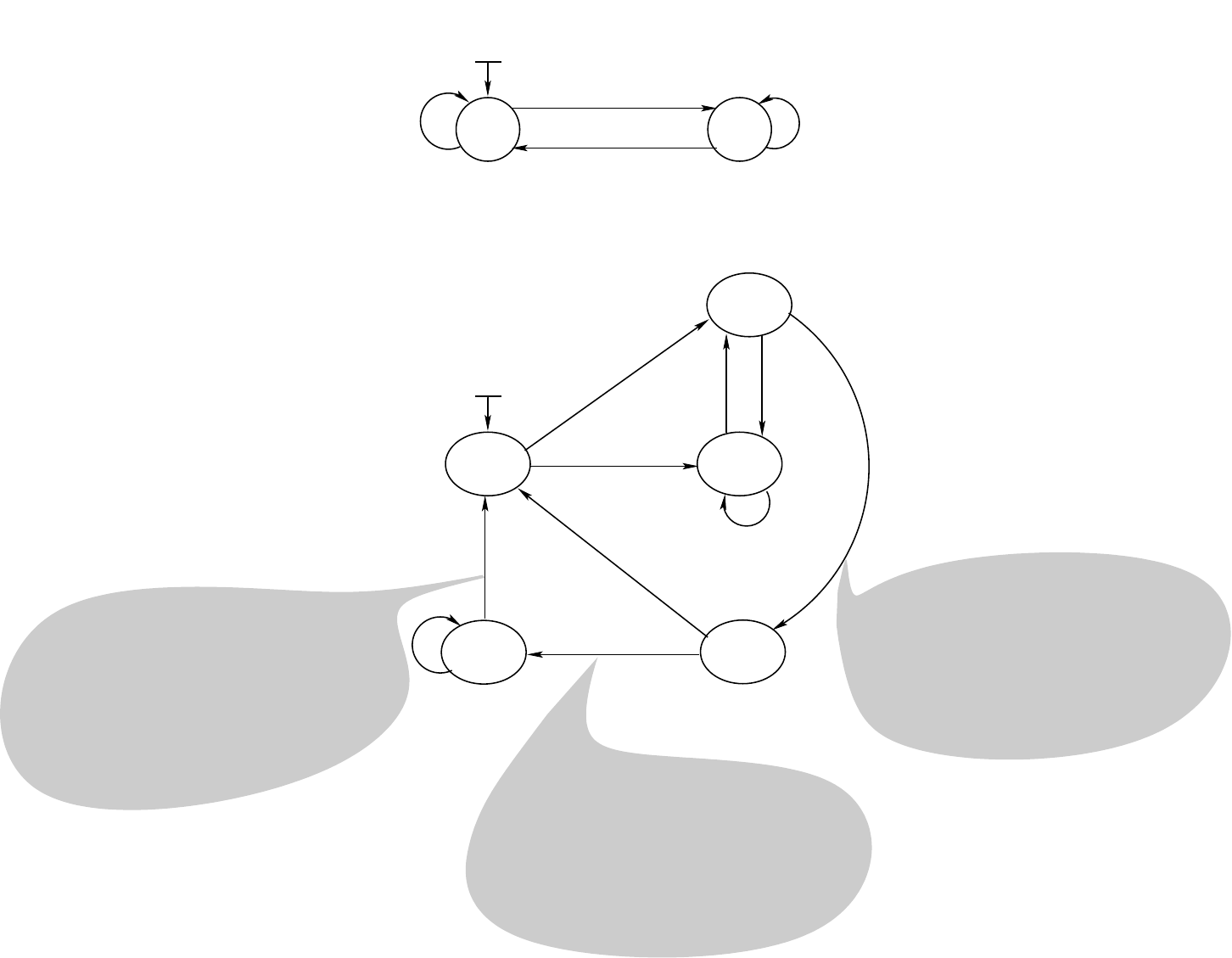_t} \caption{Determinizing the $3$-\NDA $\A$ into the $3$-\DDA $\D$. The gray bubbles detail some of the intermediate calculations of the determinization procedure.}\label{fig:Determinizing}
\end{figure}

\subsection{Termination and Correctness}\label{sec:Correctness}
We prove below that the above procedure always terminates for a discount factor $\lambda \in \Nat$, while generating an automaton that is equivalent to the original one. We start with the termination proof.

\begin{lem}\label{lem:Termination}
The above determinization procedure always terminates for a complete integral $\lambda$-\NDA $\A$. The resulting deterministic automaton has up to $m^n$ states, where $m$ is the maximal difference between the weights in $\A$, multiplied by the least common denominator of all weights, and $n$ is the number of states in $\A$.
\end{lem}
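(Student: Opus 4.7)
The plan is to bound the number of distinct gap tuples that can arise during the construction, via an integrality argument that uses $\lambda \in \Nat$ in an essential way. Let $d$ be the least common denominator of the weights of $\A$, so that each weight lies in $\tfrac{1}{d}\mathds{Z}$, and set $m = Td$ as in the statement, where $T$ is the maximal weight difference.

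First I would prove an \emph{integrality lemma}: for every finite word $u$ and every state $q$ of $\A$, the quantity $d\lambda^{|u|}\Cost(q,u)$ is either $\infty$ or an integer. This follows by expanding $\Cost(q,u)$ as a minimum of discounted sums of the form $\sum_{i=0}^{|u|-1} \gamma_i / \lambda^i$; multiplication by $d\lambda^{|u|}$ converts each summand into $d\gamma_i\,\lambda^{|u|-i}$, which is an integer precisely because $d\gamma_i \in \mathds{Z}$ and $\lambda \in \Nat$. The same argument yields $d\lambda^{|u|}\A(u) \in \mathds{Z}$, since completeness forces the infimum defining $\A(u)$ to be attained by some actual run.

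Subtracting, whenever $g_h = \lambda^{|u|}(\Cost(q_h,u) - \A(u))$ is finite, we get $d \cdot g_h \in \mathds{Z}$. The construction additionally enforces $0 \le g_h < 2T$ (and collapses $g_h$ to $\infty$ otherwise), so each coordinate of a reachable tuple takes values in a finite set of size $O(m)$, together with the symbol $\infty$. Consequently, only finitely many tuples can ever appear in $Q'_i$, and since $Q'_1 \subseteq Q'_2 \subseteq \cdots$ is a monotone ascending chain in a finite universe, it must stabilize; the procedure therefore halts. A direct combinatorial count of the reachable tuples, together with the observation that at least one coordinate of each reachable tuple must equal $0$ (namely the coordinate of an optimal-ending state), yields the stated upper bound of $m^n$ on $|Q'|$.

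The main obstacle is the integrality step: one must argue that $\Cost(q,u)$ and $\A(u)$ \emph{individually} lie in $\tfrac{1}{d\lambda^{|u|}}\mathds{Z}$, rather than merely that their difference has bounded denominator. It is precisely here that integrality of $\lambda$ is indispensable -- for $\lambda = p/q$ with $q>1$, the denominator $q^{|u|}$ grows without bound, infinitely many distinct gaps in $[0,2T)$ can appear, and the chain of $\D_i$'s need not stabilize, in agreement with Theorem~\ref{thm:NotNaturalNumber}.
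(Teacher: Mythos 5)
Your argument is correct in substance but takes a genuinely different route from the paper's, and it has two points worth flagging. The paper proves termination \emph{syntactically}: it runs an induction on the construction steps $\D_1\subseteq\D_2\subseteq\cdots$ and observes that the four operations the procedure applies (addition, subtraction, multiplication by $\lambda\in\Nat$, minimum) keep the stored coordinates inside the finite set $\bar{G}=\{\frac{\lambda c}{d}\ST \frac{2m}{\lambda}>c\in\Nat\}\cup\{\infty\}$. You instead argue \emph{semantically}, via an integrality lemma for $d\lambda^{|u|}\Cost(q,u)$ and $d\lambda^{|u|}\A(u)$. Both hinge on the same fact (integral powers of $\lambda$ clear the discounting denominators), but your route silently identifies the coordinates $g_h$ computed by the procedure with the semantic gaps $\lambda^{|u|}(\Cost(q_h,u)-\A(u))$; that identification is exactly part (ii) of the correctness lemma (Lemma~\ref{lem:DetCorrectness}), which is proved independently of termination, so there is no circularity, but you should state the dependency (or re-derive the identification) rather than assume it. The second point concerns the count: your argument places each finite coordinate in $\frac{1}{d}\mathds{Z}\cap[0,2T)$, a set of size about $2m$, giving roughly $(2m+1)^n$ tuples; the ``one coordinate is $0$'' observation does not in general bring this down to $m^n$. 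The paper gets the stated bound by additionally noting that the last operation in computing each coordinate is multiplication by $\lambda$, so every finite gap is of the form $\frac{\lambda c}{d}$ and there are only about $\frac{2m}{\lambda}\leq m$ of them per coordinate. This divisibility observation is available in your framework too (it follows from the construction of $x_h$), but you need to add it to reach $m^n$ rather than a bound larger by a factor of roughly $2^n$. Neither issue affects the termination claim itself, which your argument establishes completely.
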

\begin{proof}
The induction step of the construction, extending $\D_i$ to $\D_{i+1}$, only depends on $\A$, $\Sigma$ and $Q'_i$. Furthermore, for every $i\geq 0$, we have that $Q'_i \subseteq Q'_{i+1}$. Thus, for showing the termination of the construction, it is enough to show that there is a general bound on the size of the sets $Q'_i$.  We do it by showing that the inner values, $g_1, \ldots, g_n$, of every state $q'$ of every set $Q'_i$ are from the finite set $\bar{G}$, defined below. 

Let $d\in\Nat$ be the least common denominator of the weights in $\A$, and let $m\in\Nat$ be the maximal difference between the weights, multiplied by
$d$. That is, $m= d \times \max \{|x-y| \ST x,y \in \range(\gamma) \}$. We define
the set  $\bar{G} = \{ \frac{\lambda c}{d} \ST \frac{2m}{\lambda} > c\in\Nat \} \cup \{\infty \}$ 

We start with $Q'_1$, which satisfies the property that the inner values, $g_1, \ldots, g_n$, of every state $q'\in Q'_1$ are from $\bar{G}$, as $Q'_1=\{  \tuple{0,\infty,\ldots,\infty}  \}$. We proceed by induction on the construction steps, assuming that $Q'_i$ satisfies the property. By the construction, an inner value of a state $q''$ of $Q'_{i+1}$ is derived by four operations on elements of $\bar{G}$: addition, subtraction ($x-y$, where $x \geq y$), multiplication by $\lambda\in\Nat$, and taking the minimum. 

One may verify that applying these four operations on $\infty$ and numbers of the form $\frac{\lambda c}{d}$, where $\lambda,c\in\Nat$, results in $\infty$ or in a number $\frac{v}{d}$, where $v\in\Nat$. Since the last operation in calculating an inner value of $q''$ is multiplication by $\lambda$, we have that $v$ is divisible by $\lambda$. Once an inner value exceeds $\frac{2m}{d}$, it is replaced with $\infty$. Hence, all the inner values are in $\bar{G}$. 

Having up to $m$ possible values to the elements of an $n$-tuple, provides the $m^n$ upper bound for the state space of the resulting deterministic automaton.
\end{proof}

Before proceeding to the correctness proof, we show that equivalence of automata over finite words implies their equivalence over infinite words. Note that the converse need not hold.

\begin{lem}\label{lem:FiniteEquivalenceImpliesInfiniteEquivalence}
If two \NDAs, $\A$ and $\B$, are equivalent with respect to finite words then they are also equivalent with respect to infinite words. The converse need not hold. 
\end{lem}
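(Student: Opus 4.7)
My plan is to reduce the forward implication to the following characterization: for every complete $\lambda$-\NDA $\A$ and every infinite word $w$, one has
\[
\A(w) \;=\; \liminf_{n\to\infty} \A(w_n),
\]
where $w_n$ denotes the length-$n$ prefix of $w$. Once this is established for both $\A$ and $\B$, the implication is immediate: if $\A(u)=\B(u)$ for every finite word $u$, then in particular $\A(w_n)=\B(w_n)$ for every $n$, so the two $\liminf$'s agree and $\A(w)=\B(w)$.

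To prove the characterization, let $T = \max\{|\gamma(e)|\ST e\in\delta\}$, which is finite because $\A$ has finitely many transitions; then every discounted tail from position $n$ onward is bounded in absolute value by $\sum_{i=n}^\infty T/\lambda^i = T\lambda/((\lambda-1)\lambda^n)$, a geometric quantity tending to $0$. For the lower bound $\A(w)\geq \liminf_n \A(w_n)$, I would take an arbitrary infinite run $r$ of $\A$ on $w$ and its length-$n$ prefix $r_n$: since $r_n$ is a finite run on $w_n$, $\gamma(r_n) \geq \A(w_n)$, and by the tail bound $\gamma(r_n)\to\gamma(r)$, so $\gamma(r)\geq \liminf_n \A(w_n)$; taking infimum over $r$ yields the bound. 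For the upper bound, I would pick for each $n$ an optimal finite run $s^{(n)}$ on $w_n$ and, using completeness of $\A$, extend it to an infinite run $\tilde{s}^{(n)}$ on $w$. The added tail contributes at most $T\lambda/((\lambda-1)\lambda^n)$ in absolute value, so $\A(w)\leq \gamma(\tilde{s}^{(n)})\leq \A(w_n) + T\lambda/((\lambda-1)\lambda^n)$ for every $n$, whence $\A(w)\leq \liminf_n \A(w_n)$.

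For the converse, I would exhibit a small explicit counterexample over $\Sigma=\{a\}$ with $\lambda=2$. Let $\B$ be the single-state deterministic \NDA whose $a$-self-loop has weight $1$, so that $\B(a^n)=2-2^{1-n}$ and $\B(a^\omega)=2$. Let $\A$ be obtained from $\B$ by adding a fresh state $q_1$, together with a weight-$0$ $a$-transition from the initial state to $q_1$ and a self-loop on $q_1$ of weight $\lambda$ on $a$. A direct calculation shows that every infinite run of $\A$ on $a^\omega$ has value exactly $2$: the saving of $1$ at the jump step is exactly compensated by the amplified weights in the subsequent tail, since $1/\lambda^k = \sum_{i>k}(\lambda-1)/\lambda^i$. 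Hence $\A(a^\omega) = \B(a^\omega)$, yet on the finite word $a$ the branching run has value $0$, so $\A(a)=0\neq 1=\B(a)$.

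The only step that relies essentially on completeness is the upper-bound half of the characterization, where an optimal finite run on $w_n$ must be extended to an infinite run on $w$; without completeness such an extension need not exist and $\A(w)$ may strictly exceed $\liminf_n \A(w_n)$. This is consistent with the pathologies of the incomplete case discussed in Section~\ref{sec:Incomplete}, and is the one genuinely nontrivial point in an otherwise routine argument.
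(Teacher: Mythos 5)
Your proof is correct and rests on the same core fact as the paper's argument: the value of an infinite word is determined by the values of its finite prefixes up to a geometrically vanishing error --- you package this as the clean characterization $\A(w)=\liminf_{n}\A(w_n)$ (correctly noting that completeness is needed only for the upper-bound half), whereas the paper runs the same estimate inside a proof by contradiction comparing optimal runs on prefixes. Your one-letter counterexample for the converse differs from the paper's (Figure~\ref{fig:CounterExample}) but is equally valid, since every infinite run of your $\A$ on $a^\omega$ indeed has value exactly $2$ while $\A(a)=0\neq 1=\B(a)$.
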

\begin{proof}
Assume, by contradiction, two \NDAs, $\A$ and $\B$, that are equivalent with respect to finite words and not equivalent with respect to infinite words. Then there is an infinite word $w$ and a constant number $c\neq 0$, such that $\A(w)-\B(w)=c$. Let $m$ be the maximal difference between a weight in $\A$ and a weight in $\B$. Since for every $1<\lambda$, $\sum_{i=0}^\infty (\frac{1}{\lambda^i}) = \frac{1}{1-\frac{1}{\lambda}} = \frac{\lambda}{\lambda-1} \leq 2$, it follows that the difference between the values that $\A$ and $\B$ assign to any word is smaller than or equal to $2m$. Hence, the difference between the values of their runs on suffixes of $w$, starting at a position $p$, is smaller than or equal to $\frac{2m}{\lambda^p}$. 

Now, since $\A$ and $\B$ are equivalent over finite words, it follows that they have equally-valued optimal runs over every prefix of $w$. Thus, after a long enough prefix, of length $p$ such that $\frac{2m}{\lambda^p}<c$, the difference between the values of $\A$'s and $\B$'s optimal runs on $w$ must be smaller than $c$, leading to a contradiction. 

A counter example for the converse is provided in Figure~\ref{fig:CounterExample}.
\end{proof}

\begin{figure}
% Exported from Xfig in a 50% ratio
\centering\input{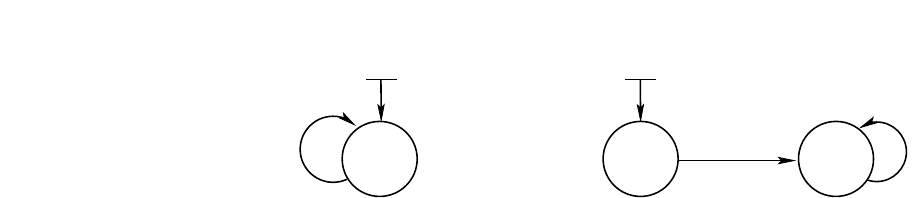_t} \caption{The automata $\A$ and $\B$ are equivalent with respect to infinite words, while not equivalent with respect to finite words.}\label{fig:CounterExample}
\end{figure}

We proceed with the correctness proof. By Lemma~\ref{lem:FiniteEquivalenceImpliesInfiniteEquivalence}, it is enough to prove the correctness for automata over finite words. 

Note that the correctness holds for arbitrary discount factors, not only for integral ones. For the latter, the determinization procedure is guaranteed to terminate (Lemma~\ref{lem:Termination}), which is not the case in general. Yet, in all cases that the procedure terminates, it is guaranteed to be correct.

\begin{lem}\label{lem:DetCorrectness}
Consider a $\lambda$-\NDA $\A$ over $\Sigma^*$ and a \DDA $\D$, constructed from $\A$ as above. Then, for every $w\in\Sigma^*$, $\A(w)=\D(w)$.
\end{lem}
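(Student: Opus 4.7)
The plan is to prove, by induction on $|u|$, a joint invariant: (a) $V(u)=\A(u)$, where $V(u)$ is the discounted weight $\D$ accumulates along its unique run on $u$; and (b) writing $\tuple{g_1,\ldots,g_n}$ for $\D$'s state after reading $u$, for every $h$ either $g_h=\Gap(q_h,u)$ with $g_h<2T$, or $g_h=\infty$ and $\Gap(q_h,u)\geq 2T$ (taking $\Gap(q_h,u)=\infty$ when $q_h$ is unreachable on $u$). Claim (a) is exactly the statement to prove. The base case $u=\emptyword$ is immediate from $q'_{in}=\tuple{0,\infty,\ldots,\infty}$, $V(\emptyword)=0=\A(\emptyword)$, $\Gap(q_1,\emptyword)=0$, and $\Gap(q_h,\emptyword)=\infty$ for $h\neq 1$.

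The crucial auxiliary fact is a \emph{domination} property: whenever $\Gap(q_j,u)\geq 2T$, for every extension $\sigma w$ any run of $\A$ passing through $q_j$ after $u$ has cost at least that of some run passing through an optimal state $q^*$ (one attaining $\Cost(q^*,u)=\A(u)$, which by (b) has $g_{q^*}=0$). This uses the bound $|\A^{q}(w')-\A^{q''}(w')|\leq T\cdot\lambda/(\lambda-1)\leq 2T$ on suffix values from any two states, which holds because $\lambda\geq 2$: the $\geq 2T/\lambda^{|u|}$ prefix deficit of $q_j$ cannot be recovered by any future contribution. Consequently, pruning the $j$ with $g_j=\infty$ in the definition of $c_h$ changes neither $\A(u\sigma)$ nor any $\Gap(q_h,u\sigma)$ that is itself below $2T$.

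For the inductive step, let $f(h):=\lambda^{|u|}(\Cost(q_h,u\sigma)-\A(u))$. The induction hypothesis (b) gives $c_h\geq f(h)$, with equality whenever some inner minimizer for $f(h)$ has $g_j<\infty$; by the domination property together with the completeness of $\A$, this equality holds for every $h$ with $\Gap(q_h,u\sigma)<2T$, and in particular at the $h^*$ attaining $\Gap(q_{h^*},u\sigma)=0$. Hence $c=\min_h c_h=f(h^*)=\lambda^{|u|}(\A(u\sigma)-\A(u))$, and $V(u\sigma)=V(u)+c/\lambda^{|u|}=\A(u\sigma)$, establishing (a). For (b), the construction sets $x_h=\lambda(c_h-c)$, which equals $\lambda(f(h)-c)=\Gap(q_h,u\sigma)$ whenever $c_h=f(h)$; otherwise, a short arithmetic using $\lambda\geq 2$ and the bound $c\leq\max(\gamma)$ (available by completeness) shows $\lambda(c_h-c)\geq\lambda T\geq 2T$, so the construction correctly sets $x_h=\infty$.

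The principal obstacle is showing that the pruning inside $c_h$ is harmless for both the accumulated value and the new gap components. Both concerns reduce to the domination property together with the numerical inequality $\lambda T\geq 2T$, which is exactly what the relevant discount factors provide.
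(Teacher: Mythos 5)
Your proof is correct and follows essentially the same route as the paper's: an induction on the word length carrying the joint invariant that the accumulated value equals $\A(u)$ and that each component $g_h$ equals $\Gap(q_h,u)$ when below $2T$ and is $\infty$ otherwise, with the $\lambda\geq 2$ arithmetic showing that pruned gaps can only produce successor gaps that are themselves $\geq 2T$. The only (cosmetic) difference is that the paper first establishes the invariant for an infinite-state variant $\D'$ that never truncates and then argues truncation is harmless in a separate step, whereas you fold the truncation case directly into the inductive invariant.
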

\begin{proof}
Consider an \NDA $\A = \tuple{\Sigma, Q, q_{in}, \delta, \gamma, \lambda}$ and the \DDA 
$\D = \tuple{\Sigma, Q', q'_{in}, \delta', \gamma', \lambda}$ constructed from $\A$ as above. Let $T$ be the maximal difference between the weights in $\A$. That is, $T = \max \{|x-y| \ST x,y \in \range(\gamma) \}$. 

For a word $w$, let $q'_w=\tuple{g_1, \ldots, g_n}\in Q'$ be the last state of $\D$'s run on $w$. We show by induction on the length of the input word $w$ that:
\begin{itemize}
\item [i.]\, $\A(w) = \D(w)$.
\item [ii.] For every $1 \leq h \leq n$,  $g_h=\Gap(q_h,w)$ if $\Gap(q_h,w) < 2T$ and $\infty$ otherwise.
\end{itemize}

The assumptions obviously hold for the initial step, where $w$ is the empty word, and all values are $0$. As for the induction step, we assume they hold for $w$ and show that for every $\sigma\in\Sigma$, they hold for $w\con\sigma$. Let $q'_{w\con\sigma}=\tuple{x_1, \ldots, x_n}\in Q'$ be the last state of $\D$'s run on $w\con\sigma$. 

We start by proving the claim with respect to an \emph{infinite-state} automaton $\D'$ that is constructed as in Section~\ref{sec:Construction}, except for not changing any gap to $\infty$. Afterwards, we shall argue that changing all gaps that exceed $2T$ to $\infty$ does not harm the correctness.

\begin{enumerate}[label=\roman*.]
\item 
By the definitions of $\Cost$ and $\Gap$, we have for every $1 \leq h \leq n$, 
\begin{eqnarray*}
\Cost(q_h, w\con\sigma) &=& \min\limits_{1 \leq j \leq n} (\Cost(q_j,w) + \frac{\gamma(\tuple{q_j, \sigma, q_h}}{\lambda^{|w|}}) =  \\
&=& \min\limits_{1 \leq j \leq n} (\frac{\Gap(q_j,w)}{\lambda^{|w|}} + \A(w) + \frac{\gamma(\tuple{q_j, \sigma, q_h}}{\lambda^{|w|}}) = \\
&=&\A(w) + \frac{1}{\lambda^{|w|}} ( \min\limits_{1 \leq j \leq n} (\Gap(q_j,w) + \gamma(\tuple{q_j, \sigma, q_h}) )) = \\
&=& \mbox{By the induction assumption} =\\
&=& \D'(w) + \frac{1}{\lambda^{|w|}} ( \min\limits_{1 \leq j \leq n} (g_j + \gamma(\tuple{q_j, \sigma, q_h}) )).
\end{eqnarray*}
By the construction of $\D'$ (Section~\ref{sec:Construction}), the transition weight $c$ that is assigned on the ($|w|+1$)-step is $ c = \min\limits_{1 \leq h \leq n}  ( \min\limits_{1 \leq j \leq n} (g_j + \gamma(\tuple{q_j, \sigma, q_h}) )) $. Therefore, 
\begin{eqnarray*}
\D'(w\con\sigma) &=& \D'(w) + \frac{c}{\lambda^{|w|}} =\\
&=&  \D'(w) + \frac{1}{\lambda^{|w|}} \min\limits_{1 \leq h \leq n}   \min\limits_{1 \leq j \leq n} (g_j + \gamma(\tuple{q_j, \sigma, q_h}) )= \\
&=&  \min\limits_{1 \leq h \leq n} (\D'(w) +   \frac{1}{\lambda^{|w|}}  \min\limits_{1 \leq j \leq n} (g_j + \gamma(\tuple{q_j, \sigma, q_h}) ))=\\
&=&  \min\limits_{1 \leq h \leq n} \Cost(q_h, w\con\sigma) =\\
&=&  \A(w\con\sigma).
\end{eqnarray*}

\item   We use the notations and the equations of part (i.) above. By the construction of $\D'$, for every $1 \leq h \leq n$, 
\begin{eqnarray*}
x_h &=& \lambda (\min\limits_{1 \leq j \leq n} (g_j + \gamma(\tuple{q_j, \sigma, q_h}) ) - c)=\\
&=& \lambda (\min\limits_{1 \leq j \leq n} (g_j + \gamma(\tuple{q_j, \sigma, q_h}) ) - \lambda^{|w|}(\D'(w\con\sigma) - \D'(w)))=\\
&=& \lambda (\lambda^{|w|}(\Cost(q_h, w\con\sigma) - \D'(w) ) - \lambda^{|w|}(\D'(w\con\sigma) - \D'(w)))=\\
&=& \lambda^{|w|+1} (\Cost(q_h, w\con\sigma) - \D'(w\con\sigma) )=\\
&=& \lambda^{|w|+1} (\Cost(q_h, w\con\sigma) - \A(w\con\sigma) )=\\
&=& \Gap(q_h,w\con\sigma).
\end{eqnarray*}

\end{enumerate}

\noindent It is left to show that the induction is also correct for the \emph{finite-state} automaton $\D$. The only difference between the construction of $\D$ and of $\D'$ is that the former changes all gaps $(g_j$) above $2T$ to $\infty$. We should thus show that if $g_j$, for some $1 \leq j \leq n$, exceeds $2T$ at a step $i$ of the construction, and this $g_j$ influences $g_h$, for some $1 \leq h \leq n$, at step $i+1$, then $g_h \geq 2T$. This implies that $\D(w)=D'(w)$, since at every step of the construction there is at least one $1\leq h \leq n$, such that $g_h=0$, corresponding to an optimal run of $\A$ on $w$ ending in state $q+h$. Formally, we should show that if $g_h = \lambda (g_j + \gamma(\tuple{q_j, \sigma, q_h}) - c)$, where $c$ is the transition weight assigned in the construction on the $i+1$ step (as defined in part (i.) above), then $g_h \geq 2T$. Indeed, $g_h \geq \lambda (2T + \gamma(\tuple{q_j, \sigma, q_h}) - c) \geq 2 (2T + \gamma(\tuple{q_j, \sigma, q_h}) - c) \geq 2 (2T + (-T) ) = 2T$.
\end{proof}

\subsection{State Complexity}\label{sec:StateComplexity}
For an integral \NDA $\A$, the deterministic automaton constructed as in Subsection~\ref{sec:Construction} has up to $m^n$ states, where $m$ is the maximal difference between the weights in $\A$, multiplied by the least common denominator of all weights, and $n$ is the number of states in $\A$ (Lemma~\ref{lem:Termination}). We provide below corresponding lower bounds. The lower bounds are with respect to automata that operate over infinite words, and by Lemma~\ref{lem:FiniteEquivalenceImpliesInfiniteEquivalence} they also apply to automata operating over finite words.

\Paragraph{Dependency on the number of states}
Unavoidable exponential dependency of the determinization on the number of states ($n$) is straightforward, by considering discounted-sum automata as generalizing finite automata over finite words. We demonstrate this generalization in Figure~\ref{fig:StateLowerBound}, showing how to translate the finite automaton $\A_k$ over the alphabet $\{a,b\}$ into an \NDA $\A'_k$ over the alphabet $\{a,b,\#\}$ with transition weights in $\{0,1\}$. By this translation, the value that $\A'_k$ assigns to an infinite word $w$ is smaller than $0$ if and only if the $\#$ sign appears in $w$ and the prefix of $\A'_k$ up to the first $\#$ sign is accepted by $\A_k$. The automaton $\A_k$ has $k+1$ states and accepts the language $L_k$ of finite words that have an `$a$' in the last-by-$k$ position. (For example, $L_1$ accepts the words with an `$a$' as the penultimate letter.) It is known that a deterministic automaton $\D_k$ for $L_k$ must have at least $2^k$ states. Assuming, by contradiction, a \DDA $\D'_k$ that is equivalent to $\A'_k$ and has less than $2^k$ states, will easily allow to construct a deterministic automaton $\D_k$ for $L_k$ with less than $2^k$ states: By the structure of $\A'_k$, it follows that $\D_k$ must have a $0$ weight in all transitions that occur before a $\#$ sign, and a weight of $-1$ in some of the transitions upon reading $\#$. Hence, one can translate $\D'_k$ into the required finite automaton $\D_k$ by setting the accepting states to be the states that have an outgoing transition with a weight of $-1$.

\begin{figure}
% Exported from Xfig in a 50% ratio
\input{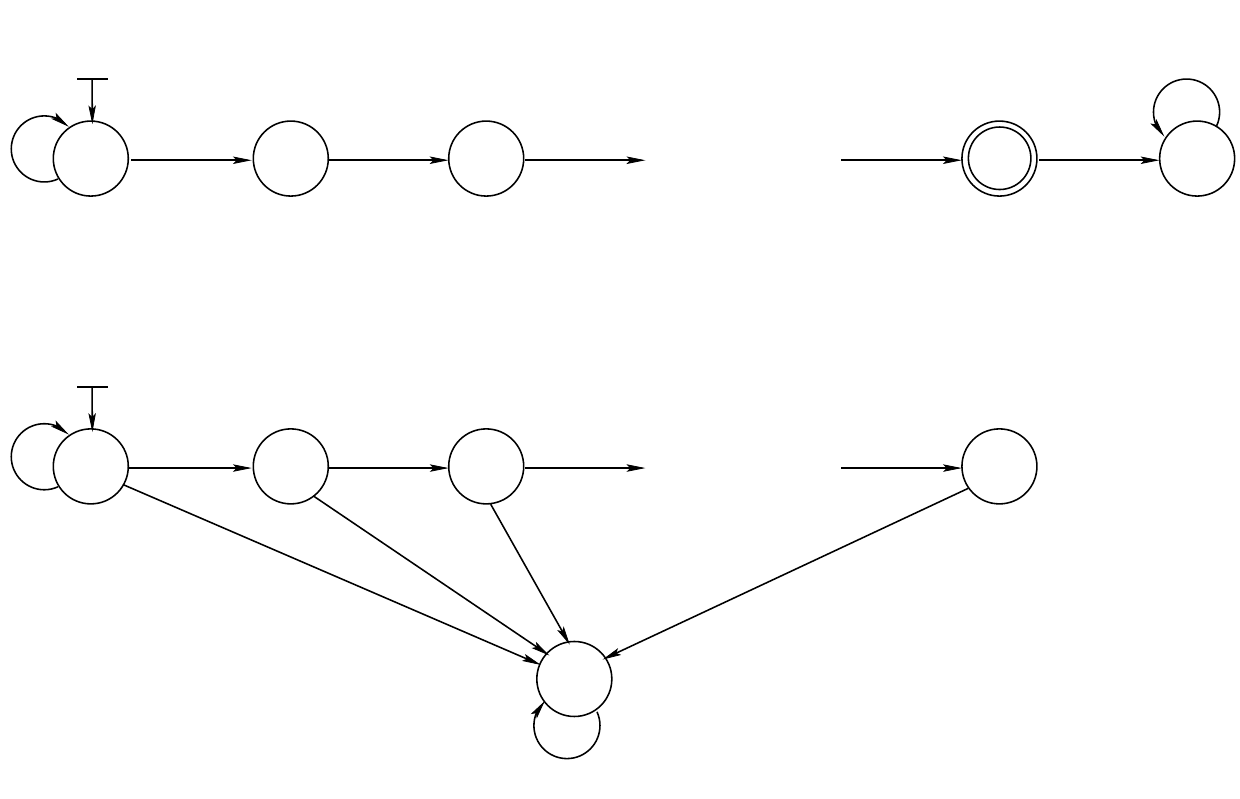_t} \caption{The family $\A_k$ of finite automata accepts finite words that have an `$a$' in the last-by-$k$ position. The family $\A'_k$ of \NDAs is their direct generalization to discounted-sum automata over infinite words, where the $\#$ sign marks the ``end'' of the word. This generalization allows to show the exponential dependency on the number of states in determinizing discounted-sum automata.}\label{fig:StateLowerBound}
\end{figure}

\begin{prop}
For every integral discount factor $\lambda$ there is a $\lambda$-\NDA with $k+3$ states over the alphabet $\{a,b,\#\}$ and with weights in $\{0,1\}$, such that every equivalent $\lambda$-\DDA must have at least $2^k$ states.
\end{prop}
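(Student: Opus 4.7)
My plan is to use the family $\A'_k$ depicted in Figure~\ref{fig:StateLowerBound}, which consists of the states $q_{in}, q_0, \ldots, q_k, q_{end}$, totalling $k+3$. I would first verify the intended semantics: for any infinite word $w$, if $u$ denotes the prefix of $w$ preceding its first `$\#$' (when one exists), then $\A'_k(w) = -\lambda^{-|u|}$ when $u \in L_k$, and $\A'_k(w) = 0$ otherwise (including when $w$ contains no `$\#$' at all). This is immediate from the structure of $\A'_k$: the only way for a run to accumulate the $-1$ weight is to nondeterministically ``guess'' the jump from $q_{in}$ into the chain $q_0 \to q_1 \to \cdots \to q_k$ so that $q_k$ is reached exactly when the first `$\#$' is read, which is possible iff $u \in L_k$.

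For the lower bound I would assume toward contradiction that some equivalent $\lambda$-\DDA $\D'_k$ has fewer than $2^k$ states. By pigeonhole, two distinct $u, u' \in \{a,b\}^k$ drive $\D'_k$'s unique runs into a common state $q$. Letting $P_u, P_{u'}$ denote the discounted weights accumulated along these two runs, determinism gives the factorization $\D'_k(xz) = P_x + \lambda^{-k}\, (\D'_k)^q(z)$ for $x \in \{u, u'\}$ and every continuation $z \in \{a,b,\#\}^\omega$. Equivalence with $\A'_k$ therefore forces
\[
\A'_k(uz) - \A'_k(u'z) \;=\; P_u - P_{u'},
\]
a constant $c$ independent of $z$.

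The crux, and the only step I expect to require care, is exhibiting two continuations that yield different values of $c$. Taking $z = \#w$ with any $w$ produces $c = 0$, since length-$k$ words are not in $L_k$ and hence $\A'_k(u\#w) = \A'_k(u'\#w) = 0$. Next, fix a position $i$ at which $u$ and $u'$ differ, pick any $v \in \{a,b\}^{i+1}$, and take $z = v\#w$: now $|uv| = k+i+1$, so the last-by-$k$ position of $uv$ equals $i$ and lies inside the prefix, where $uv[i]=u[i] \neq u'[i]=u'v[i]$; exactly one of $uv, u'v$ then belongs to $L_k$, giving $c = \pm \lambda^{-(k+i+1)} \neq 0$. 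The two incompatible values of $c$ yield the required contradiction, so every $\lambda$-\DDA equivalent to $\A'_k$ must have at least $2^k$ states.
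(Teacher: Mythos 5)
Your proof is correct, but it follows a genuinely different route from the paper's. The paper treats the proposition as a black-box reduction to the classical $2^k$ DFA lower bound for $L_k$: assuming a small equivalent $\lambda$-\DDA $\D'_k$, it argues that $\D'_k$ must carry weight $0$ on all transitions occurring before a `$\#$' and weight $-1$ on the `$\#$'-transitions out of the ``accepting'' states, so that marking those states as accepting turns $\D'_k$ into a DFA for $L_k$ with fewer than $2^k$ states. You instead give a self-contained fooling-set argument: pigeonhole two distinct $u,u'\in\{a,b\}^k$ into a common state, use determinism and equivalence to conclude that $\A'_k(uz)-\A'_k(u'z)=P_u-P_{u'}$ is a constant $c$ independent of the continuation $z$, and then produce one continuation forcing $c=0$ and another forcing $c=\pm\lambda^{-(k+i+1)}\neq 0$. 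This is essentially the distinguishing-suffix technique of Lemma~\ref{lem:DifferentGaps} (as used in Theorem~\ref{thm:WeightLowerBound}), rederived from scratch for this particular automaton, and your index bookkeeping (length-$k$ words are not in $L_k$; the last-by-$k$ position of $uv$ with $|v|=i+1$ is exactly $i$) checks out. What your version buys is rigor in the infinite-word setting: equivalence over infinite words does not by itself pin down the weights of finite run prefixes of $\D'_k$, so the paper's structural claim that all pre-`$\#$' transitions must have weight $0$ requires additional justification (one can construct deterministic automata with nonzero weights whose values on all `$\#$'-free words are nonetheless $0$), whereas your quantity $P_u-P_{u'}$ is well defined no matter how $\D'_k$ distributes its weights. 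What the paper's reduction buys is brevity and the ability to quote the known DFA bound directly. Both yield the stated $2^k$ bound.
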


\Paragraph{Dependency on the weights}
The dependency of the determinization construction on the weight difference ($m$) is linear. Yet, a reasonable description of the automaton weights is not unary, but, for example, binary, making the construction exponentially dependent on the weight description. We show below that the determinization must indeed depend on the weight value, making it possibly exponential in the weight description. For every fixed discount factor $\lambda$, we provide a family of automata $\A_k$, for $k\in \{ \lambda, \lambda+1, \lambda+2,\ldots\}$ (Figure~\ref{fig:WeightLowerBound}) over a fixed alphabet and weights in $(-k,1)$, such that $\A_k$ has three states and an equivalent deterministic automaton must have at least $k-\lambda$ states (no matter how concise the description of $k$ in $\A_k$ is).

We start by providing a sufficient condition, under which two different gaps must be associated with two different states of a deterministic automaton. The lemma below generalizes an argument given in \cite{CDH10}.

\begin{lem}\label{lem:DifferentGaps}
Consider an \NDA $\A$ for which there is an equivalent \DDA $\D$. If there is a state $q$ of $\A$, finite words $u$ and $u'$, and words $w$ and $z$, such that:
\begin{enumerate}[label=\roman*.]
\item $\A$ has runs on $u$ and on $u'$ ending in $q$; 
\item $\Gap(q,u)\neq\Gap(q,u')$; 
\item The gaps of $q$ over both $u$ and $u'$ are recoverable with $w$, that is, $\A(uw) = \Cost(q,u)+\frac{\A^q(w)}{\lambda^{|u|}}$ and  $\A(u'w) = \Cost(q,u')+\frac{\A^q(w)}{\lambda^{|u'|}}$; and
\item $\A$ is ``indifferent'' to concatenating $z$ to $u$ and to $u'$, that is $\A(uz) = \A(u)$ and $\A(u'z) = \A(u')$
 \end{enumerate}
then the runs of $\D$ on $u$ and on $u'$ end in different states.

The words $w$ and $z$ should be finite for automata over finite words and infinite for automata over infinite words. In the former case, $z$ is redundant as it can always be $\emptyword$.
\end{lem}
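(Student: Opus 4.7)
The natural approach is proof by contradiction: assume that $\D$'s unique runs on $u$ and on $u'$ end in a common state $p$. Determinism of $\D$ then yields the factorization $\D(u\con\alpha) = \D(u) + \frac{\D^p(\alpha)}{\lambda^{|u|}}$ for every continuation $\alpha$, and analogously for $u'$. The plan is to establish the two identities
\begin{eqnarray*}
\lambda^{|u|}(\Cost(q,u) - \D(u)) &=& \lambda^{|u'|}(\Cost(q,u') - \D(u')), \\
\lambda^{|u|}(\D(u) - \A(u)) &=& \lambda^{|u'|}(\D(u') - \A(u')),
\end{eqnarray*}
whose sum is exactly $\Gap(q,u) = \Gap(q,u')$, directly contradicting hypothesis~(ii).

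For the first identity I invoke hypothesis~(iii): equivalence of $\A$ and $\D$ on the input $uw$ yields $\D(u) + \frac{\D^p(w)}{\lambda^{|u|}} = \A(uw) = \Cost(q,u) + \frac{\A^q(w)}{\lambda^{|u|}}$, so multiplying by $\lambda^{|u|}$ gives $\D^p(w) - \A^q(w) = \lambda^{|u|}(\Cost(q,u) - \D(u))$. The analogous computation on $u'w$ produces the same left-hand side $\D^p(w) - \A^q(w)$, and equating the two right-hand sides delivers the first identity.

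For the second identity I invoke hypothesis~(iv). In the infinite-word setting, equivalence of $\A$ and $\D$ on the infinite input $uz$, combined with $\A(uz) = \A(u)$, gives $\D(u) + \frac{\D^p(z)}{\lambda^{|u|}} = \A(u)$, hence $\D^p(z) = \lambda^{|u|}(\A(u) - \D(u))$; the same argument on $u'z$ yields the second identity. In the finite-word case one simply takes $z = \emptyword$ and uses that equivalence of $\D$ and $\A$ over finite words directly gives $\D(u) = \A(u)$ and $\D(u') = \A(u')$, making the second identity trivial.

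The place where I expect the proof to require most care is precisely the infinite-word case: equivalence of $\D$ and $\A$ over \emph{infinite} words does not a priori force agreement on the finite prefixes $u$ and $u'$, so $\D(u)$ need not equal $\A(u)$. Condition~(iv) supplies a common infinite tail $z$ whose contribution from $p$ in $\D$ exactly measures the discrepancy $\lambda^{|u|}(\A(u) - \D(u))$, and the fact that a single $z$ works simultaneously for $u$ and $u'$ is what couples the two discrepancies and delivers the second identity.
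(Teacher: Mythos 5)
Your proof is correct and follows essentially the same route as the paper's: assume a common ending state $p$, use determinism to factor $\D$'s values on the continuations $w$ and $z$, invoke equivalence on $uw$, $u'w$, $uz$, $u'z$, and conclude $\Gap(q,u)=\Gap(q,u')$, contradicting hypothesis~(ii). Your split into the two identities (whose sum telescopes to the gap) is merely a slightly different bookkeeping of the same computation the paper performs via the differences $\A(uw)-\A(uz)$ and $\D(uw)-\D(uz)$.
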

\begin{proof}
Consider the above setting. Then, we have that $\A(uw) - \A(uz) = \A(uw) - \A(u) = \Cost(q,u)+\frac{\A^q(w)}{\lambda^{|u|}} -  \A(u) = ( \Cost(q,u) -  \A(u) ) + \frac{\A^q(w)}{\lambda^{|u|}} = 
\frac{\Gap(q,u) + \A^q(w)}{\lambda^{|u|}}$ and analogously $\A(u'w) - \A(u'z) = \frac{\Gap(q,u') + \A^q(w)}{\lambda^{|u'|}}$. Thus, 
$$(I)~~\Gap(q,u) = \lambda^{|u|}[\A(uw) - \A(uz)]-\A^q(w); ~~~ \Gap(q,u') = \lambda^{|u'|}[\A(u'w) - \A(u'z)]-\A^q(w)$$

Now, assume, by contradiction, a single state $p$ of $\D$ in which the runs of $\D$ on both $u$ and $u'$ end. Then, we have that 
$$(II)~~\D(uw) - \D(uz) = \frac{\D^p(w)}{\lambda^{|u|}};~~~ \D(u'w) - \D(u'z) = \frac{\D^p(w)}{\lambda^{|u'|}}$$
Since $\A$ and $\D$ are equivalent, we may replace between $[\A(uw) - \A(uz)]$ and $[\D(uw) - \D(uz)]$ as well as between $[\A(u'w) - \A(u'z)]$ and $[\D(u'w) - \D(u'z)]$. Making the replacements in equations (I) above, we get:
$$(I\&II)~~\Gap(q,u) = \lambda^{|u|}\frac{\D^p(w)}{\lambda^{|u|}}-\A^q(w); ~~~ \Gap(q,u') = \lambda^{|u'|}\frac{\D^p(w)}{\lambda^{|u'|}}-\A^q(w)$$
Therefore, $\Gap(q,u)=\Gap(q,u')$, leading to a contradiction.
\end{proof}

We continue with the lower bound with respect to the weight difference, showing that the blow-up in the determinization depends on the weight value and not on the weight description.

\begin{figure}
% Exported from Xfig in a 50% ratio
\input{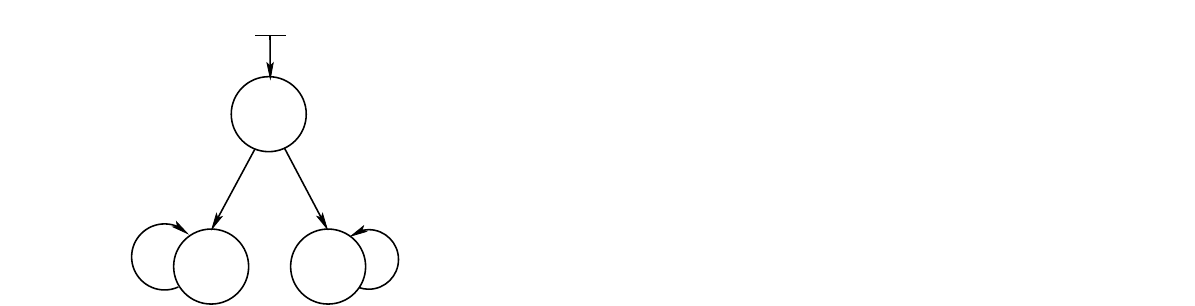_t} \caption{The family $\A_k$ of \NDAs with weights in $(-k,1)$, where for every $k$, a deterministic automaton equivalent to $\A_k$ must have at least $k-\lambda$ states.}\label{fig:WeightLowerBound}
\end{figure}

\begin{thm}\label{thm:WeightLowerBound}
For every integral discount factor $\lambda \geq 2$ and number $k>\lambda$ there is a $\lambda$-\NDA with three states, weights in $\{-k, -\lambda+1, -\lambda+2, \ldots, -1, 0, 1\}$ over an alphabet of size $\lambda +2$, such that every equivalent $\lambda$-\DDA must have at least $k-\lambda$ states.
\end{thm}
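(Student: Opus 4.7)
The plan is to invoke Lemma~\ref{lem:DifferentGaps} on $k-\lambda$ finite prefixes $u_0,\ldots,u_{k-\lambda-1}$ that yield pairwise distinct recoverable gap vectors in $\A_k$, forcing any equivalent $\lambda$-\DDA to reach $k-\lambda$ distinct states. Since $\A_k$ has a single initial nondeterministic branching between $q_1$ and $q_2$ (both self-looping), each reached configuration is recorded by the pair $(\Gap(q_1,u),\Gap(q_2,u))$, and two prefixes whose gap vectors differ in some coordinate can be fed into Lemma~\ref{lem:DifferentGaps} at that coordinate.

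First I will analyze the gaps of the two post-nondeterministic states. Reading a prefix $u$ of length $n$, the integrality of both $\lambda$ and all weights forces $\lambda^n\cdot\Cost(q_j,u)$ to be an integer for $j\in\{1,2\}$, so each reachable gap $\Gap(q_j,u)$ is a nonnegative integer multiple of $\lambda$ lying in the recoverability window $[0,2T)=[0,2(k+1))$ described in Section~\ref{sec:Construction}. The ``small'' letters $\{-\lambda+1,\ldots,0,1\}$ form a complete base-$\lambda$ digit set, so varying the prefix over these letters lets the normalized cost-difference step through an arithmetic progression of multiples of $\lambda$; the ``heavy'' letter $-k$ supplies a one-shot contribution far beyond the $[-\lambda+1,1]$ range of the small letters, and will play two distinct roles below. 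Combining the digit freedom, the binary choice of which branch is optimal (toggling the gap vector between the forms $(0,g)$ and $(g,0)$), and the heavy letter's ability to realize gap values inside $[0,2T)$ that would otherwise be unreachable, I will exhibit $k-\lambda$ prefixes whose gap vectors are pairwise distinct.

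The heavy letter is also what makes the gaps recoverable. Appending a sufficiently long block of $-k$'s to $u_i$ drives the currently optimal branch down by an amount comparable to $k$, flipping optimality and certifying recoverability for all $u_i$ simultaneously via one common suffix $w$. The indifference condition of Lemma~\ref{lem:DifferentGaps} is trivially discharged by $z=\emptyword$ in the finite-word setting, and by an infinite $0$-tail in the infinite-word setting (since a $0$-labelled transition contributes nothing to either branch, $\A(u_iz)=\A(u_i)$). By Lemma~\ref{lem:FiniteEquivalenceImpliesInfiniteEquivalence}, it suffices to verify these in the model most convenient for the gap computation, and the bound transfers to the other model.

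The main obstacle will be the explicit combinatorial design of the $u_i$: I have to pack $k-\lambda$ distinct gap vectors into a grid whose size per coordinate is only $\Theta(k/\lambda)$, so the construction must genuinely use \emph{both} coordinate choices and the flexibility of the heavy letter, not just single-coordinate gap variation. The heavy letter $-k$ supplies exactly the arithmetic slack needed to realize the required multiples of $\lambda$ and to compensate for residues introduced by the small letters. Once the $u_i$ are in hand, applying Lemma~\ref{lem:DifferentGaps} at the coordinate in which each pair disagrees, using the common recovering suffix $w$ and the indifference witness $z$ identified above, yields $k-\lambda$ pairwise distinct \DDA-states, establishing the claimed lower bound.
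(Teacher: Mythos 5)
Your high-level strategy matches the paper's: realize many pairwise-distinct recoverable gaps via prefixes built from the ``digit'' letters, recover them all with the common suffix $w=\letter{-k}^\omega$, use $z=\letter{0}^\omega$ for indifference, and invoke Lemma~\ref{lem:DifferentGaps}. The paper executes this in one coordinate only: it constructs, for each integer $\lambda<x\le k$, an explicit word $u_x$ (essentially a base-$\lambda$ representation of $x$) with $\Gap(q_2,u_x)=x$, and applies the lemma at $q=q_2$. Your proposal, by contrast, never produces the words: the entire combinatorial core --- exhibiting $k-\lambda$ prefixes with pairwise distinct gap configurations --- is announced (``I will exhibit\ldots'', ``Once the $u_i$ are in hand\ldots'') but not carried out. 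As it stands this is a plan, not a proof.

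More importantly, the route you sketch for closing the deficit cannot succeed. You correctly observe that every reachable gap is a nonnegative integer multiple of $\lambda$, so a single coordinate supplies only $\Theta(k/\lambda)$ values, and you propose to make up the difference by varying the pair $(\Gap(q_1,u),\Gap(q_2,u))$ over a two-dimensional grid. But the $q_1$-coordinate is degenerate: since every transition into $q_1$ has weight $0$, we have $\Cost(q_1,u)=0$ and $\A^{q_1}(w)=0$ for all $u$ and $w$, so a configuration $(g,0)$ with $g>0$ is recoverable only if some suffix satisfies $\A^{q_2}(w)\ge g$; the largest achievable value of $\A^{q_2}(w)$ is $\lambda/(\lambda-1)\le 2$, while any positive gap is a multiple of $\lambda$ and hence at least $\lambda$. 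Thus for $\lambda\ge 3$ no positive $q_1$-gap is recoverable at all, and the ``grid'' collapses to the single line of configurations $(0,g)$. The heavy letter $\letter{-k}$ cannot rescue this either: it is an integer, so it cannot produce gaps that are not multiples of $\lambda$. (Your multiples-of-$\lambda$ observation is in fact sharper than the paper's own computation, which asserts $\Gap(q_2,u_x)=x$ for every integer $x\in(\lambda,k]$; you have put your finger on a real difficulty, but the fix you propose does not resolve it, and for $\lambda\ge 3$ you would need either a different automaton or a different counting argument to reach the stated $k-\lambda$ bound rather than $\Theta(k/\lambda)$.)
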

\begin{proof}
For every $\lambda \geq 2$ and $k>\lambda$, we define the $\lambda$-\NDA $\A_k= \tuple{\Sigma, Q, q_{in}, \delta, \gamma, \lambda}$, as illustrated in Figure~\ref{fig:WeightLowerBound}, where
\begin{itemize}
\item $\Sigma= \{-k, -\lambda+1, -\lambda+2, \ldots, -1, 0, 1\}$
\item $Q=\{ q_{in}, q_1, q_2\}$
\item $\delta = \{ \tuple{q_{in},\sigma,q_1}, \tuple{q_{in},\sigma,q_2},\tuple{q_1,\sigma,q_1},\tuple{q_2,\sigma,q_2} \ST \sigma\in\Sigma  \}$
\item For every $\sigma\in\Sigma$ and $q\in Q$: $\gamma(\tuple{q,\sigma,q_1})=0$ and $\gamma(\tuple{q,\sigma,q_2})=\sigma$
\end{itemize}
Note that, for simplicity, we define the alphabet letters of $\Sigma$ as numbers, denoting the letter of a number $n$ by `$n$'. 

For every integer $\lambda < x \leq k$, we show, by induction on $x$, that there is a finite word $u_x$, such that $\Gap(q_2,u_x)=x$. Intuitively, $u_x$ is the representation of $x$ in base $\lambda$. Formally, for the base case, we have $x=\lambda$ and $u_\lambda=\letter{1}$. For the induction step, let $y=\uwhole{\frac{x}{\lambda}}$. Then, $u_x=u_y \con \letter{x - \lambda y}$.

Now, for every $\lambda < i < j \leq k$, we have that $u_i$ and $u_j$ satisfy the conditions of Lemma~\ref{lem:DifferentGaps}, by having $u=u_i$, $u'=u_j$, $z=\letter{0}^\omega$, and $w=\letter{-k}^\omega$. Hence, a \DDA equivalent to $\A_k$ has two different states corresponding to each two different integers in $\{\lambda,\ldots, k \}$, and we are done.
\end{proof}

\Paragraph{Dependency on the combination of states and weights}
The exponential dependency on the number of states ($n$) and on the weight description ($\log m$), as discussed above, provides a lower bound of $2^{\max(n,~\log m)}$. For showing that the construction depends on $2^{n \log m} = m^n$, we use a rich alphabet of size in $O(m^n)$. For an alphabet of size linear in $m$ and $n$, the exact unavoidable state blow-up is left as an open problem. 
A family of automata $\A_{k,l}$, with which we provide this lower bound, is illustrated in Figure~\ref{fig:LowerBound}. 
%Intuitively, the rich alphabet allows to set every gap in $\{0, 1, 2, \ldots, m+1 \}$ to each of the $n$ states. Two different gaps have, under the appropriate setting, two suffixes that distinguish between them. Hence, an equivalent deterministic automaton must have a unique state for each recoverable-gap, yielding at least $m^n$ states.

\begin{figure}
% Exported from Xfig in a 50% ratio
\input{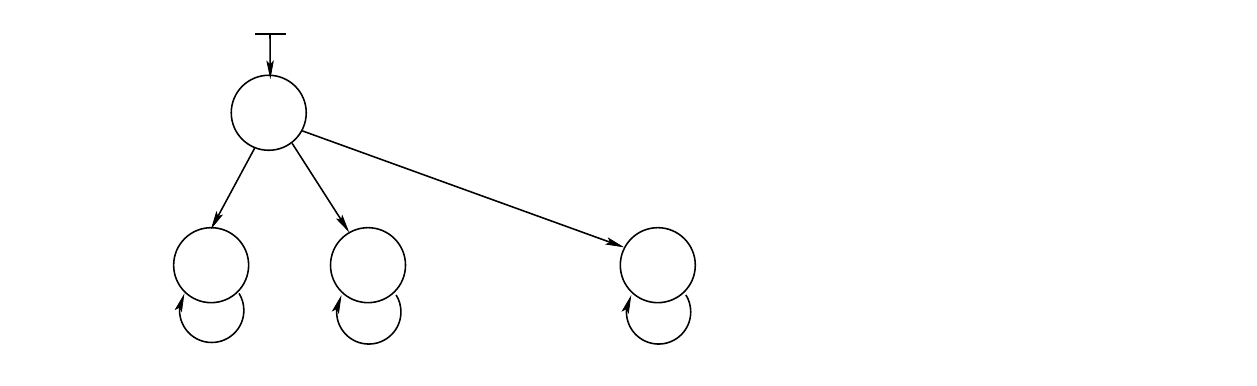_t} \caption{The family of integral \NDAs, where for every $k$ and $l$, a deterministic automaton equivalent to $\A_{k,l}$ must have at least $k^l$ states.}\label{fig:LowerBound}
\end{figure}

\begin{thm}\label{thm:Tightness}
For every $\lambda, k, l\in\Nat$, there is a $\lambda$-\NDA with $l+2$ states and weights in $\{ -\lambda k, -\lambda k+1,\ldots, -1, 0, 1\}$, such that every equivalent  \DDA has at least $k^l$ states.
\end{thm}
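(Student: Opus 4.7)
The plan is to exhibit $k^l$ finite words $u_{\vec x}$ over the alphabet $\Sigma$ of the automaton $\A_{k,l}$ (as depicted in Figure~\ref{fig:LowerBound}), indexed by tuples $\vec x=(x_1,\ldots,x_l)\in X^l$ for a set $X$ of size $k$, each inducing a distinct recoverable gap vector on the ``track'' states $(q_1,\ldots,q_l)$. Pairwise distinctness is then lifted to a lower bound on $|\D|$ by invoking Lemma~\ref{lem:DifferentGaps} for each pair. The construction lifts the single-branch, base-$\lambda$ gadget from the proof of Theorem~\ref{thm:WeightLowerBound} to the $l$ independent branches of $\A_{k,l}$ simultaneously.

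For each $j\in\{1,\ldots,l\}$ and each $x\in X:=\{0,\lambda,2\lambda,\ldots,(k-1)\lambda\}$, I would first build a finite one-track word $U^x_j$ using only the $j$-th coordinate (other coordinates fixed to $0$) with $\A(U^x_j)=0$ and $\Gap(q_j,U^x_j)=x$; this is the inductive base-$\lambda$ procedure of Theorem~\ref{thm:WeightLowerBound} applied on coordinate~$j$, and it lies comfortably inside the richer weight range $\{-\lambda k,\ldots,1\}$. Setting $L:=\max_{j,x}|U^x_j|$ and prepending zero-letters $\langle 0,\ldots,0\rangle$ to each $U^x_j$ synchronizes them to common length $L$ without altering their gaps, since a zero-letter contributes $0$ to every $\Cost(q_i,\cdot)$. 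Finally, $u_{\vec x}$ is defined as the length-$L$ word over $\Sigma$ whose $i$-th letter has, in coordinate~$j$, the $i$-th letter of the padded $U^{x_j}_j$. Because distinct coordinates of a tuple letter never affect the same state, the tracks are non-interacting: $\Cost(q_j,u_{\vec x})=x_j/\lambda^L$ for every $j\geq 1$ and $\Cost(q_0,u_{\vec x})=0$, whence $\A(u_{\vec x})=0$ and $\Gap(q_j,u_{\vec x})=x_j$, so different tuples yield different gap vectors.

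To separate two distinct tuples $\vec x\neq\vec x'$ in $\D$, I pick $j$ with $x_j\neq x'_j$ and apply Lemma~\ref{lem:DifferentGaps} with $q:=q_j$, $z:=\langle 0,\ldots,0\rangle^\omega$, and $w:=\tau_j^\omega$, where $\tau_j$ has $-\lambda k$ in coordinate~$j$ and $0$ elsewhere. Indifference $\A(uz)=\A(u)$ is immediate since $\A^{q_i}(z)=0$ for every state $q_i$. Recoverability is the core algebraic check: $\A^{q_j}(w)=-\lambda^2 k/(\lambda-1)$ while $\A^{q_i}(w)=0$ for $i\neq j$, so the $q_j$-run is optimal on both $uw$ and $u'w$ provided $x_j-x_i\leq\lambda^2 k/(\lambda-1)$ for every $i$, an inequality satisfied with substantial slack since $X\subseteq[0,(k-1)\lambda]$ and $(k-1)(\lambda-1)\leq\lambda k$. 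The lemma then forces $\D$ into distinct states after $u_{\vec x}$ and $u_{\vec x'}$, giving $|\D|\geq k^l$.

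The main obstacle is the per-track gadget of the first step: constructing, within the weight range $\{-\lambda k,\ldots,1\}$, a finite word $U^x_j$ of bounded length that realizes $\Gap(q_j,U^x_j)=x$ and $\A(U^x_j)=0$ for each of the $k$ prescribed gap values. This is a direct but careful adaptation of the inductive base-$\lambda$ representation used in Theorem~\ref{thm:WeightLowerBound}, and it has to be arranged so that the zero-padded parallel composition preserves both the individual gaps and the minimality of the $q_0$-run, on which both the indifference and the recoverability arguments above depend.
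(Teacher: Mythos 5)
Your proposal is correct and takes essentially the same route as the paper: the same automaton $\A_{k,l}$, a family of $k^l$ words realizing all combinations of $k$ distinct recoverable gaps on each of the $l$ branch states, and the same invocation of Lemma~\ref{lem:DifferentGaps} with $z=\langle 0,\ldots,0\rangle^\omega$ and $w$ carrying $-\lambda k$ in the distinguishing coordinate. Your explicit padding-and-parallel-composition construction merely fills in the detail the paper compresses into ``due to the rich alphabet,'' with the one small caveat that each single-track gadget should begin with an all-zero letter so that prepending padding does not shift a weighted letter from the (weight-$0$) $q_{in}$-transition onto a self-loop and thereby perturb the intended gap.
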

\begin{proof}
For every $\lambda, k, l\in\Nat$, we define the \NDA $\A_{k,l} = \tuple{\Sigma, Q, q_{in}, \delta, \gamma, \lambda}$, as illustrated in Figure~\ref{fig:LowerBound}, where:
\begin{itemize}
\item $\Sigma= \{ \tuple{v_1,\ldots v_l} \ST \mbox{ for every } 1\leq i \leq l, v_i \in \{ -\lambda k, -\lambda k+1,\ldots, -1, 0, 1\} \}$
\item $Q=\{ q_{in}, q_0, q_1, \ldots, q_l\}$
\item $\delta = \{ \tuple{q_{in},\sigma,q_i}, \tuple{q_i,\sigma,q_i}  \ST 0 \leq i \leq l \mbox{ and } \sigma\in\Sigma  \}$
\item For every $\sigma=\tuple{v_1,\ldots v_l}\in\Sigma$ and $1 \leq i \leq l$: $\gamma(\tuple{q_{in},\sigma,q_0})=0$, $\gamma(\tuple{q_{in},\sigma,q_i})=0$, $\gamma(\tuple{q_0,\sigma,q_0})=0$ and $\gamma(\tuple{q_i,\sigma,q_i})=v_i$
\end{itemize}
Note that, for simplicity, we define the alphabet letters of $\Sigma$ as tuples of numbers. 

Consider a \DDA $\D$ equivalent to $A_{k,l}$. We will show that there is a surjective mapping between $\D$'s states and the set of vectors $V =  \{ \tuple{g_1,\ldots,g_l} \ST \mbox{ for every } 1 \leq i \leq l, 1 \leq g_i \leq k \}$. 

We call an $l$-vector of gaps, $G = \tuple{g_1,\ldots,g_l}$, a \emph{combined-gap}, specifying the gaps of $q_1,\ldots,q_l$, respectively. Due to the rich alphabet, for every combined-gap $G \in V$, there is a finite word $u_G$, such that for every $1 \leq i \leq l$, $\Gap(q_i,u_G)=g_i$. 

Every two different combined gaps, $G$ and $G'$, are different in at least one dimension $j$ of their $l$-vectors. Thus, $A_{k,l}$ satisfies the conditions of Lemma~\ref{lem:DifferentGaps}, by having $u=u_G$, $u'=u_{G'}$, $z=$`$\tuple{0,\ldots, 0}$'$^\omega$, and $w=$`$\tuple{0,\ldots 0, -\lambda k, 0, \ldots 0}$'$^\omega$, where the repeated letter in $w$ has $0$ in all dimensions except for the $j$'s dimension, in which it has $-\lambda k$. Hence, $A_{k,l}$ has two different states corresponding to each two different vectors in $V$, and we are done.
\end{proof}

\section{Nondeterminizability of Nonintegral Discounted-Sum Automata}\label{sec:NonDeterminizability}
The discount-factor $\lambda$ plays a key role in the question of whether a complete $\lambda$-\NDA is determinizable. In Section~\ref{sec:Determinizability}, we have shown that an integral factor guarantees the automaton's determinizabilty. In Subsection~\ref{sec:Complete} below, we show the converse for every nonintegral factor. 

In the whole paper, except for Subsection~\ref{sec:Incomplete} below, we only consider complete automata. In Subsection~\ref{sec:Incomplete}, we show that once allowing incomplete automata or, equivalently, adding infinite weights, there is a non-determinizable automaton for every discount-factor $\lambda$, including integral ones. 

\subsection{Complete Automata}\label{sec:Complete}
We show below that for every noninntegral discount factor $\lambda$, there is a complete $\lambda$-\NDA that cannot be determinized. The proof generalizes the approach taken in \cite{CDH10}, where the case of $1<\lambda<2$ was handled. It is shown for automata over infinite words, and by Lemma~\ref{lem:FiniteEquivalenceImpliesInfiniteEquivalence} it also applies to automata over finite words.

Intuitively, for a discount factor that is not a whole number, a nondeterministic automaton might have infinitely many recoverable-gaps, arbitrarily close to each other. Two different gaps have, under the appropriate setting, two suffixes that distinguish between them (Lemma~\ref{lem:DifferentGaps}). Hence, an equivalent deterministic automaton must have a unique state for each recoverable-gap, which is impossible for infinitely many gaps.

\begin{figure}
% Exported from Xfig in a 50% ratio
\input{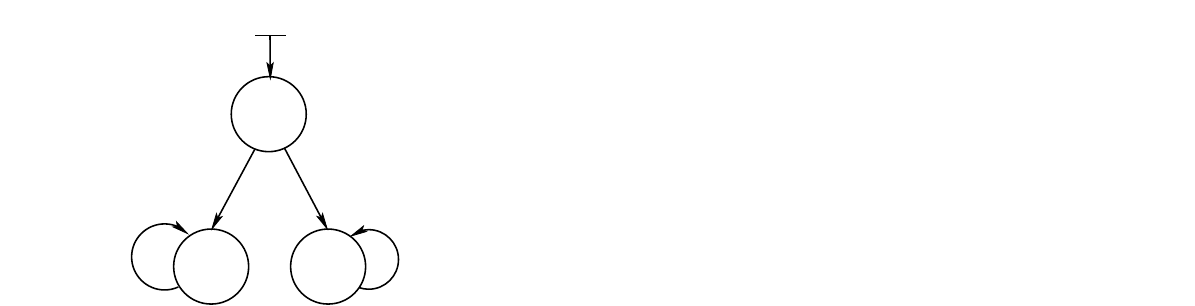_t} \caption{The non-determinizable $\frac{5}{2}$-\NDA $\A$.}\label{fig:NonDeterminizable}
\end{figure}

\begin{thm}\label{thm:NotNaturalNumber}
For every nonintegral discount factor $1<\lambda\in\Rat\setminus\Nat$, there is a complete $\lambda$-\NDA for which there is no equivalent \DDA (with any discount factor).
\end{thm}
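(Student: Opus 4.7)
The plan is to adapt the three-state nondeterministic construction of Figure~\ref{fig:NonDeterminizable} to an arbitrary $\lambda \in \Rat\setminus\Nat$. Write $\lambda = p/q$ in lowest terms with $q \geq 2$, and define the \NDA $\A$ over a finite integer alphabet $\Sigma$ containing $0$, a positive letter $b$ coprime to $q$ (for instance $b = 1$), a sufficiently negative letter $-M$, and a dense enough range of intermediate integers to sustain the construction below. The automaton has states $q_{in}, q_1, q_2$: from $q_{in}$ every letter transitions with weight $0$ either to $q_1$ or to $q_2$; $q_1$ self-loops with weight $0$ on every letter; and $q_2$ self-loops on letter $\letter{v}$ with weight $v$. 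For every finite word $u$ we then have $\A(u) = \min(0, \Cost(q_2, u))$, and whenever $\Cost(q_2, u) \geq 0$,
$$
\Gap(q_2, u) \;=\; \lambda^{|u|}\,\Cost(q_2, u) \;=\; \lambda\, P(\lambda),
$$
where $P(\lambda) = c_0 + c_1\lambda + \cdots + c_{|u|-2}\lambda^{|u|-2}$ has for coefficients the letters $u_{|u|}, u_{|u|-1}, \ldots, u_2$ (the letters of $u$ read in reverse, omitting its first letter).

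The core step is to exhibit, for each $n \geq 2$, a word $u^{(n)}$ of length $n$ with (i)~$\Gap(q_2, u^{(n)}) \in [0, B]$ for a fixed $B$, (ii)~the resulting gaps pairwise distinct across $n$, and (iii)~each gap recoverable. Distinctness comes from a $q$-adic argument: the gap can be rewritten as $\Gap(q_2, u^{(n)}) = N/q^{n-1}$ for an integer $N$, and reducing $N$ modulo $q$ shows $\gcd(N,q) = 1$ precisely when the second letter of $u^{(n)}$ is coprime to $q$. Fixing $u^{(n)}_2 = b$ therefore places these gaps into $q$-adically distinct fibres, forcing pairwise distinctness. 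Boundedness is arranged by choosing the remaining letters $u^{(n)}_3,\ldots,u^{(n)}_n$ greedily so as to cancel the dominant contribution $b\lambda^{n-2}$ level by level---an elementary $\lambda$-expansion argument that succeeds as soon as $\Sigma$ contains enough integer values of both signs around zero (e.g.\ every integer in $[-\lceil\lambda\rceil,\lceil\lambda\rceil]$). Recoverability then follows from choosing $M \geq B(\lambda-1)/\lambda$: since $\A^{q_2}(\letter{-M}^\omega) = -M\lambda/(\lambda-1) \leq -B$, one obtains $\Gap(q_2, u^{(n)}) + \A^{q_2}(\letter{-M}^\omega) \leq 0$, so the $q_2$-run on $u^{(n)}\cdot\letter{-M}^\omega$ attains the same value as the $q_1$-run and is therefore optimal.

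With infinitely many pairwise distinct recoverable gaps at the state $q_2$, Lemma~\ref{lem:DifferentGaps} closes the proof: for any $n \neq m$, take $u = u^{(n)}$, $u' = u^{(m)}$, common recovering suffix $w = \letter{-M}^\omega$, and common indifferent extension $z = \letter{0}^\omega$ (which satisfies $\A(u^{(n)}z) = \A(u^{(n)}) = 0$ since $\letter{0}$ carries weight $0$ on every transition and $\Cost(q_2, u^{(n)}) \geq 0$). Hypotheses (i)--(iv) of the lemma are met, so any equivalent \DDA would have to reach distinct states after reading $u^{(n)}$ and $u^{(m)}$, contradicting the finiteness of its state space. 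The principal obstacle lies in the bounded-yet-distinct construction: the nonintegrality of $\lambda$ must be witnessed by an ever-refining $q$-adic denominator (which degenerates exactly when $q = 1$, i.e.\ $\lambda\in\Nat$), while simultaneously the real magnitude of the gap must remain inside a fixed recoverability window---it is this tension between unbounded rational refinement and bounded magnitude that is leveraged essentially in the proof.
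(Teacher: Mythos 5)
Your proposal is correct and follows essentially the same route as the paper's proof: the same three-state gadget in which $q_2$ accumulates the letters as weights, a greedy choice of letters keeping the gap of $q_2$ inside a fixed recoverability window, a coprimality argument on the $q$-adic denominator (which degenerates exactly when $\lambda\in\Nat$) yielding infinitely many pairwise distinct recoverable gaps, and a concluding appeal to Lemma~\ref{lem:DifferentGaps}. The only differences are cosmetic: you build one word per length and argue distinctness via the reduced denominators $q^{n-1}$, whereas the paper takes the prefixes of a single greedily built infinite word and derives a divisibility contradiction from the assumption that two prefix gaps coincide.
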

\begin{proof}
For every $1<\lambda\in\Rat\setminus\Nat$, we define a complete $\lambda$-\NDA $\A = \tuple{\Sigma, Q, q_{in}, \delta, \gamma, \lambda}$ and show that $\A$ is not determinizable.
The automaton $\A$ operates over infinite words, and by Lemma~\ref{lem:FiniteEquivalenceImpliesInfiniteEquivalence} it also applies to automata operating over finite words.

Let $\lambda=\frac{h}{k}$, where $h$ and $k$ are mutually prime, and define:
\begin{itemize}
\item $\Sigma= \{ -jk  \ST  j\in\Nat \mbox{ and } jk < h\} \cup \{-h, k\}$
\item $Q=\{ q_{in}, q_1, q_2\}$
\item $\delta = \{ \tuple{q_{in},\sigma,q_1}, \tuple{q_{in},\sigma,q_2},\tuple{q_1,\sigma,q_1},\tuple{q_2,\sigma,q_2} \ST \sigma\in\Sigma  \}$
\item For every $\sigma\in\Sigma$ and $q\in Q$: $\gamma(\tuple{q,\sigma,q_1})=0$ and $\gamma(\tuple{q,\sigma,q_2})=\sigma$
\end{itemize}
Note that, for simplicity, we define the alphabet letters of $\Sigma$ as numbers, denoting the letter of a number $n$ by `$n$'. The \NDA $\A$ for $\lambda=\frac{5}{2}$ is illustrated in Figure~\ref{fig:NonDeterminizable}.

We show that $\A$ cannot be determinized by providing an infinite word $w$, such that $q_2$ has a unique recoverable gap for each of $w$'s prefixes. By Lemma~\ref{lem:DifferentGaps}, such a word $w$ implies that $\A$ cannot be determinized, as each of its prefixes can be continued with either `$0$'$^\omega$ or with a suffix that recovers $q_2$'s gap.

We inductively define $w$, denoting its prefix of length $i$ by $w_i$, as follows: the first letter is `$k$' and the $i+1$'s letter is `$-jk$', such that $0 \leq \Gap(q_2,w_i)\frac{h}{k} - jk \leq k$. Intuitively, each letter is chosen to almost compensate on the gap generated so far, by having the same value as the gap up to a difference of $k$. 

We show that $w$ has the required property, by proving the following three claims, each being a step toward proving the next claim.
\begin{enumerate}
\item The word $w$ is infinite and $q_2$ has a recoverable-gap for each of its prefixes.
\item There is no prefix of $w$ for which $q_2$'s gap is $0$.
\item There are no two different prefixes of $w$ for which $q_2$ has the same gap.
\end{enumerate}
Indeed:
\begin{enumerate}
\item Since $\gamma(\tuple{q_2, -h, q_2})=-h$, a gap $g$ of $q_2$ is obviously recoverable if $g\leq h$. We show by induction on the length of $w$'s prefixes that for every $i\geq 1$, we have that $\Gap(q_2, w_i)\leq h$. It obviously holds for the initial step, as $w_1=$`$k$' and $\Gap(q_2, w_1) = k\frac{h}{k}=h$. Assuming that it holds for the $i$'s prefix, we can choose the $i+1$'s letter to be some `$-jk$' $ \in \Sigma$, such that $0 \leq \Gap(q_2, w_i) - jk \leq k$. Hence, we get that 
\begin{equation}\label{eq:ChosenLetter}
\Gap(q_2, w_{i+1}) =  (\Gap(q_2, w_i) - jk)\frac{h}{k} \leq h.
\end{equation}

\item Assume, by contradiction, a prefix of $w$ of length $n$ whose recoverable-gap is $0$.
We have then, by Equation~\ref{eq:ChosenLetter}, that:
$$((( h -j_1 k)\frac{h}{k} -j_2 k)\frac{h}{k} \ldots -j_{n-1} k)\frac{h}{k} = 0$$
for some $j_1,\ldots,j_n\in\Nat$. Simplifying the equation, we get that 
$$ \frac{h^n - j_1 k h^{n-1} - j_2 k^2 h^{n-2} - \ldots - j_{n-1} k^n}{k^{n-1}} = 0$$
Therefore, $h^n =  j_1 k h^{n-1} + \ldots + j_{n-1} k^n$.
Now, since $k$ divides $j_1 k h^{n-1} + \ldots + j_{n-1} k^n$, it follows that $k$ divides $h^n$, which leads to a contradiction, as $h$ and $k$ are mutually prime.

\item Assume, by contradiction, that $q_2$ has the same gap $x$ for two prefixes, $n\geq 1$ steps apart. We have then, by Equation~\ref{eq:ChosenLetter}, that:
$$((((x-j_1 k)\frac{h}{k} -j_2 k)\frac{h}{k} -j_3 k)\frac{h}{k} \ldots -j_n k)\frac{h}{k} = x$$
for some $j_1,\ldots,j_n\in\Nat$. Simplifying the equation, we get that 
$$ \frac{x h^n - j_1 k h^n - j_2 k^2 h^{n-1} - \ldots - j_n k^nh}{k^n} = x$$
Thus, 
$$ x( h^n - k^n) = j_1  k h^n + j_2 k^2 h^{n-1} + \ldots + j_n k^nh$$
Hence, $x (h^n - k^n)$ is an integer, and since $x\neq 0$ and the right side of the above equation is divisible by $k$, so is $x (h^n - k^n)$.

Let us take a closer look at the gap $x$, assuming that it is generated, in its first occurrence, by a prefix of $w$ of some length $m$. Following Equation~\ref{eq:ChosenLetter}, $x=\frac{a}{k^{m-1}}$ for some integer $a$. We claim that $a$ is co-prime with $k$, and show it by induction on the length of $w$'s prefix with which the gap is associated. For the base case, the gap is $\frac{h}{k^0}$, and the claim holds, as $h$ and $k$ are co-prime. Assume that the claim holds for a prefix of length $m-1$ with a gap $\frac{a'}{k^{m-2}}$. Then, the next gap, following  Equation~\ref{eq:ChosenLetter}, is $\frac{a}{k^{m-1}} = (\frac{a'}{k^{m-2}} - jk)\frac{h}{k} = \frac{h(a' - j k^{m-1})}{k^{m-1}}$, implying that $a = h(a'-jk^{m-1})$. Since $h$ and $a'$ are co-prime with $k$, while $jk^{m-1}$ is divisible by $k$, it follows that $a$ is co-prime with $k$, and the induction proof is done.

Now, we have by the above that $k$ divides $x (h^n - k^n) = \frac{a}{k^{m-1}}(h^n - k^n)$, while $a$ is co-prime with $k$. Therefore, by Euclid's lemma, $k$ divides $h^n - k^n$. But, since $k$ divides $k^n$, it follows that $k$ also divides $h^n$, which leads to a contradiction.\qedhere
\end{enumerate}
\end{proof}

\subsection{Incomplete Automata}\label{sec:Incomplete}
Once considering incomplete automata or, equivalently, automata with $\infty$-weights, or automata where some of the states are accepting and some are not, no discount factor can guarantee determinization. The reason is that there is no threshold above which a gap becomes irrecoverable -- no matter how (finitely) bad some path is, it might eventually be essential, in the case that the other paths get stuck.  

Formally:
\begin{thm}
For every rational discount factor $\lambda$, there is an incomplete $\lambda$-\NDA for which there is no equivalent \DDA (with any discount factor).
\end{thm}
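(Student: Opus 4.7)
The plan is to mirror the strategy of Theorem~\ref{thm:NotNaturalNumber}: exhibit, for each rational $\lambda>1$, an incomplete $\lambda$-\NDA whose value function generates infinitely many pairwise distinct recoverable gaps at a single state, and then apply Lemma~\ref{lem:DifferentGaps} to conclude that no deterministic automaton can be equivalent to it. The conceptual point is precisely the one highlighted in the section introduction: once incompleteness (or equivalently $\infty$-weights) is permitted, the $2T$-bound on recoverable gaps used in the determinization construction ceases to apply, so even an integral $\lambda$ no longer rules out infinitely many recoverable gaps.

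For any $\lambda\in\Rat$ with $\lambda>1$ I would take $\A$ with states $\{q_{in},q_1,q_2\}$ over the alphabet $\Sigma=\{a,b\}$ and the following transitions: from $q_{in}$ on $a$, both the edge to $q_1$ with weight $0$ and the edge to $q_2$ with weight $1$; from $q_1$ on $a$, a self-loop with weight $0$, and \emph{no} outgoing $b$-transition (this is the only incompleteness); from $q_2$ on both $a$ and $b$, self-loops with weight $0$. A direct computation then gives $\A(a^n)=0$ (witnessed by the $q_1$-branch), $\Cost(q_2,a^n)=1$, and hence $\Gap(q_2,a^n)=\lambda^n$, yielding an infinite, strictly increasing sequence of gaps at the single state $q_2$. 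Each such gap is recoverable by $w=b^\omega$ (respectively $w=b$ in the finite-word case): since $q_1$ has no $b$-edge, the unique run of $\A$ on $a^n b^\omega$ is the one that enters $q_2$ immediately, whose value is $1=\Cost(q_2,a^n)+\A^{q_2}(b^\omega)/\lambda^n$.

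The non-determinizability argument now carries over verbatim from Theorem~\ref{thm:NotNaturalNumber}. For any $n\ne m$, instantiating Lemma~\ref{lem:DifferentGaps} with $u=a^n$, $u'=a^m$, $q=q_2$, common recovering suffix $w=b^\omega$, and indifferent suffix $z=a^\omega$ (the indifference condition $\A(uz)=\A(u)=0$ holds trivially for every $u\in a^*$) forces every equivalent \DDA to reach distinct states after reading $a^n$ and $a^m$. Since this is required for every pair of lengths, no finite-state \DDA can be equivalent to $\A$. The finite-word case is handled identically with $z=\emptyword$ and $w=b$.

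The main technical point I would double-check is that Lemma~\ref{lem:DifferentGaps} remains valid when the underlying \NDA is incomplete: its proof only manipulates the quantities $\A(uw)$, $\A(uz)$, $\Cost(q,u)$, and $\A^q(w)$, all of which are finite in the construction above, so no modification of the lemma or its proof is needed. I do not expect any serious obstacle here; the core of the argument is the construction itself, whose whole point is that incompleteness defeats the gap-threshold that made the determinization procedure of Section~\ref{sec:Construction} terminate in the first place.
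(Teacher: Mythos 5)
Your proposal is correct and follows essentially the same route as the paper: a three-state incomplete automaton in which $q_1$ lacks a $b$-transition, so that $q_2$ accumulates infinitely many pairwise distinct recoverable gaps over the words $a^n$, all recoverable by $b^\omega$, and Lemma~\ref{lem:DifferentGaps} (with $z=a^\omega$, $w=b^\omega$) then forces infinitely many states in any equivalent deterministic automaton. The only difference is cosmetic — the paper places the weight $1$ on $q_2$'s self-loops (giving gaps $\sum_{i=0}^{n}\lambda^i$) rather than on the entering edge (giving gaps $\lambda^n$) — and your explicit check that the lemma's proof survives incompleteness matches the paper's implicit use of it.
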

\begin{proof}
Consider the incomplete automaton $\B$ presented in Figure~\ref{fig:Incomplete} with a discount factor $\lambda\in\Rat$. 

For every $n\in\Nat$, we have that $\Gap(q_2,a^n)=\sum_{i=0}^n \lambda^i$. Since $q_1$ has no transition for the letter $b$, it follows that all these gaps are recoverable. Hence, for every $i,j\in\Nat$ such that $i\neq j$, we satisfy the conditions of Lemma~\ref{lem:DifferentGaps} with $u=a^i$, $u'=a^j$, $z=a^\omega$ and $w=b^\omega$ (for automata over finite words, $z=\emptyword$ and $w=b$). Therefore, an equivalent deterministic automaton must have infinitely many states, precluding its existence.
\end{proof}

\begin{figure}
% Exported from Xfig in a 50% ratio
\centering\input{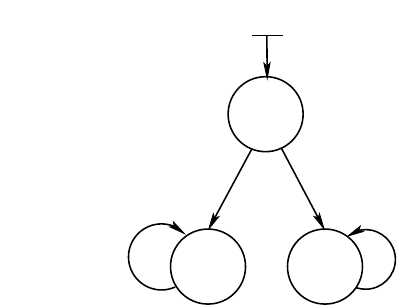_t} \caption{The incomplete automaton $\B$ is not determinizable with respect to any discount-factor.}\label{fig:Incomplete}
\end{figure}

\section{Approximate Determinization}\label{sec:ApproxDet}

As shown in Section~\ref{sec:NonDeterminizability}, nonintegral \NDAs cannot, in general, be determinized. Yet, by their discounting behavior, they can always be determinized approximately. That is, for every \NDA, there is a \DDA, such that the difference between their values, on all words, is as small as required. The naive construction of the deterministic automaton is achieved by unfolding the computations of the nondeterministic automaton up to a sufficient level. The size of the constructed automaton depends on the required precision and on the proximity of the discount-factor is to $1$. We represent the precision by $\varepsilon=2^{-p}$ and the discount factor by $\lambda=1+2^{-k}$, for positive integers $p$ and $k$. We analyze the unfolding construction to generate an automaton whose state space is exponential in $p$ and doubly exponential in $k$. We then provide an alternative construction, by generalizing the determinization procedure of Section~\ref{sec:Construction}. We show that our construction is singly exponential in $k$, in $p$, and in the number of states of the automaton. We conclude the section by proving matching lower bounds, showing exponential dependency on each of these three parameters.

We start with an interesting observation on discounting and half life time: for every integer $K \geq 2$, the half life time with respect to the discount factor $1+\frac{1}{K}$, meaning the number of time units before the discounting gets to $2$, is roughly $K$. More precisely, as $K$ tends to infinity, $(1+\frac{1}{K})^K$ is exactly $e \,(\approx 2.72)$. Note that we can take advantage of this property, as we represent the discount factor by $1+2^{-k}$, which equals to $1+\frac{1}{K}$, for $K = 2^k$. For our purposes, we show in Lemma~\ref{lem:HalfTime} below that $(1+\frac{1}{K})^K$ is always between $2$ and $3$, as well as a corresponding bound for $\log (1+\frac{1}{K})$.

\begin{lem}\label{lem:HalfTime}
For every integer $K\geq2$, we have:
\begin{enumerate}
\item $1 < K \log (1+ \frac{1}{K}) < \frac{3}{2}$.
\item $2 < (1+\frac{1}{K})^K < 3$.
\end{enumerate}
\end{lem}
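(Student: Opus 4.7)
I would prove part~(2) first and then derive part~(1) from it by taking base-$2$ logarithms; here $\log$ denotes $\log_2$, consistently with the paper's parametrization $\lambda=1+2^{-k}$ and $\varepsilon=2^{-p}$. The whole argument is a standard computation about the sequence defining $e$. The only subtle point is that the loose estimate $a_K:=(1+1/K)^K<3$ coming directly from~(2) would only yield $K\log(1+1/K)<\log_2 3\approx 1.585$, slightly overshooting the claimed $3/2$. Hence the upper bound of~(1) must be deduced from the sharper intermediate fact $a_K\le e$ (together with $e<2^{3/2}=2\sqrt{2}$), which the binomial proof of~(2) already produces en route.

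For part~(2), I would use the binomial expansion
\[
a_K \;=\; \sum_{i=0}^K \binom{K}{i}\frac{1}{K^{i}} \;=\; \sum_{i=0}^K \frac{1}{i!}\prod_{j=0}^{i-1}\Bigl(1-\tfrac{j}{K}\Bigr) \;\le\; \sum_{i=0}^K \frac{1}{i!} \;<\; \sum_{i=0}^\infty \frac{1}{i!} \;=\; e \;<\; 3,
\]
which gives the upper bound and, as a byproduct, the sharper bound $a_K\le e$ that will be reused. For the lower bound, the same term-wise comparison shows that $\{a_K\}$ is strictly increasing in $K$ (alternatively, one line of AM--GM on the $K{+}1$ numbers $1,\,1{+}1/K,\ldots,1{+}1/K$), so $a_K\ge a_2=9/4>2$ for every $K\ge 2$.

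For part~(1), observe that $K\log(1+1/K)=\log a_K$. The lower bound is immediate from~(2): $\log a_K>\log 2=1$. For the upper bound, $a_K\le e$ gives $\log a_K\le\log e$, and $\log e<3/2$ is equivalent to $e^2<8$. Using the Cauchy product $e^2=\sum_{n\ge 0} 2^n/n!$, the first five terms sum to $1+2+2+\tfrac{4}{3}+\tfrac{2}{3}=7$, while the tail $\sum_{n\ge 5}2^n/n!$ is dominated by a geometric series of first term $\tfrac{4}{15}$ and ratio $\le\tfrac{1}{3}$ (since $2/(n+1)\le 1/3$ for $n\ge 5$), so the tail is at most $\tfrac{2}{5}$. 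Thus $e^2<7.4<8$, giving $\log e<3/2$ as required.

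The main (minor) obstacle is precisely this last numerical check separating $\log e\approx 1.4427$ from $3/2$; everything else is standard material about the sequence $(1+1/K)^K$.
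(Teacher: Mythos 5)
Your proof is correct, but it takes a genuinely different route from the paper's. The paper proves part~(1) first, via the Mercator series $\ln(1+x)=x-\frac{x^2}{2}+\frac{x^3}{3}-\cdots$ at $x=\frac{1}{K}$, pairing consecutive terms to obtain $1-\frac{1}{2K}<K\ln(1+\frac{1}{K})<1$; part~(2) is then derived by writing $(1+\frac{1}{K})^K=2^{zK}$ with $z=\log(1+\frac{1}{K})$. You go in the opposite direction: part~(2) first, via the binomial expansion and the monotonicity of $a_K=(1+\frac{1}{K})^K$ (which gives the sharper sandwich $\frac{9}{4}\le a_K\le e$), and then part~(1) by taking $\log_2$ of $a_K$. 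Both arguments ultimately rest on the numerical fact $\log_2 e<\frac{3}{2}$: the paper leaves this, together with the companion check that $(1-\frac{1}{2K})\log_2 e>1$ for $K\ge 2$, implicit in the jump from the $\ln$-bounds to the stated $\log_2$-bounds, whereas you verify the upper one explicitly via $e^2=\sum_{n\ge 0}2^n/n!<7.4<8$ and sidestep the lower one entirely by using $a_K\ge \frac{9}{4}>2$ directly. Your version is more elementary (binomial theorem and AM--GM rather than the Taylor series of the logarithm) and more careful about the constants; the paper's version yields as a byproduct the tight asymptotics $K\ln(1+\frac{1}{K})\to 1$, which is not actually needed for the lemma.
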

\begin{proof}\
\begin{enumerate}
\item We use the Mercator series, which is the Taylor series for the natural logarithm, stating that for every $-1 < x \leq 1$, $\ln (1+x) = x - \frac{x^2}{2} + \frac{x^3}{3} - \frac{x^4}{4} + \ldots$. Setting $x=\frac{1}{K}$, we get that $\ln (1+\frac{1}{K}) = \frac{1}{K} - \frac{1}{2K^2} + \frac{1}{3K^3} - \frac{1}{4K^4} + \ldots$. Thus, $$K \ln (1+ \frac{1}{K}) =1 - \frac{1}{2K} + \frac{1}{3K^2} - \frac{1}{4K^3} + \frac{1}{5K^4} - \frac{1}{6K^5} + \ldots.$$

Since for every positive integer $i$, $(- \frac{1}{iK^{(i-1)}} + \frac{1}{(i+1)K^i}) < 0$, it follows that the above series is smaller than $1$. Analogously, since for every positive integer $i$, $\frac{1}{iK^{(i-1)}} - \frac{1}{(i+1)K^i} > 0$, it follows that the above series is bigger than $1- \frac{1}{2K}$. Hence, $1- \frac{1}{2K} < K \ln (1+ \frac{1}{K}) < 1 $. Therefore, as $K$ tends to infinity, $K \log (1+ \frac{1}{K})$ converges to $\log e$, where $e$ is Euler's constant. Specifically, for every $K\geq2$, we have $1 < K \log (1+ \frac{1}{K}) < \frac{3}{2}$.

\item Let $z=\log (1+ \frac{1}{K})$.  we have $(1+\frac{1}{K})^K = (1+\frac{1}{K})^\frac{zK}{z} = ((1+\frac{1}{K})^\frac{1}{z})^{zK} = 2^{zK}$.

From the first part of the lemma, we know that $\frac{1}{K} < z < \frac{3}{2K}$. Thus, $2 = 2^\frac{K}{K} < 2^{zK} < 2^\frac{3K}{2K} < 3$. Hence, $2 <  (1+\frac{1}{K})^K < 3$.\qedhere
\end{enumerate}
\end{proof}

\subsection{Approximate Automata}
We define that an automaton can be determinized approximately if for every real precision $\varepsilon>0$, there is a deterministic automaton such that the difference between their values on all words is less than or equal to $\varepsilon$. 
Formally,
\begin{defi}[Approximation]\label{def:Approx}\
\begin{itemize}
\item An \NDA $\A'$ \emph{$\varepsilon$-approximates} an \NDA $\A$, for a real constant $\varepsilon > 0$, if for every word $w$, $|\A(w) - \A'(w)| \leq \varepsilon$. 
\item  An \NDA $\A$ can be \emph{determinized approximately} if for every real constant $\varepsilon >0$ there is a \DDA  $\A' $ that $\varepsilon$-approximates $\A$.
\end{itemize}
\end{defi}

The relation between \NDAs on finite words and on infinite words, as stated in Lemma~\ref{lem:FiniteEquivalenceImpliesInfiniteEquivalence}, follows to approximated automata, meaning that approximation over finite words guarantees approximation over infinite words, but not vice versa. Intuitively, as the influence of word suffixes is decaying, the distance between two automata cannot change ``too much'' after long enough prefixes. Hence, if the automata are close enough for every finite prefix, so they are for an entire infinite word. As for the converse, the distance between the automata might gradually decrease, only converging at the infinity.

\begin{lem}\label{lem:FiniteApproxImpliesInfiniteApprox}
For every precision $\varepsilon>0$ and discount factor $\lambda>1$, if a $\lambda$-\NDA $\varepsilon$-approximates another $\lambda$-\NDA over finite words then it also $\varepsilon$-approximates it over infinite words. The converse need not hold. 
\end{lem}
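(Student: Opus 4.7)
My plan is to deduce the infinite-word approximation from the finite-word one by a triangle-inequality argument, using the fact that for a complete $\lambda$-\NDA, the value on an infinite word differs from the value on a long enough prefix by an arbitrarily small amount. Concretely, let $\A$ and $\A'$ be two $\lambda$-\NDAs that $\varepsilon$-approximate each other on finite words. Fix an infinite word $w$, and for each $n\in\Nat$ let $w_n$ be its $n$-letter prefix. I would like to establish, for any $\lambda$-\NDA $\B$ with maximal absolute weight $M$, the bound $|\B(w)-\B(w_n)|\leq \frac{M}{\lambda^{n-1}(\lambda-1)}$. Once this is in hand, the triangle inequality
\[
|\A(w)-\A'(w)|\;\leq\;|\A(w)-\A(w_n)|+|\A(w_n)-\A'(w_n)|+|\A'(w_n)-\A'(w)|
\]
gives $|\A(w)-\A'(w)|\leq \varepsilon + \text{(something tending to $0$ in $n$)}$, and letting $n\to\infty$ yields $|\A(w)-\A'(w)|\leq \varepsilon$, as required.

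The prefix-value bound is the technical core, and I would prove it in two directions. For the $\leq$ direction: given any infinite run $r$ of $\B$ on $w$ that is $\eta$-optimal (i.e.\ $\gamma(r)\leq \B(w)+\eta$), its length-$n$ prefix $r[n]$ is a run of $\B$ on $w_n$, and $\gamma(r)-\gamma(r[n])$ is a discounted tail starting at position $n$, hence bounded in absolute value by $M\sum_{i=n}^{\infty}\lambda^{-i}=\frac{M}{\lambda^{n-1}(\lambda-1)}$. So $\B(w_n)\leq \gamma(r[n])\leq \B(w)+\eta+\frac{M}{\lambda^{n-1}(\lambda-1)}$. For the $\geq$ direction: given an $\eta$-optimal finite run $r_n$ of $\B$ on $w_n$, completeness of $\B$ allows us to extend $r_n$ to an infinite run $r$ on $w$; the tail contributes at most $\frac{M}{\lambda^{n-1}(\lambda-1)}$ in absolute value, so $\B(w)\leq \gamma(r)\leq \B(w_n)+\eta+\frac{M}{\lambda^{n-1}(\lambda-1)}$. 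Letting $\eta\to 0$ on both sides gives the claim. I expect this two-sided estimate, which uses completeness in an essential way (to extend finite runs), to be the step that requires the most care.

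For the ``need not hold'' part, I would reuse the pair of automata $\A,\B$ from Figure~\ref{fig:CounterExample}, which are equivalent on infinite words but not on finite words. A direct computation (with $\lambda=2$) shows that on every infinite word both automata return the value $2$, so $\A$ trivially $\varepsilon$-approximates $\B$ on infinite words for every $\varepsilon\geq 0$, whereas on the single-letter word $\sigma$ the two automata return $1$ and $2$ respectively, so they do not $\varepsilon$-approximate each other over finite words for any $\varepsilon<1$. Picking any such $\varepsilon$ (for instance $\varepsilon=\tfrac{1}{2}$) completes the counterexample.
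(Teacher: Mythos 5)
Your proposal is correct and follows essentially the same route as the paper: both rest on the observation that the discounted tail beyond position $n$ contributes at most a geometrically decaying amount, so the finite-prefix approximation transfers to infinite words in the limit; the paper merely phrases this as a proof by contradiction rather than via your direct triangle-inequality decomposition, and it likewise points to Figure~\ref{fig:CounterExample} for the converse. Your more careful two-sided prefix-value bound (using $\eta$-optimal runs and completeness to extend finite runs) is a valid and slightly more rigorous rendering of the same estimate.
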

\begin{proof}
Assume, by contradiction, a precision $\varepsilon>0$, a discount factor $\lambda>1$, and two $\lambda$-\NDAs, $\A$ and $\B$, such that $\B$ $\varepsilon$-approximates $\A$ with respect to finite words but not with respect to infinite words.

Then there is an infinite word $w$, such that $|\A(w)-\B(w)| - \varepsilon = c > 0$. Let $m$ be the maximal difference between a weight in $\A$ and a weight in $\B$. Since for every $\lambda>1$, $\sum_{i=0}^\infty (\frac{1}{\lambda^i}) = \frac{1}{1-\frac{1}{\lambda}} = \frac{\lambda}{\lambda-1}$, it follows that the difference between the values that $\A$ and $\B$ assign to any (finite or infinite) word is smaller than or equal to $\frac{m\lambda}{\lambda-1}$. Hence, the difference between the values of their runs on suffixes of $w$, starting at a position $p$, is smaller than or equal to $\frac{m\lambda}{(\lambda-1)\lambda^p} $. 

Now, since $\B$ $\varepsilon$-approximates $\A$ over finite words, it follows that they have optimal runs over every prefix of $w$, such that their difference is smaller than or equal to $\varepsilon$. Thus, after a long enough prefix, of length $p$ such that $\frac{m\lambda}{(\lambda-1)\lambda^p}  < c$, the difference between the values of $\A$'s and $\B$'s optimal runs on $w$ must be smaller than $c$, leading to a contradiction. 

A counter example for the converse is provided in Figure~\ref{fig:CounterExample}.
\end{proof}

Following Lemma~\ref{lem:FiniteApproxImpliesInfiniteApprox}, it is enough to prove the correctness of the constructions with respect to finite words, and the lower bounds with respect to infinite words.

Approximate determinization is useful for automata comparison, which is essential in formal verification, as well as for game solving, which is essential in synthesis. We briefly explain below how one can take advantage of approximate determinization for these purposes.

\Paragraph{Approximate comparison of automata} Consider two nondeterministic automata $\A$ and $\B$. One can approximately solve, with respect to a precision $\varepsilon > 0$, the question of whether for all words $w$, $\A(w) \geq \B(w)$. Now, what do we mean by ``approximately solve''? 

One may think that it allows to solve the question of whether for all words $w$, $(\B(w) - \A(w)) \leq \varepsilon$. However, this is not the case, as solving $(\B(w) - \A(w)) \leq \varepsilon$ is as difficult as solving $\B(w) \leq \A(w)$: Given $\lambda$-\NDAs $\A$ and $\B$, and some constant $\varepsilon$, one may construct an automaton $\B'$, such that for all words $w$, $\B'(w) = \B(w) + \varepsilon$. This is done by adding a constant weight $c$ to all weights in $\B$, where $c= \varepsilon \frac{\lambda-1}{\lambda}$. (The infinite discounted sum of $c$, with the discount factor $\lambda$, yields $\varepsilon$.) Then, $\B(w) \leq \A(w)$ if and only if $(\B'(w) - \A(w)) \leq \varepsilon$.

By ``approximately solve'' we mean that we can reduce the uncertainty area to be arbitrarily small: Given $\lambda$-\NDAs $\A$ and $\B$, and an arbitrary constant $\varepsilon > 0$,  we provide a ``yes'' or ``no'' answer, such that ``no'' means that $\A \not \geq \B$, and ``yes'' means that for all words $w$, $(\B(w) - \A(w)) \leq \varepsilon$. Note the there is an uncertainty area, in the size of $\varepsilon$, in the case of a ``yes'', meaning that for all words $w$, either $\A(w) \geq \B(w)$, or $\A(w)$ is almost as big as $\B(w)$, lacking an $\varepsilon$. 

We approximately solve, with respect to a precision $\varepsilon > 0$, the question of whether for all words $w$, $\A(w) \geq \B(w)$, as follows.
\begin{itemize}
\item We generate deterministic automata $\A'$ and $\B'$ that $\frac{\varepsilon}{4}$-approximate $\A$ and $\B$, respectively.
\item We construct an automaton $\C$, such that or all words $w$, $\C(w) = \B'(w)-\A'(w)$. This is done by taking $\C$ to be the product automaton of $\B'$ and $\A'$, where the weight of each transition is the weight from $\B$ minus the weight from $\A$. Note that for the nondeterministic automata $\A$ and $\B$, we cannot generate an automaton equivalent to $\B - \A$. (See Section~\ref{sec:Closure}.)
\item We compute the value $m = \sup\limits_w~ \C(w)$. Since $\C$ is deterministic, it can be solved using linear programming techniques. (See, for example, \cite{Andersson06}.)
\item If $m > \frac{\varepsilon}{2}$, we answer ``no'', and otherwise we answer ``yes''.
\end{itemize}

In the case that we answer ``no'', we know that there is a word $w$, such that $\B'(w) - \A'(w) > \frac{\varepsilon}{2}$. Since $\B'$ $\frac{\varepsilon}{4}$-approximates $\B$ and $\A'$ $\frac{\varepsilon}{4}$-approximates $\A$, it follows that $\B(w) - \A(w) > \frac{\varepsilon}{2} -  \frac{\varepsilon}{4} - \frac{\varepsilon}{4} =0$. Hence, $\A \not \geq \B$.

In the case that we answer ``yes'', we know that for all words $w$,  $\B'(w) - \A'(w) \leq \frac{\varepsilon}{2}$. Since $\B'$ $\frac{\varepsilon}{4}$-approximates $\B$ and $\A'$ $\frac{\varepsilon}{4}$-approximates $\A$, it follows that for all words $w$, $\B(w) - \A(w) \leq \frac{\varepsilon}{2} +  \frac{\varepsilon}{4} + \frac{\varepsilon}{4} =\varepsilon$. Hence, for all words $w$, $(\B(w) - \A(w)) \leq \varepsilon$.

The equivalence and universality problems (asking whether for all words $w$, $\A(w) = \B(w)$ and $\A(w) \leq 0$, respectively) can be approximately solved similarly, up to any desired precision.

\Paragraph{Approximate game solving} Consider a two-player game $\G$ whose value (winning condition) is given by means of a nondeterministic automaton $\A$. That is, $\G$ is a finite directed graph with edge weights, whose states are partitioned into two disjoint sets $S_1$ and $S_2$, belonging to $player_1$ and $player_2$, respectively. There is a distinguished initial state $s_0$ from which the plays of the game start. A \emph{play} $\rho$ is an infinite path in the graph, such that $player_1$ chooses the next state from a state in $S_1$, and $player_2$ chooses the next state from a state in $S_2$. A \emph{trace} $w_\rho$ of a play $\rho$ is the infinite sequence of weights generated by $\rho$. The \emph{value of a play} $\rho$ is defined to be $\A(w_\rho)$. The \emph{value of the game} is the value of a play $\rho$, in which both players follow their optimal strategy. (For more details on two-player games with quantitative objectives, see, for example, \cite{Andersson06} and \cite{CDHR10}.) 

For solving the game, meaning finding its value, one usually determinizes $\A$ into an automaton $\D$, takes the product of $\G$ and $\D$, and finds optimal strategies for the game $\G'=\G \times \D$. Now, in the case that $\D$ is not equivalent to $\A$, but $\frac{\varepsilon}{2}$-approximates it, the value of $\G'$ is guaranteed to be up to $\varepsilon$-apart from the value of $\G$. When defining that $player_1$ wins the game if the game's value is above some threshold, we can approximately solve the decision problem of whether $player_1$ wins the game, up to any desired precision, analogously to solving the automata-comparison problem, as elaborated above.

\subsection{Approximation by Unfolding}\label{sec:Unfolding}
We formalize below the naive approach of unfolding the automaton computations up to a sufficient level.

\Paragraph{The construction} Given an \NDA $\A = \tuple{\Sigma, Q, q_{in}, \delta, \gamma, \lambda}$
and a parameter $l \in \Nat$, we construct a \DDA $\D$ that is the depth-$l$ unfolding of $\A$.
We later fix the value of $l$ to obtain a \DDA that approximates $\A$ with a desired precision $\varepsilon$.

The \DDA is $\D = \tuple{\Sigma, Q', q'_{in}, \delta', \gamma', \lambda}$
where:
\begin{itemize}
\item $Q' = \Sigma^l$;  the set of words of length $l$.
\item $q'_{in} =$ the empty word.
\item $\delta' = \{(w,\sigma,w\cdot \sigma) \ST \abs{w} \leq l-1 \land \sigma \in \Sigma\}
\cup \{(w,\sigma,w) \ST \abs{w} = l \land \sigma \in \Sigma\}$.
\item For all $w \in \Sigma^{\leq l-1}$, and $\sigma \in \Sigma$, 
let $\gamma'(w,\sigma,w\cdot \sigma) =  (\A(w\cdot \sigma) - \A(w)) /  \lambda^{\abs{w}}$; 
for all $w \in \Sigma^{l}$, and $\sigma \in \Sigma$, let $\gamma'(w,\sigma,w) = \frac{v+V}{2}$
where $v$ and $V$ are the smallest and largest weights in $\A$, respectively.
\end{itemize}

\noindent The construction above yields an automaton whose state space might be doubly exponential in the representation of the discount factor.

\begin{thm}\label{thm:Unfolding}
Consider a precision $\varepsilon=2^{-p}$ and an \NDA $\A$ with a discount factor $\lambda = 1+ 2^{-k}$ and maximal weight difference of $m$. Then applying the unfolding construction on $\A$, for a precision $\varepsilon$, generates a \DDA $\D$ that $\varepsilon$-approximates $\A$ with up to $2^{\Theta(2^k (k+p+\log m) )}$ states.
\end{thm}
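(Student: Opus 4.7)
The plan is to show that the depth-$l$ unfolding reproduces $\A$ exactly on words of length at most $l$ and approximates it within $\frac{m\lambda}{2(\lambda-1)\lambda^{l}}$ on longer words; we then pick the smallest $l$ that makes this tail at most $\varepsilon = 2^{-p}$, and count the resulting states. A straightforward induction on $|u|$ gives $\D(u)=\A(u)$ whenever $|u|\leq l$, since the weights at depths below $l$ are defined exactly as the telescoping increments that force the discounted sum along the unique run of $\D$ to collapse to $\A(u)$.

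For $|u|>l$, the prefix of length $l$ still contributes $\A(u_{\leq l})$ correctly, while from step $l$ onward $\D$ adds $(v+V)/2$ per step and $\A$'s completion adds some weight in $[v,V]$ per step. Since the midpoint $(v+V)/2$ minimises the worst-case deviation per step, a direct summation yields
\begin{equation*}
|\D(u)-\A(u)| \;\leq\; \frac{V-v}{2}\sum_{i=l}^{\infty}\frac{1}{\lambda^i} \;=\; \frac{m\lambda}{2(\lambda-1)\lambda^{l}}.
\end{equation*}
Lemma~\ref{lem:FiniteApproxImpliesInfiniteApprox} then lifts this bound from finite to infinite words for free.

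Requiring the right-hand side above to be at most $2^{-p}$ and substituting $\lambda-1=2^{-k}$ reduces to $l\geq (\log m+k+p+O(1))/\log\lambda$. Here Lemma~\ref{lem:HalfTime}(1) becomes the key ingredient: it gives $\log\lambda=\log(1+2^{-k})\in(2^{-k},\frac{3}{2}\cdot 2^{-k})$, whose reciprocal is $\Theta(2^k)$. Hence $l=\Theta(2^k(k+p+\log m))$ suffices, and the state count $|Q'|=\sum_{i=0}^l |\Sigma|^i=2^{\Theta(l)}$ (with $|\Sigma|$ absorbed as a constant factor in the $\Theta$) matches the claimed $2^{\Theta(2^k(k+p+\log m))}$. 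The only subtle step is this tight handling of $\log\lambda$: replacing it by $\Theta(1)$ would lose the $2^k$ factor entirely, so Lemma~\ref{lem:HalfTime} is precisely what drives the doubly-exponential dependence on the discount-factor parameter.
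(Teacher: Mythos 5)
Your proposal is correct and follows essentially the same route as the paper's proof: exactness of the unfolding on words of length at most $l$, the midpoint tail bound $\frac{V-v}{2}\sum_{i\geq l}\lambda^{-i}=\frac{m}{2\lambda^{l-1}(\lambda-1)}$, and Lemma~\ref{lem:HalfTime} to convert $1/\log(1+2^{-k})$ into the $\Theta(2^k)$ factor in the required depth. The only cosmetic difference is that you invoke Lemma~\ref{lem:FiniteApproxImpliesInfiniteApprox} to handle infinite words, whereas the paper bounds the tail for words in $\Sigma^{>l}\cup\Sigma^\omega$ directly; both are fine.
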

\begin{proof}
Let $l$ be the depth of $\A$'s unfolding that is used for generating $\D$. Then, for all words $w \in \Sigma^{\leq l}$, the automata $\A$ and $\D$
agree, by definition, on the value of $w$, that is $\A(w) = \D(w)$. 
For longer, or infinite, words $w \in \Sigma^{> l} \cup \Sigma^{\omega}$, we have:
$\D(w) = \A(w[0\dots l-1]) + \frac{v+V}{2}  \sum_{i=l}^{|w|} \frac{1}{\lambda^{i}}$.
As $\, v  \sum_{i=l}^{|w|} \frac{1}{\lambda^{i}} \leq \A(w) - \A(w[0\dots l-1]) \leq V  \sum_{i=l}^{|w|} \frac{1}{\lambda^{i}} \,$, we obtain the following:
$$ \abs{\A(w) - \D(w)} \leq \frac{V-v}{2}  \sum_{i=l}^{|w|} \frac{1}{\lambda^{i}} \leq \frac{V-v}{2}  \sum_{i=l}^{\infty} \frac{1}{\lambda^{i}} = \frac{V-v}{2}  \frac{1}{\lambda^l}\sum_{i=0}^{\infty} \frac{1}{\lambda^{i}} = \frac{m}{2  \lambda^{l-1}  (\lambda-1)}~,$$
\noindent where $m = V-v$ is the largest weight difference in $\A$.

Note that the above inequality is tight, in the sense that there is an automaton $\A$ and (an infinite) word $w$, such that $\abs{\D(w)-\A(w)} = \frac{m}{2  \lambda^{l-1}  (\lambda-1)}$.

In order to compute the minimal unfolding depth $l$ that guarantees a precision $\varepsilon = 2^{-p}$ when determinizing an automaton with a discount factor $\lambda = 1+ 2^{-k}$, we should solve the following inequality $\frac{m}{2  \lambda^{l-1}  (\lambda-1)}  = \frac{m 2^{k-1}}{\lambda^{l-1}} =  \frac{m 2^{k-1}}{(1+2^{-k})^{l-1}}\leq 2^{-p}$. 

Hence, $m 2^{k+p-1} \leq (1+2^{-k})^{l-1}$. Therefore, $(l-1)\log (1+2^{-k}) \geq k+p+\log (m) -1$, yielding that $l \geq \frac{k+p+\log (m) -1}{\log (1+2^{-k})}  +1$.

By Lemma~\ref{lem:HalfTime}, we have $\log (1+2^{-k})$ is linear in $2^{-k}$. Hence, $l \geq \Theta(2^k (k+ p+\log m) )$.

The unfolding construction of $\D$ generates up to $\Sigma^l$ states, implying that the deterministic automaton has up to $2^{\Theta(2^k (k+ p+\log m) )}$ states. 
\end{proof}

\subsection{Approximation by Gap Rounding}\label{sec:Approx}
As the unfolding approach, analyzed in Section~\ref{sec:Unfolding}, is doubly exponential in the discount factor, one may look for an alternative approach that is singly exponential in the discount factor, in the precision, and in the number of states of the original automaton. Indeed, we provide below such an approximation scheme, by generalizing the determinization procedure of Section~\ref{sec:Construction}. 

The main idea in Section~\ref{sec:Construction} is to extend the subset construction by keeping a recoverable-gap value to each element of the subset. Yet, it is shown in Section~\ref{sec:NonDeterminizability} that for every non-integral rational factor the construction might not terminate. The problem with non-integral factors is that the recoverable-gaps might be arbitrarily close to each other, implying infinitely many gaps within the maximal bound of recoverable gaps. 

Our approximation scheme generalizes the determinization procedure of Section~\ref{sec:Construction} by rounding the stored gaps to a fixed resolution. Since there is a bound on the maximal value of a recoverable gap, the fixed resolution guarantees the procedure's termination. The question is, however, how an unbounded number of gap rounding allows for the required precision. The key observation is that the rounding is also discounted along the computation.  For a  $\lambda$-\NDA, where $\lambda=1+2^{-k}$,  and a precision $\varepsilon = 2^{-p}$, we show that a resolution of $2^{-(p+k-1)}$ is sufficient. For an \NDA whose maximal weight difference is $m$, the maximal recoverable gap is below $m2^{k+1}$.
Hence, for an \NDA with $n$ states, the resulting \DDA would have up to $2^{n(p+2k+\log m)}$ states. 

The construction is formalized below, and illustrated with an example in Figure~\ref{fig:ApproxDet}.

\Paragraph{The construction}
Consider a discount factor $\lambda=1+2^{-k}$, with $k>0$, and an \NDA $\A = \tuple{\Sigma, Q=\tuple{q_1,\ldots,q_n}, q_{in}, \delta, \gamma, \lambda}$, in which the maximal difference between the weights is $m$. For simplicity, we extend $\gamma$ with $\gamma(\tuple{q_i,\sigma,q_j})=\infty$ for every $\tuple{q_i,\sigma,q_j}\not\in\delta$. Note that our discounted-sum automata do not have infinite weights; it is only used as an internal element of the construction.

For a precision  $\varepsilon=2^{-p}$, with $p>0$, we construct a \DDA $\D = \tuple{\Sigma, Q', q'_{in}, \delta', \gamma', \lambda}$ that $\varepsilon$-approximates $\A$. 
We first define the set $G = \{ i 2^{-(p+k-1)} \ST i\in\Nat  \mbox{ and } i \leq m2^{p+2k} \} \cup \{\infty \}$ of recoverable-gaps. The $\infty$ element denotes a non-recoverable gap, and behaves as the standard infinity element in the arithmetic operations that we will be using. 

A state of $\D$ extends the standard subset construction by assigning a recoverable gap to each state of $\A$. That is, $Q'= \{ \tuple{g_1, \ldots, g_n} \ST  \mbox{ for every } 1 \leq h \leq n, g_h\in G \}$.

The initial state of $\D$ is $q'_{in} = \tuple{g_1,\ldots,g_n}$, where for every $1 \leq i \leq n$, $g_i=0$ if $q_i = q_{in}$ and $g_i=\infty$ otherwise.

For bounding the number of possible states, the gaps are rounded to a resolution of $2^{-(p+k-1)}$. Formally, for every number $x\geq 0$, we define $\Round(x) =  i 2^{-(p+k-1)}$, such that $i\in\Nat$ and for every $j\in\Nat, |x-i 2^{-(p+k-1)}| \leq |x-j 2^{-(p+k-1)}|$. 

For every state $q'=\tuple{g_1,\ldots,g_n}\in Q'$ and letter $\sigma\in\Sigma$, we define the transition function $\delta(q',\sigma)=q''=\tuple{x_1,\ldots,x_n}$, and the weight function $\gamma(\tuple{q',\sigma,q''})=c$ as follows.
\begin{itemize}
\item For every $1 \leq h \leq n$, we set $c_h := \min \{g_j + \gamma(\tuple{q_j, \sigma, q_h}) \ST 1 \leq j \leq n  \}$
\item $c: = \min\limits_{1 \leq h \leq n}(c_h)$
\item For every $1 \leq l \leq n$, $x_l := \Round(\lambda(c_h-c))$. If $x_l \geq m2^{k+1}$ then $x_l := \infty$.\qed
\end{itemize}

\begin{figure}[htb]
% Exported from Xfig in a 45% ratio
\centering\input{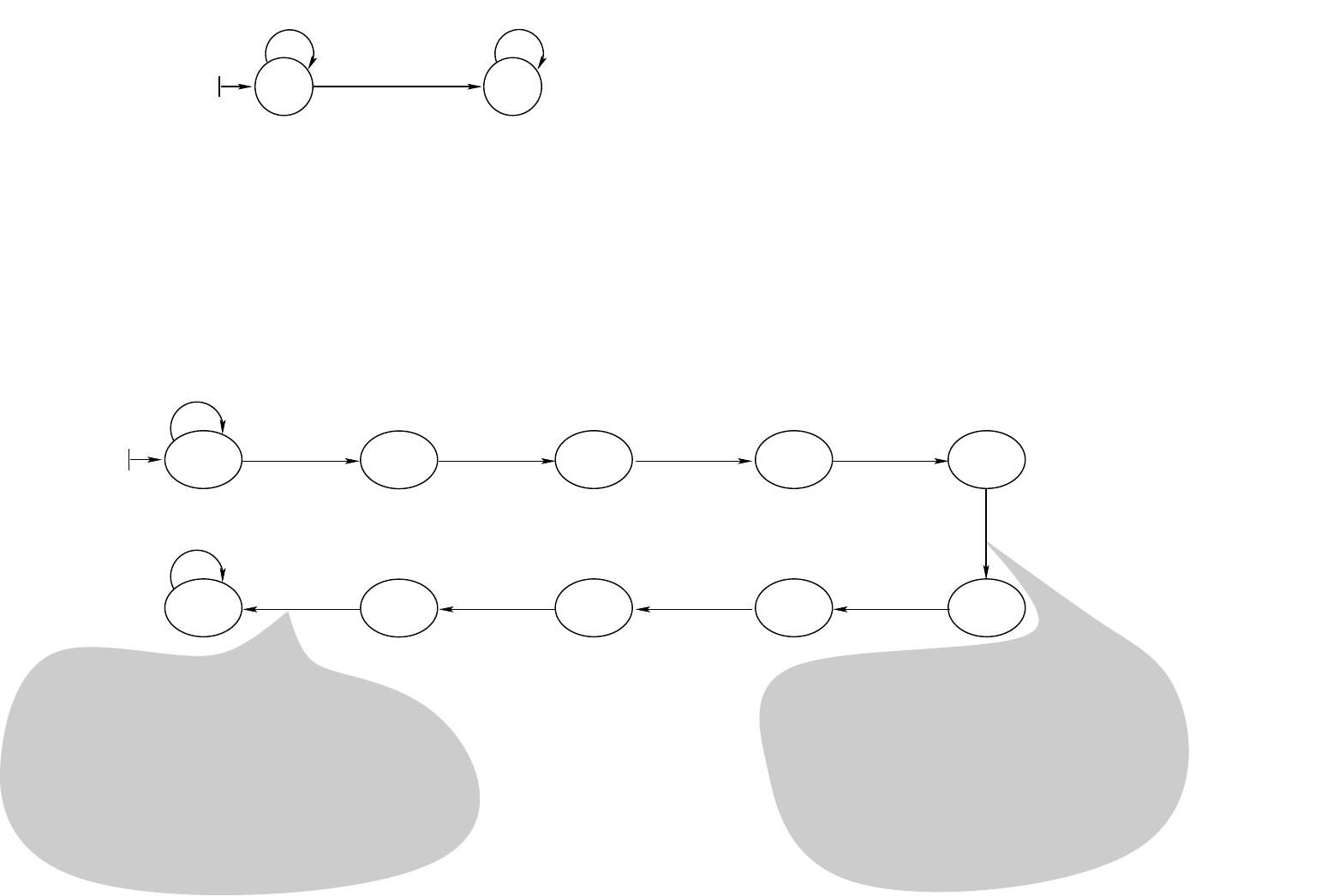_t} \caption{Determinizing the \NDA $\A$ approximately into the \DDA $\D$. The gray bubbles detail some of the intermediate calculations of the approximate-determinization construction.}\label{fig:ApproxDet}
\end{figure}

The correctness and the state complexity of the construction is formalized below.
\begin{thm}\label{thm:GapRounding}
Consider a discount factor $\lambda=1+2^{-k}$ and a precision $\varepsilon=2^{-p}$, for some positive numbers $p$ and $k$. Then for every $\lambda$-\NDA $\A$ with $n$ states and weight difference of up to $m$, there is a $\lambda$-\DDA that $\varepsilon$-approximates $\A$ with up to $2^{n(p+2k+\log m)}$ states. The automata $\A$ and $\D$ may operate over finite words as well as over infinite words.
\end{thm}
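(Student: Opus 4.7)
The state-count bound follows directly from the construction: a state of $\D$ is an $n$-tuple over $G$, with $|G| \leq m\cdot 2^{p+2k}+1$, so $|Q'|\leq |G|^n \leq 2^{n(p+2k+\log m)}$. For correctness, Lemma~\ref{lem:FiniteApproxImpliesInfiniteApprox} reduces the problem to finite words. The plan is to compare $\D$ with its hypothetical \emph{unrounded} variant $\D'$ obtained by deleting every call to $\Round$; a verbatim replay of the proof of Lemma~\ref{lem:DetCorrectness} gives $\D'(w)=\A(w)$, so only the rounding error needs to be bounded. I will show $|\D(w)-\A(w)|\leq\varepsilon$ in each direction via a separate telescoping argument.

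For the upper direction, fix an optimal run $\rho$ of $\A$ on $w$ of length $L$ and track $G_l := g_{\rho(l)}^{(l)}$, the gap $\D$ stores at step $l$ for the state that $\rho$ visits. Using $c_{\rho(l+1)}^{(l+1)}\leq G_l + \gamma(q_{\rho(l)},w[l],q_{\rho(l+1)})$ (set $j=\rho(l)$ in the defining min) together with $\Round(x)\leq x+r/2$, where $r/2=2^{-(p+k)}$, the construction yields
\[ G_{l+1} \leq \lambda\bigl(G_l + \gamma(q_{\rho(l)},w[l],q_{\rho(l+1)}) - c^{(l+1)}\bigr) + r/2. \]
Isolating $c^{(l+1)}/\lambda^l$ and telescoping over $l=0,\dots,L-1$ gives $D_L \leq G_0 - G_L/\lambda^L + V_L(\rho) + \sum_{j=1}^L r/(2\lambda^j)$. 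The geometric tail equals $r/(2(\lambda-1)) = 2^{-(p+k-1)}/(2\cdot 2^{-k}) = 2^{-p} = \varepsilon$ -- this is precisely why the resolution was chosen to be $r=2^{-(p+k-1)}$ -- and since $G_0=0$ while $G_L\geq 0$ whenever finite, we conclude $\D(w) = D_L \leq V_L(\rho) + \varepsilon = \A(w)+\varepsilon$.

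For the lower direction, I build a run $\rho^*$ of $\A$ by backward-tracing $\D$'s argmin choices from the state at step $L$ with stored gap $0$; the tracing is forced to terminate at $q_{in}$, the only state with finite gap at step $0$. Writing $g_{\rho^*(l)}^{(l)} = \lambda(c_{\rho^*(l)}^{(l)} - c^{(l)}) + \eta_l$ with $|\eta_l|\leq r/2$, and substituting $c_{\rho^*(l)}^{(l)} = g_{\rho^*(l-1)}^{(l-1)}+\gamma(q_{\rho^*(l-1)},w[l-1],q_{\rho^*(l)})$ (valid by the backward-tracing choice of $\rho^*(l-1)$), the discounted sum $D_L = \sum_l c^{(l)}/\lambda^{l-1}$ telescopes the $g_{\rho^*(l)}^{(l)}/\lambda^l$ terms (both endpoints vanish, since $g^{(0)}_{q_{in}} = 0$ and $g^{(L)}_{\rho^*(L)} = 0$ by choice) and collapses to $D_L = V(\rho^*) + \sum_{l=1}^L \eta_l/\lambda^l$. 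Hence $|D_L-V(\rho^*)|\leq r/(2(\lambda-1)) = \varepsilon$, and since $\A(w)\leq V(\rho^*)$ this yields $\D(w)\geq\A(w)-\varepsilon$.

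The main obstacle is to ensure the tracked gap $G_l$ in the upper-direction argument never reaches the truncation threshold $M = m\cdot 2^{k+1}$, which would turn it into $\infty$ and make the telescoping inequality vacuous. The true $\A$-gap of the state visited by an optimal run is bounded by $M$, analogously to Section~\ref{sec:Construction}, so the remaining task is to bound the distortion that rounding introduces into $G_l$ itself -- which, unlike its contribution to $D_l$, is not discounted when compared against the threshold. The threshold $M$ and the resolution $r$ appear to be tuned precisely for this bookkeeping to close, and carrying it out carefully will be the most delicate step of the proof.
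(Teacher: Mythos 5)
Your two telescoping arguments are, modulo packaging, exactly the paper's induction: the paper tracks a signed ``mistake'' $M(h,w)=\frac{g_{h,w}}{\lambda^{|w|}}+\sum_{i}\frac{c_i}{\lambda^{i-1}}-\Cost(q_h,w)$ for every $h$ and proves $|M(h,w)|\leq\sum_{i=1}^{|w|}\frac{2^{-(p+k)}}{\lambda^i}$ by induction on $|w|$, with the two sign cases handled by precisely your two choices of comparison state (the true argmin for one direction, the construction's argmin for the other), and the geometric tail $\frac{2^{-(p+k)}}{\lambda-1}=2^{-p}$ is the same calculation that justifies the choice of resolution. Your lower direction is complete as stated, since every gap on the backward-traced argmin sequence is finite by construction. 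The state count is also fine.

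The genuine gap is the step you defer to the end, and the route you hint at for closing it would fail. You propose to show that the stored gap $G_l$ along an optimal run of $\A$ stays below the threshold $m2^{k+1}$ by bounding its ``distortion'' relative to the true gap $\Gap(q_{\rho(l)},w[0\dots l-1])$. But each per-step rounding error of up to $2^{-(p+k)}$ is re-multiplied by $\lambda$ at every subsequent step, so after $l$ steps the stored gap can drift from its unrounded ideal by roughly $2^{-(p+k)}\sum_{j<l}\lambda^{j}\approx\varepsilon\lambda^{l}$, which is unbounded in $l$; the fixed slack between the threshold $m2^{k+1}$ and the maximal true recoverable gap $m(2^k+1)$ cannot absorb it, so no estimate of the form ``stored gap $\leq$ true gap $+$ constant'' is available. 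The paper closes this step with a different, self-contained observation that you are missing: if a stored gap ever reaches $g_i\geq m2^{k+1}$, then the update rule forces $g_{i+1}\geq\lambda(g_i-m-r)>g_i$ (using $\lambda-1=2^{-k}$ and that the per-letter contribution to a gap is at least $-m$ minus the rounding error), so from that point on the gap sequence is strictly increasing and can never return to $0$. Consequently any gap sequence that ends at the value $0$ --- in particular the backward-traced one --- never crossed the threshold and is unaffected by the truncation. To finish your proof you need this divergence argument (or an equivalent), applied to a $0$-terminated gap sequence rather than to the forward image of an optimal run of $\A$; the bookkeeping you anticipate being ``delicate'' is not merely delicate but structurally the wrong comparison.
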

\begin{proof}
For every such \NDA $\A$, we construct the \DDA $\D$ as defined in the construction above. 

We show the correctness for finite words, implying, by Lemma~\ref{lem:FiniteApproxImpliesInfiniteApprox}, also the correctness for infinite words. 
We start by proving the claim with respect to an \emph{infinite-state} automaton $\D'$ that is constructed as above, except for not changing any gap to $\infty$. That is, the state space of $\D'$ is $\{ \tuple{g_1,\ldots,g_n} \ST \mbox{ for every } 1 \leq h \leq n,~ g_h = i2^{-(p+k-1)} \mbox{ for some } i \in \Nat \}$. Afterwards, we shall argue that changing all gaps that exceed $m^{2k+1}$ to $\infty$ does not harm the correctness.

We use the following notations: upon reading a word $w$, the automaton $\D'$ yields the sequence $c_1, c_2, \ldots, c_{|w|}$ of weights, and reaches a state $\tuple{g_{1,w}, \ldots, g_{n,w}}$. We denote half the gap resolution, namely $2^{-(p+k)}$, by $r$.
Intuitively, $\frac{g_{h,w}}{\lambda^{|w|}} + \sum_{i=1}^{|w|}\frac{c_i}{\lambda^{i-1}}$ stands for the approximated cost of reaching the state $q_h$ upon reading the word $w$. We define for every word $w$ and  $1 \leq h \leq n$, the ``mistake'' in the approximated cost by
$M(h,w) = \frac{g_{h,w}}{\lambda^{|w|}} + \sum_{i=1}^{|w|}\frac{c_i}{\lambda^{i-1}} -\Cost(q_{h},w)$,
 and show by induction on the length of  $w$ that  $ | M(h,w)  | \leq \sum_{i=1}^{|w|}\frac{r}{\lambda^i}$.
 
The assumptions obviously hold for the initial step, where $w$ is the empty word and all values are $0$. 
As for the induction step, we assume they hold for $w$ and show that for every $\sigma\in\Sigma$, they hold for $w\con\sigma$.

We first handle the case that $M(x,w\con\sigma) \geq 0$. 

Recall that for every $1 \leq x \leq n$, the actual cost of reaching $q_x$ is $\Cost(q_x,w\con\sigma) = \Cost(q_{h},w) + \frac{\gamma(\tuple{q_{h}, \sigma, q_x})}{\lambda^{|w|}}$, for some $1 \leq h \leq n$. By the construction, we have $g_{x,w\con\sigma} \leq \lambda (g_{h',w} + \gamma(\tuple{q_{h'}, \sigma, q_x}) - c_{|w|+1}) + r$, for some $1 \leq h' \leq n$. (The state $h'$ is the ``choice'' of the construction for the ``best'' state to continue with for reaching the state $x$ upon reading the word $w\con\sigma$.) The states $h$ and $h'$ need not be the same, yet, by the construction, $g_{h',w} + \gamma(\tuple{q_{h'}, \sigma, q_x}) \leq g_{h,w} + \gamma(\tuple{q_{h}, \sigma, q_x})$. Thus, $g_{x,w\con\sigma} \leq \lambda (g_{h,w} + \gamma(\tuple{q_{h}, \sigma, q_x}) - c_{|w|+1}) + r$.

Therefore, 
\begin{eqnarray*}
M(x,w\con\sigma) & = &  \frac{g_{x,w\con\sigma}}{\lambda^{|w|+1}} + \sum_{i=1}^{|w|+1}\frac{c_i}{\lambda^{i-1}} - \Cost(q_x,w\con\sigma)   \leq \\
&\leq& \frac{\lambda (g_{h,w} + \gamma(\tuple{q_{h}, \sigma, q_x}) - c_{|w|+1})+r}{\lambda^{|w|+1}} + \sum_{i=1}^{|w|+1}\frac{c_i}{\lambda^{i-1}} -\\
&&
- (\Cost(q_{h},w) + \frac{\gamma(\tuple{q_{h}, \sigma, q_x})}{\lambda^{|w|}}) = \\
& = &  \frac{g_{h,w}}{\lambda^{|w|}} + \frac{r}{\lambda^{|w|+1}} + \sum_{i=1}^{|w|}\frac{c_i}{\lambda^{i-1}} - \Cost(q_{h},w)  = \\
& = &  M(h,w) + \frac{r}{\lambda^{|w|+1}} \leq \sum_{i=1}^{|w|}\frac{r}{\lambda^i} +  \frac{r}{\lambda^{|w|+1}} = \sum_{i=1}^{|w|+1}\frac{r}{\lambda^i}~.
\end{eqnarray*}

We continue with the second case, where $M(x,w\con\sigma) \leq 0$. By the construction, we have $g_{x,w\con\sigma} \geq \lambda (g_{h',w} + \gamma(\tuple{q_{h'}, \sigma, q_x}) - c_{|w|+1}) - r$, for some $1 \leq h' \leq n$. (As in the previous case, the state $h'$ stands for the ``choice'' of the construction for the ``best'' state to continue with for reaching the state $x$ upon reading the word $w\con\sigma$.)

Then,
\begin{eqnarray*}
M(x,w\con\sigma) & = &  \frac{g_{x,w\con\sigma}}{\lambda^{|w|+1}} + \sum_{i=1}^{|w|+1}\frac{c_i}{\lambda^{i-1}} - \Cost(q_x,w\con\sigma)   \geq \\
& \geq &  \frac{\lambda (g_{h',w} + \gamma(\tuple{q_{h'}, \sigma, q_x}) - c_{|w|+1})-r}{\lambda^{|w|+1}} + \sum_{i=1}^{|w|+1}\frac{c_i}{\lambda^{i-1}} - \\
&& - (\Cost(q_{h},w) + \frac{\gamma(\tuple{q_{h}, \sigma, q_x})}{\lambda^{|w|}})~ .
\end{eqnarray*}

Since $\Cost(q_x, w\con\sigma) = \Cost(q_{h},w) + \frac{\gamma(\tuple{q_{h}, \sigma, q_x})}{\lambda^{|w|}}$, it follows that $ \Cost(q_{h},w) + \frac{\gamma(\tuple{q_{h}, \sigma, q_x})}{\lambda^{|w|}} \leq \Cost(q_{h'},w) + \frac{\gamma(\tuple{q_{h'}, \sigma, q_x})}{\lambda^{|w|}}$. (The accurate best path to reach the state $x$ upon reading the word $w\con\sigma$ goes through the state $h$, and not through the state $h'$.) Hence,  

\begin{eqnarray*}
M(x,w\con\sigma) & \geq & \frac{\lambda (g_{h',w} + \gamma(\tuple{q_{h'}, \sigma, q_x}) - c_{|w|+1}) - r}{\lambda^{|w|+1}} + \sum_{i=1}^{|w|+1}\frac{c_i}{\lambda^{i-1}} - \\
&& - (\Cost(q_{h'},w) + \frac{\gamma(\tuple{q_{h'}, \sigma, q_x})}{\lambda^{|w|}}) = \\
& = &   \frac{g_{h',w}}{\lambda^{|w|}} - \frac{r}{\lambda^{|w|+1}} + \sum_{i=1}^{|w|}\frac{c_i}{\lambda^{i-1}} - \Cost(q_{h'},w)  = \\
& = &  M(h',w) - \frac{r}{\lambda^{|w|+1}} \geq -\sum_{i=1}^{|w|}\frac{r}{\lambda^i} -  \frac{r}{\lambda^{|w|+1}} = - \sum_{i=1}^{|w|+1}\frac{r}{\lambda^i}~.
\end{eqnarray*}

We can now show that $|\D'(w)-\A(w)|\leq \varepsilon$. Indeed, upon reading $w$, $\D'$ reaches some state $q'$ in which $g_{h,w}=0$, for some $1 \leq h \leq n$. Thus,  $|\sum_{i=1}^{|w|}\frac{c_i}{\lambda^{i-1}} -\Cost(q_{h},w) | \leq \varepsilon$. Assume that $\A(w)>\D'(w)$.  Then, since $\A(w) \leq \Cost(q_{h},w)$, it follows that $0 \leq \A(w)-\D'(w) \leq \Cost(q_{h},w) - \D'(w) \leq \varepsilon$.  Analogously, assume that $\D'(w)>\A(w)$. Then, since $\A(w) \leq \Cost(q_{h},w)$, it follows that $0 \leq \D'(w)-\A(w) \leq \D'(w) - \Cost(q_{h},w) \leq \varepsilon$. 

It is left to show that $|\D(w)-\A(w)|\leq \varepsilon$. The only difference between the construction of $\D$ and of $\D'$ is that the former changes all gaps above $m(2^{k+1})$ to $\infty$. Upon reading $w$, $\D'$ ends in some state $q'$ in which $g_{h,w}=0$, for some $1 \leq h \leq n$. By the construction of $\D'$, there is a sequence of gaps $g_1, \ldots, g_{|w|}=g_{h,w}=0$, such that for every $i$, $g_{i+1}  \geq \lambda (g_i + x) - r$, where $|x| \leq m$. We claim that $\D$ also contains this sequence of gaps. Indeed, assume, by contradiction, that $g_i \geq m(2^{k+1})$, for some $i<|w|$. Then, $g_{i+1}  \geq \lambda (g_i -m - r) =  (1+2^{-k}) (g_i -m - r) = g_i -m - r + (2^{-k}) (g_i -m - r) \geq g_i -m - r + (2^{-k}) ( m(2^{k+1}) -m - r) = g_i -m - r + 2m -(2^{-k}) ( m+r) > g_i$. Hence, the sequence of gaps is growing from position $i$ onwards, contradicting the assumption that the last gap in the sequence is $0$.
\end{proof}

\subsection{Lower Bounds}
The upper bound described in Section~\ref{sec:Approx}, for determinizing an \NDA $\A$ approximately, exponentially depends on three parameters: $n$, denoting the number of states in $\A$; $k$, representing the proximity of $\A$'s discount factor to $1$; and $p$, representing the precision. We show below that exponential dependency on these three factors is unavoidable.

For showing dependency on the precision ($p$), as well as on the discount factor ($k$), one can fix the other two parameters and show exponential dependency on the varying parameter. This is formalized in Theorems \ref{thm:LbPrecision} and \ref{thm:LbDiscountFactor}. 

As for the number of states ($n$), there is no absolute dependency -- one may approximate the non-deterministic automaton via the unfolding approach (Section~\ref{sec:Unfolding}), having no dependency on $n$. Yet, the trade off is a double-exponential dependency on $k$. Thus, one may check whether there is an exponential dependency on $n$, once $p$ and $k$ are fixed, and $n$ remains below $O(2^k)$. This is indeed the case, as shown in Theorem~\ref{thm:LbStateNumber}.  

Intuitively, if a state of an \NDA has two recoverable gaps that are different enough, over two words that are short enough, then recovering them by the same suffix would yield two values that are also different enough. Two such words must lead to two different states in a deterministic automaton that properly approximates the \NDA. Hence, the challenge is to figure out an \NDA whose states have as many such different recoverable gaps as possible.
Each of the three lower bounds brings a different challenge, depending on the parameter that is not fixed (precision, discount factor, and number of states). A delicate analysis of the recoverable gaps in the automata of Figures ~\ref{fig:LbPrecision}--\ref{fig:LbStateNumber}, provides the proofs of Theorems~\ref{thm:LbPrecision}--\ref{thm:LbStateNumber}, respectively.

For showing dependency on the precision, we start with a lemma, analyzing the recoverable gaps of the \NDA on which we will show the lower bound.
\begin{figure}
% Exported from Xfig in a 50% ratio
\centering\input{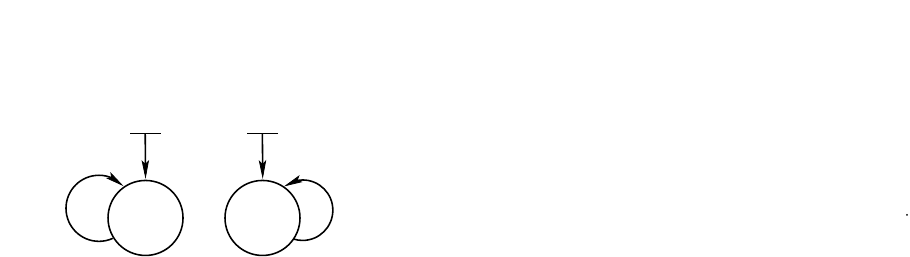_t} \caption{Every \DDA that $2^{-p}$-approximates the \NDA $\A$ has at least $2^{\lwhole {\frac{p-2}{log 3}}}$ states.} \label{fig:LbPrecision}
\end{figure}

\begin{lem}\label{lem:2OverlGaps}
Consider the \NDA described in Figure~\ref{fig:LbPrecision}. Then, for every $i,l\in\Nat$, where $i \leq 2^l$, there is a word $u_{l,i}\in\Sigma^l$ such that $q_2$ has the recoverable gap of $\frac{i}{2^l}$ over $u_{l,i}$.
\end{lem}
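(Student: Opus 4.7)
The plan is to prove the lemma by induction on $l \geq 1$, driven by a clean recurrence for how $\Gap(q_2, \cdot)$ evolves under appending a letter. First I would observe that the $q_1$-branch of $\A$ assigns weight $0$ to every transition, so the run ending in $q_1$ has cost $0$ on every word; hence whenever $\Cost(q_2, u) \geq 0$, one has $\A(u) = 0$ and $\Gap(q_2, u) = \lambda^{|u|}\Cost(q_2, u)$. A short computation then yields the key recurrence: if $u$ has length $l$ and non-negative gap $g$, then $\Gap(q_2, u \con \letter{v}) = \lambda g + \lambda v = \frac{3}{2}(g + v)$, as long as the new cost remains non-negative (which, as the induction maintains, will always be the case).

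For the base case $l = 1$, the letters $\letter{0}, \letter{\frac{1}{3}}, \letter{\frac{2}{3}}$ realize the gaps $0, \frac{1}{2}, 1$, covering $i \in \{0, 1, 2\}$. For the inductive step, given the words $u_{l, i'}$ for $i' \in \{0, \ldots, 2^l\}$ and a target $j \in \{0, 1, \ldots, 2^{l+1}\}$, I would use the recurrence to search for a predecessor index $i_0$ and a letter $\letter{k/3}$ with $k \in \{-3, -2, -1, 0, 1, 2\}$ satisfying $\frac{3}{2}(i_0/2^l + k/3) = j/2^{l+1}$, which rearranges to the Diophantine equation $j = 3 i_0 + k \cdot 2^l$. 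Since $\gcd(3, 2^l) = 1$, the Chinese remainder theorem supplies a unique $i_0 \in \{0, \ldots, 2^l - 1\}$ with $3 i_0 \equiv j \pmod{2^l}$, and then $k := (j - 3 i_0)/2^l$ is forced to be an integer. Setting $u_{l+1, j} := u_{l, i_0} \con \letter{k/3}$ closes the induction.

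The one delicate step, and the reason the alphabet $\Sigma$ is chosen exactly as it is, will be verifying that $k$ always lands in the admissible range. The bounds $j \in [0, 2^{l+1}]$ and $3 i_0 \in [0, 3(2^l - 1)]$ give $k \in [-3 + 3/2^l,\, 2]$, and integrality forces $k \in \{-2, -1, 0, 1, 2\} \subset \{-3, \ldots, 2\}$, so the corresponding letter indeed lies in $\Sigma$. Finally, the resulting gap $j/2^{l+1} \leq 1$ is recoverable in both models: since $\A^{q_2}(\letter{-1}^{\omega}) = -\lambda/(\lambda-1) = -3$, appending $\letter{-1}^{\omega}$ (or a single $\letter{-1}$ in the finite-word model, where already $\A^{q_2}(\letter{-1}) = -1 \leq -\Gap(q_2, u_{l+1, j})$) makes the $q_2$-continuation at least as cheap as the value-$0$ continuation via $q_1$, providing the required recoverability witness.
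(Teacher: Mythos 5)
Your proof is correct and follows essentially the same route as the paper's: induction on $l$ using the recurrence $\Gap(q_2,u\con\letter{v})=\frac{3}{2}\bigl(\Gap(q_2,u)+v\bigr)$, the same base case realizing gaps $0,\frac{1}{2},1$, and recoverability of all gaps $\leq 1$ via $\letter{-1}$. The only difference is organizational: where the paper selects the predecessor index and appended letter by an explicit case split on $j \bmod 3$ and $2^l \bmod 3$, you solve the congruence $3i_0\equiv j\pmod{2^l}$ uniformly (using $\gcd(3,2^l)=1$) and check that the forced $k=(j-3i_0)/2^l$ lands in $\{-2,\dots,2\}$ --- a cleaner packaging of the same inductive step.
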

\begin{proof}
We prove the claim by induction on $l$. For $l=1$, we have the words $u_{1,0}=\letter{0}, u_{1,1}=\letter{\frac{1}{3}}$, and $u_{1,2}=\letter{\frac{2}{3}}$, for which $q_2$'s gaps are $\frac{3}{2} \times 0 = \frac{0}{2}$, $\frac{3}{2} \times \frac{1}{3} = \frac{1}{2}$, and $\frac{3}{2} \times \frac{2}{3} = \frac{2}{2}$, respectively. For the induction step, consider a number $j\leq 2^{l+1}$, and let $r=j \mod 3$. 

When $r=0$, we have the word $u_{l+1,j} = u_{l,j/3}\lcon\letter{0}$, as $\Gap(q_2,u_{l+1,j})=\frac{3}{2}(\Gap(q_2,u_{l,j/3}) - 0) = \frac{3}{2}\frac{j}{3 \times 2^l} = \frac{j}{2^{l+1}}$. 

When $r\neq0$, we check if $(2^l \mod 3)$ is different from $r$. If it is, we have the word $u_{l+1,j} = u_{l,(j+2^l)/3}\lcon\letter{-\frac{1}{3}}$, as $\Gap(q_2,u_{l+1,j})=\frac{3}{2}(\frac{j+2^l}{3\times 2^l} - \frac{1}{3})= \frac{j}{2^{l+1}}$. Otherwise, $(2^l \mod 3 = r)$ and $(2^{l+1} \mod 3) \neq r$. Then, in case that $j\leq 2^l$, we have the word $u_{l+1,j} = u_{l,(j+2^{l+1})/3}\lcon\letter{-\frac{2}{3}}$, as $\Gap(q_2,u_{l+1,j})=\frac{3}{2}(\frac{j+2^{l+1}}{3 \times 2^l} - \frac{2}{3})= \frac{j}{2^{l+1}}$, and in the case that $2^l<j\leq 2^{l+1}$, we have the word $u_{l+1,j} = u_{l,(j-2^l)/3}\lcon\letter{\frac{1}{3}}$, as $\Gap(q_2,u_{l+1,j})=\frac{3}{2}(\frac{j-2^{l}}{3 \times 2^l} + \frac{1}{3})= \frac{j}{2^{l+1}}$.

All these gaps of $q_2$ do not exceed $1$, thus they can be recovered by adding $\letter{-1}$ at the end of the words.
\end{proof}

We continue with the lower bound with respect to the precision.

\begin{thm}\label{thm:LbPrecision}
There is an \NDA $\A$ with two states, such that for every $\varepsilon=2^{-p}>0$, every \DDA (with any discount factor) that $\varepsilon$-approximates $\A$ has at least $2^{\lwhole {\frac{p-2}{\log 3}}}$ states.
\end{thm}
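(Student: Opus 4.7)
The plan is to use Lemma~\ref{lem:2OverlGaps} to manufacture $2^l+1$ length-$l$ words whose $q_2$-gaps are pairwise distinct, and then show that any $\varepsilon$-approximating DDA must send them to pairwise distinct states, provided $l$ is small enough as a function of $p=-\log_2\varepsilon$. Concretely, I would fix $\varepsilon = 2^{-p}$ and choose $l = \lfloor (p-2)/\log 3 \rfloor$. For each $0 \leq i \leq 2^l$, Lemma~\ref{lem:2OverlGaps} furnishes a word $u_{l,i}\in\Sigma^l$ with $\Gap(q_2,u_{l,i}) = i/2^l \in [0,1]$. The infinite suffix $w = \letter{-1}^\omega$ recovers every such gap, because $\A^{q_2}(w) = -\lambda/(\lambda-1) = -3$ and hence the $q_2$-run over $u_{l,i}w$ has value at most $(1-3)/\lambda^l < 0 = \Cost(q_1, u_{l,i}w)$; meanwhile the suffix $z = \letter{0}^\omega$ preserves the optimum $\A(u_{l,i}z) = \A(u_{l,i}) = 0$. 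Thus the four hypotheses of Lemma~\ref{lem:DifferentGaps} are verified pairwise on the family $\{u_{l,i}\}$, with a common $w$ and $z$.

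The main obstacle is to upgrade Lemma~\ref{lem:DifferentGaps} from its exact-equivalence setting to the $\varepsilon$-approximation setting, turning the conclusion ``different states'' into a quantitative bound. Assume for contradiction that a DDA $\D$ $\varepsilon$-approximates $\A$ and that two distinct $u = u_{l,i}$, $u' = u_{l,j}$ drive the run of $\D$ into a common state $p$. Repeating the computation in the proof of Lemma~\ref{lem:DifferentGaps}, and using $|u|=|u'|=l$, I would obtain the exact identity
\[
\A(uw) - \A(uz) - \bigl[\A(u'w) - \A(u'z)\bigr] \;=\; \frac{\Gap(q_2,u) - \Gap(q_2,u')}{\lambda^l} \;=\; \frac{i-j}{3^l},
\]
while determinism of $\D$ at the shared state forces $\D(uw) - \D(uz) = \D(u'w) - \D(u'z) = \D^p(w)/\lambda^l$, so the analogous difference for $\D$ vanishes. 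Four applications of the approximation bound $|\A(\cdot)-\D(\cdot)| \leq \varepsilon$ then accumulate into $|i-j|/3^l \leq 4\varepsilon$, which is the quantitative ``different gaps'' statement I need.

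Finally, specializing to consecutive indices ($|i-j|=1$) the inequality becomes $3^l \leq 1/(4\varepsilon) = 2^{p-2}$, i.e.\ $l \geq (p-2)/\log 3$. Since $\log 3$ is irrational, $(p-2)/\log 3$ is strictly greater than $l = \lfloor(p-2)/\log 3\rfloor$ for every integer $p > 2$, a contradiction. Hence the $2^l+1$ words $u_{l,0}, \ldots, u_{l,2^l}$ lead to pairwise distinct states of $\D$, yielding at least $2^l+1 \geq 2^{\lfloor(p-2)/\log 3\rfloor}$ states. The argument is carried out over infinite words, but by Lemma~\ref{lem:FiniteApproxImpliesInfiniteApprox} any DDA that $\varepsilon$-approximates $\A$ over finite words also does so over infinite words, so the same state lower bound applies in both models.
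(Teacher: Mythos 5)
Your proof is correct and follows essentially the same route as the paper's: the same words $u_{l,i}$ from Lemma~\ref{lem:2OverlGaps}, the same suffixes $\letter{0}^\omega$ and $\letter{-1}^\omega$, and the same four applications of the $\varepsilon$-approximation bound to separate $\frac{1}{3^l}$ from $4\cdot 2^{-p}$, which the paper merely packages as two opposing bounds on $|\D(u)-\D(u')|$ rather than as a quantitative version of Lemma~\ref{lem:DifferentGaps}. The only blemish is the sign slip ``$3^l \leq 1/(4\varepsilon)$'', which should read $3^l \geq 1/(4\varepsilon)$; the conclusion $l \geq (p-2)/\log 3$ that you draw from it is the correct one.
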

\begin{proof}
Consider the \NDA $\A$ described in Figure~\ref{fig:LbPrecision}, and let $\D$ be a \DDA (with any discount factor) that $\varepsilon$-approximates $\A$, where $\varepsilon=2^{-p}$. We claim that $\D$ has at least $2^{\lwhole{\frac{p-2}{\log 3}}}$ states.

Intuitively, if $q_2$ has different recoverable gaps over two short enough words, then recovering them would yield two values that are more than $\varepsilon$ apart. Two such  words must lead to two different states in a deterministic automaton that $\varepsilon$-approximates $\A$.

Formally, let $l=\lwhole {\frac{p-2}{\log 3}}$. By Lemma~\ref{lem:2OverlGaps}, for every $i\leq 2^l$, there is a word $u_{l,i}$, such that $\Gap(q_2, u_{l,i}) = \frac{i}{2^l}$. We show below that for every $i,j\leq2^l$, such that $i\neq j$, $\D$ reaches two different states upon reading $u_{l,i}$ and $u_{l,j}$, which implies that $\D$ has at least $2^{(\lwhole{\frac{p-2}{\log 3}})}$ states. 

Assume, by contradiction, two different words, $u=u_{l,i}$ and $u'=u_{l,j}$, such that $\D$ reaches the same state $s$ upon reading them. Let $w=u\lcon\letter{0}^\omega$ and $w'=u'\lcon\letter{0}^\omega$. Since $\A(w)=\A(w')=0$ and $\D$ $2^{-p}$-approximates $\A$, it follows that $|\D(w)-\D(w')|\leq 2\times 2^{-p}$. By the determinism of $\D$, we have $\D(w) = \D(u) + \frac{\D^s(\letter{0}^\omega)}{\lambda^l}$ and $\D(w') = \D(u') + \frac{\D^s(\letter{0}^\omega)}{\lambda^l}$. Hence, $|\D(u)-\D(u')|\leq 2\times2^{-p}=2^{(-p+1)}$.

Now, consider the words $z=u\lcon {-1}^\omega$ and $z'=u'\lcon\letter{-1}^\omega$. Since both $\Gap(q_2, u)$ and $\Gap(q_2, u')$ are recoverable by concatenating $\letter{-1}^\omega$, it follows that the best run of $\A$ over both $z$ and $z'$ goes via $q_2$. Thus,  $|\A(z) - \A(z')| =\frac{|\Gap(q_2, u) - \Gap(q_2, u')|}{\lambda^l} \geq \frac{1}{2^l} \times \frac{2^l}{3^l}= \frac{1}{3^l} > 2^{-(p-2)}$. Since $\D$ $2^{-p}$-approximates $\A$, it follows that $|\D(z) - \D(z')| > 2^{-(p-2)} - 2\times 2^{-p} = 2^{(-p+1)}$. By the determinism of $\D$, we have $|\D(u) - \D(u')| >  2^{-(p-1)}$. 

We have shown that $|\D(u) - \D(u')|$ is both smaller and bigger than $2^{(-p+1)}$, contradicting the assumption that $\D$ reaches the same state upon reading two different words $u_{l,i}$ and $u_{l,j}$. Hence, $\D$ has at least $2^{(\lwhole{\frac{p-2}{\log 3}})}$ states.
\end{proof}

We now turn to show the lower bound with respect to the discount factor.
\begin{figure}
% Exported from Xfig in a 50% ratio
\centering\input{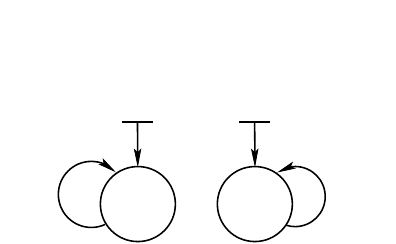_t} \caption{The family of \NDAs, such that every \DDA (with any discount factor) that $\frac{1}{8}$-approximates $\A_k$ has at least $2^{k-1}$ states.}\label{fig:LbDiscountFactor}
\end{figure}

\begin{thm}\label{thm:LbDiscountFactor}
There is a family of \NDAs, $\A_k$, for $k\geq 2$, with two states and discount factor $1+2^{-k}$ over an alphabet of two letters, such that every \DDA (with any discount factor) that $\frac{1}{8}$-approximates $\A_k$ has at least $2^{k-1}$ states.
\end{thm}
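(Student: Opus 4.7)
The plan is to adapt the two-suffix scheme of Theorem~\ref{thm:LbPrecision} to the new regime in which the precision $\varepsilon = \tfrac{1}{8}$ is now fixed but the discount factor $\lambda = 1+2^{-k}$ is allowed to approach $1$ as $k$ grows. I will exhibit a family $\{u_b\}_{b\in\{0,1\}^{k-1}}$ of $2^{k-1}$ short words over $\{\letter{-1},\letter{+1}\}$ on which $q_2$ has $2^{k-1}$ pairwise distinct recoverable gaps, and then argue that any \DDA $\D$ that $\tfrac{1}{8}$-approximates $\A_k$ must reach pairwise distinct states on them.

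The construction of the $u_b$ would proceed inductively in the manner of Lemma~\ref{lem:2OverlGaps}. Starting from the empty word (gap $0$), at step $i$ I would append either $\letter{-1}$ or $\letter{+1}$ using the recurrence $\Gap(q_2,u\sigma) = \lambda(\Gap(q_2,u) - \text{val}(\sigma))$, with the letter choice encoding bit $b_i$. Because $\lambda = 1+2^{-k}$ is exponentially close to $1$ and hence, by Lemma~\ref{lem:HalfTime}, $\lambda^{k-1} < 3$, the induction can be run for $k-1$ steps while every intermediate gap stays inside the recoverable window $[0,2m] = [0,4]$, producing $2^{k-1}$ pairwise distinct recoverable $q_2$-gaps.

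For the separation, I would again follow Theorem~\ref{thm:LbPrecision}. Pick a recovery suffix $w$ ending in a long tail of $\letter{-1}$'s (so that the recovery identity gives $|\A(u_bw)-\A(u_{b'}w)| = |\Gap(q_2,u_b)-\Gap(q_2,u_{b'})|/\lambda^{|u_b|}$) and a neutral suffix $z$ ending in $\letter{+1}$'s (on which the all-$q_1$ run attains the minimum $0$, making $\A(u_bz) = \A(u_{b'}z) = 0$). If $\D$ reached the same state on $u_b$ and $u_{b'}$, determinism would force $\D(u_bw) - \D(u_{b'}w)$ to equal $\D(u_bz) - \D(u_{b'}z)$, and the $\tfrac{1}{8}$-approximation hypothesis, applied via the triangle inequality to the two words on each suffix, would cap $|\A(u_bw)-\A(u_{b'}w)|$ by $4\varepsilon = \tfrac{1}{2}$: a contradiction whenever the two gaps differ by more than $\lambda^{|u_b|}/2$.

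The main obstacle is the quantitative calibration. Since the recoverable window has fixed length $2m = 4$ and the required pairwise gap separation after dividing by $\lambda^{|u_b|}$ is close to $\tfrac{3}{2}$, a single uniform recovery suffix cannot separate every pair among $2^{k-1}$ gaps in a bounded window for large $k$. I would therefore use \emph{pair-dependent} distinguishing suffixes: for each pair $(u_b,u_{b'})$ that first differs at a bit $i^*$, choose the length of the distinguishing tail so that the $\lambda$-factor applied to the difference effectively becomes $\lambda^{i^*}$ rather than $\lambda^{k-1}$, restoring the needed $\tfrac{1}{2}$-separation. This is the crux of the argument and it parallels the flexibility of Lemma~\ref{lem:DifferentGaps} in choosing both the recovery word $w$ and the neutral word $z$ per pair; together with the uniform bound $\lambda^{k-1} < 3$ from Lemma~\ref{lem:HalfTime}, it yields the claimed $2^{k-1}$ lower bound.
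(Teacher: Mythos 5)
Your overall two-suffix separation scheme (neutral suffix $\letter{0}^\omega$ versus recovery suffix $\letter{-1}^\omega$, determinism of $\D$, triangle inequality against the $\frac{1}{8}$-approximation) is exactly the paper's, but the word family you build does not work, and the ``pair-dependent distinguishing suffix'' fix you propose for it cannot repair it. With words $u_b$ of length $k-1$ over $\{\letter{-1},\letter{+1}\}$ and $\lambda=1+2^{-k}$, the costs of reaching $q_2$ are $\Cost(q_2,u_b)=\sum_{j}v_j\lambda^{-j}$ with $v_j\in\{-1,+1\}$; since $\lambda^{-j}=1-O(k2^{-k})$ for all $j\leq k-1$, two words with the same multiset of letters (e.g.\ $\letter{+1}\letter{-1}$ versus $\letter{-1}\letter{+1}$) have costs differing by only $O(2^{-k})$. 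No suffix can amplify this: for any $w$, each run on $u_bw$ and the corresponding run on $u_{b'}w$ differ only in the prefix contribution, so $|\A(u_bw)-\A(u_{b'}w)|\leq\max_q|\Cost(q,u_b)-\Cost(q,u_{b'})|=O(2^{-k})$, far below the $4\varepsilon=\frac{1}{2}$ threshold your separation argument requires. Choosing the tail length ``so that the $\lambda$-factor becomes $\lambda^{i^*}$'' changes nothing, because the prefix-cost difference is fixed once the prefixes are fixed; the discounting only shrinks the suffix's contribution. Hence a \DDA could legitimately merge many of your $2^{k-1}$ prefixes, and the lower bound does not follow. (Your recoverable window $[0,2m]=[0,4]$ is also wrong in this regime: for $\lambda=1+\frac{1}{K}$ the relevant bound is $m\lambda/(\lambda-1)\approx m2^k$, not $2m$; the $2T$ threshold of Section~\ref{sec:Construction} relies on $\lambda\geq 2$.)

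The missing idea is that the witness words must be \emph{exponentially long} in $k$, not of length $k-1$: by Lemma~\ref{lem:HalfTime} the ``half-life'' of the discounting is $\Theta(2^k)$ steps, so during the first $K/2=2^{k-1}$ steps (with $K=2^k$) the discounting is essentially inactive, and one can accumulate $2^{k-1}$ distinct costs, each pair separated by about $1$, all inside the (large) recoverable window. Concretely, the paper takes $u_i=\letter{1}^i\lcon\letter{0}^{K/2-i}$ for $1\leq i\leq K/2$, so that $\Cost(q_2,u_i)=\sum_{j=0}^{i-1}\lambda^{-j}$; consecutive costs differ by $\lambda^{-(i-1)}\geq(1+\frac{1}{K})^{-K/2}>1/\sqrt{3}>\frac{1}{2}$, while $\Gap(q_2,u_i)<(K/2)\sqrt{3}<K<K+1=\lambda/(\lambda-1)$ keeps every gap recoverable via $\letter{-1}^\omega$. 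With that family in place, your separation argument (which matches the paper's) goes through verbatim and yields $K/2=2^{k-1}$ pairwise distinct states.
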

\begin{proof}
For convenience, we set $K=2^k$. For every $k\geq 2$, consider the \NDA $\A_k$, with discount factor $\lambda=1+2^{-k}=1+\frac{1}{K}$, described in Figure~\ref{fig:LbDiscountFactor}, and let $\D$ be a \DDA (with any discount factor) that $\frac{1}{8}$-approximates $\A_k$. We claim that $\D$ has at least $2^{k-1}$ states.

Intuitively, if $q_2$ has two recoverable gaps that are different enough, over two words that are short enough, then recovering them would yield two values that are also different enough. Two such words must lead to two different states in a deterministic automaton that $\frac{1}{8}$-approximates $\A_k$.

Formally, for every $1 \leq i \leq \frac{K}{2}$, consider the words $u_i=\letter{1}^i\lcon\letter{0}^{(K/2)-i}$. We claim that $q_2$ has recoverable gaps over all these words, and that the difference between the costs of reaching $q_2$ over each two words is at least $\half$. Indeed,  $\Gap(q_2, u_i) = \sum_{j=1}^{i}\lambda^j=\sum_{j=1}^{i}(1+\frac{1}{K})^j$. By Lemma~\ref{lem:HalfTime}, we have $(1+\frac{1}{K})^{K} < 3$, implying that $(1+\frac{1}{K})^{K/2} < \sqrt{3}$. Thus, $\Gap(q_2, u_i)  < \sum_{j=1}^{K/2}(1+\frac{1}{K})^{K/2} < (K/2) \sqrt{3} < K$. The maximal recoverable gap of $q_2$ is $\sum_{i=0}^\infty (\frac{1}{\lambda^i}) = \frac{\lambda}{\lambda-1}  = \frac{1+\frac{1}{K}}{\frac{1}{K}}=K+1$, implying that $q_2$ has a recoverable gap over $u_i$. As for the costs of reaching $q_2$ over different words, consider the words $u_i$ and $u_j$, where $i \neq j$. We have $|\Cost(q_2,u_i)-\Cost(u_j)| \geq \frac{1}{(1+\frac{1}{K})^{K/2}} > \frac{1}{\sqrt{3}} > \frac{1}{2}$.

We continue with analyzing the runs of $\D$ on these words. Assume, by contradiction, two different words, $u_i$ and $u_j$, such that $\D$ reaches the same state $s$ upon reading them. Let $w=u_i\lcon\letter{0}^\omega$ and $w'=u_j\lcon\letter{0}^\omega$. Since $\A(w)=\A(w')=0$ and $\D$ $\frac{1}{8}$-approximates $\A$, it follows that $|\D(w)-\D(w')|\leq 2\times \frac{1}{8} = \frac{1}{4}$. By the determinism of $\D$, we have $\D(w) = \D(u_i) + \frac{\D^s(\letter{0}^\omega)}{\lambda^l}$ and $\D(w') = \D(u_j) + \frac{\D^s(\letter{0}^\omega)}{\lambda^l}$. Hence, $|\D(u_i)-\D(u_j)|\leq \frac{1}{4}$.

Now, consider the words $z=u_i\lcon \letter{-1}^\omega$ and $z'=u_j\lcon\letter{-1}^\omega$. Since both $\Gap(q_2, u_i)$ and $\Gap(q_2, u_j)$ are recoverable by concatenating $\letter{-1}^\omega$, it follows that the best run of $\A$ over both $z$ and $z'$ goes via $q_2$. Thus,  $|\A(z) - \A(z')| = |\Cost(q_2,u_i)-\Cost(q_2,u_j)|  > \frac{1}{2}$. Since $\D$ $\frac{1}{8}$-approximates $\A$, it follows that $|\D(z) - \D(z')| > \frac{1}{2} - 2\times \frac{1}{8} = \frac{1}{4}$. By the determinism of $\D$, we have $|\D(u) - \D(u')| >  \frac{1}{4}$. 

We have shown that $|\D(u_i) - \D(u_j)|$ is both smaller and bigger than $\frac{1}{4}$, contradicting the assumption that $\D$ reaches the same state upon reading two different words $u_i$ and $u_j$. Hence, $\D$ has at least $2^{K/2}=2^{p-1}$ states.
\end{proof}

We conclude with the lower bound with respect to the number of states, which also depends on the discount factor.

\begin{figure}
% Exported from Xfig in a 50% ratio
\centering\input{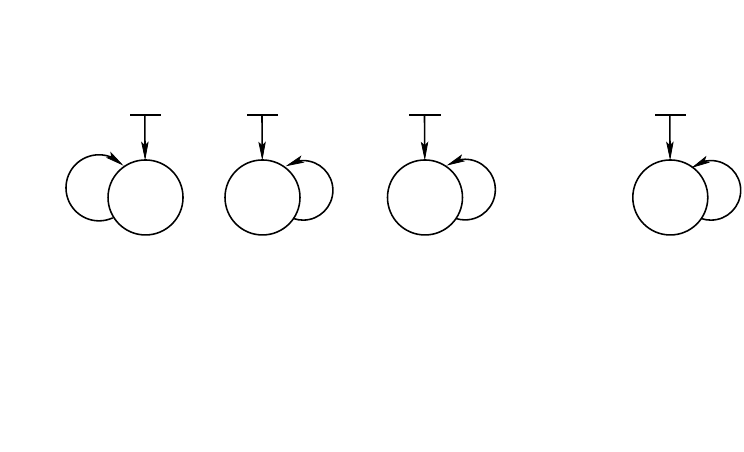_t} \caption{The family of \NDAs, such that every \DDA (with any discount factor) that $\frac{1}{12}$-approximates $\A_n$ has at least $2^n$ states.}\label{fig:LbStateNumber}
\end{figure}

\begin{thm}\label{thm:LbStateNumber}
For every $k\geq 3$, there is a family of \NDAs, $\A_n$, for $n\leq 2^k$, with $n+1$ states and discount factor $1+2^{-k}$ over an alphabet of size $2n+1$, such that every \DDA (with any discount factor) that $\frac{1}{12}$-approximates $\A_n$ has at least $2^{n}$ states.
\end{thm}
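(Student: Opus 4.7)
The plan is as follows. For each subset $S\subseteq\{1,\ldots,n\}$ I construct a ``signature'' word $u_S$ of a common length $\ell\leq n$, satisfying $\A_n(u_S)=0$ and a gap vector $(\Gap(q_1,u_S),\ldots,\Gap(q_n,u_S))$ that encodes $S$: the coordinate $i$ carries a gap of at least $1$ exactly when $i\in S$ and equals $0$ otherwise, and every non-zero gap is recoverable by the single infinite suffix $w_i=\letter{-1_i}^{\omega}$. The intended construction places the letter $\letter{1_j}$ at the position of $u_S$ associated with $j$ when $j\in S$ and the neutral letter $\letter{0}$ otherwise. The structural property that makes this work is \emph{branch independence}: each letter $\letter{1_j}$ (respectively $\letter{-1_j}$) contributes non-zero weight only on the self-loop of the single branch $q_j$ and is inert at every other state, so the $n$ branches carry $n$ independent bits of information at the same time.

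Now suppose for contradiction that a \DDA $\D$ that $\tfrac{1}{12}$-approximates $\A_n$ reaches a common state $s$ upon reading two different signature words $u_S$ and $u_{S'}$, and fix some $i\in S\setminus S'$. Extending both words by the idle suffix $z=\letter{0}^{\omega}$ preserves $\A_n$-value $0$; since $|u_S|=|u_{S'}|=\ell$, determinism of $\D$ at $s$ gives $\D(u_S\con z)-\D(u_{S'}\con z)=\D(u_S)-\D(u_{S'})$, and the two $\tfrac{1}{12}$-approximation inequalities then force $|\D(u_S)-\D(u_{S'})|\leq 2\varepsilon=\tfrac{1}{6}$. Extending instead by $w=\letter{-1_i}^{\omega}$: since $\letter{-1_i}$ has non-zero self-loop weight only at $q_i$, a run that commits to $q_i$ yields the strongly-negative suffix value $-\lambda/(\lambda-1)$, whereas every other run (including the idle branch $q_0$) stays non-negative on the suffix, so the optimal $\A_n$-run on both $u_S\con w$ and $u_{S'}\con w$ goes through $q_i$. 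Hence $\A_n(u_S\con w)-\A_n(u_{S'}\con w) \;=\; \Cost(q_i,u_S)-\Cost(q_i,u_{S'}) \;=\; (\Gap(q_i,u_S)-\Gap(q_i,u_{S'}))/\lambda^{\ell}$.

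By the signature property the numerator is at least $1$, while Lemma~\ref{lem:HalfTime} bounds $\lambda^{\ell}\leq\lambda^{2^k}=(1+2^{-k})^{2^k}<3$, so $|\A_n(u_S\con w)-\A_n(u_{S'}\con w)|>\tfrac{1}{3}=4\varepsilon$. Applying the $\tfrac{1}{12}$-approximation once more together with determinism of $\D$ at $s$ (which again yields $\D(u_S\con w)-\D(u_{S'}\con w)=\D(u_S)-\D(u_{S'})$), we obtain $|\D(u_S)-\D(u_{S'})|>4\varepsilon-2\varepsilon=2\varepsilon=\tfrac{1}{6}$, contradicting the bound produced by the $z$-argument. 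Therefore all $2^n$ signature words must reach pairwise distinct states of $\D$, giving the claimed lower bound. The delicate part of the proof is a tight numerical fit: the hypothesis $n\leq 2^k$ is exactly what lets Lemma~\ref{lem:HalfTime} bound $\lambda^{n}$ strictly below $3$, which in turn is exactly what keeps the recovered gap-difference strictly above the $4\varepsilon$ slack demanded by the precision $\tfrac{1}{12}$; the condition $k\geq 3$ is used to ensure enough headroom so that the inequalities are indeed strict.
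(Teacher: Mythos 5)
Your proposal is correct and follows essentially the same route as the paper's proof: the same family of $2^n$ signature words over $\{\letter{0},\letter{1_i}\}$ encoding subsets of branches, the same pair of distinguishing suffixes $\letter{0}^\omega$ and $\letter{-1_i}^\omega$, and the same numerical bookkeeping via Lemma~\ref{lem:HalfTime} showing $\lambda^{n}<3$ so that the cost difference at $q_i$ exceeds $\tfrac{1}{3}=4\varepsilon$ while the idle suffix forces it below $2\varepsilon$. The only cosmetic difference is that you phrase the separation in terms of gaps divided by $\lambda^{\ell}$ where the paper works directly with $\Cost(q_i,u)=1/\lambda^{i}$, which is the same computation.
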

\begin{proof}
For convenience, we set $K=2^k$. For every $k\geq 2$ and $n\leq2^k$, consider the \NDA $\A_n$, with discount factor $\lambda=1+2^{-k}=1+\frac{1}{K}$, described in Figure~\ref{fig:LbStateNumber}, and let $\D$ be a \DDA (with any discount factor) that $\frac{1}{12}$-approximates $\A_n$. We claim that $\D$ has at least $2^{n}$ states.

Intuitively, we describe $2^n$ words of length $n$, encoding all binary combinations of $n$ bits. The cost of reaching a state $q_i$ over a word $u$ is above $\frac{1}{3}$ if the $i$th bit in $u$ is $1$, and zero otherwise. These words are recoverable for all states, implying that a deterministic automaton that $\frac{1}{12}$-approximates $\A$ should reach a different state upon reading each of the $2^n$ words.

Formally, for every binary word $b\in \{0,1\}^n$ of length $n$, we define the word $u_b\in\Sigma^n$, by setting the $i$th letter of $u_b$ to $\letter{1_i}$ if the $i$th letter of $b$ is $1$, and to $\letter{0}$ otherwise. We denote this set of $\Sigma^n$-words by $U$. 

For every $1\leq i \leq n$ and $u\in U$, we have $\Cost(q_i,u)=0$ if $u[i]=\letter{0}$ and $\frac{1}{\lambda^i}$ otherwise. By Lemma~\ref{lem:HalfTime}, we have $(1+\frac{1}{K})^{K} < 3$. Since, $i\leq n\leq K$, it follows that $\Cost(q_i,u)>\frac{1}{3}$ if $u[i]\neq\letter{0}$. In addition, we have that for every $1\leq i \leq n$ and $u\in U$, the gap of $q_i$ over $u$ is recoverable by concatenating $\letter{-1_i}^\omega$, as $\Gap(q_i,u)\leq \lambda^n \leq (1+\frac{1}{K})^{K} < 3$ and the value of concatenating $\letter{-1_i}^\omega$ from the $n$th position is $\frac{1}{\lambda^n} \sum_{j=1}^\infty \lambda^j \geq \frac{1}{3} (K+1) \geq \frac{1}{3}(2^3+1) = 3$.

We continue with analyzing the runs of $\D$ on these words. Assume, by contradiction, two different words, $u,u'\in U$, such that $\D$ reaches the same state $s$ upon reading them. Let $w=u\lcon\letter{0}^\omega$ and $w'=u'\lcon\letter{0}^\omega$. Since $\A(w)=\A(w')=0$ and $\D$ $\frac{1}{12}$-approximates $\A$, it follows that $|\D(w)-\D(w')|\leq 2\times \frac{1}{12} = \frac{1}{6}$. By the determinism of $\D$, we have $\D(w) = \D(u) + \frac{\D^s(\letter{0}^\omega)}{\lambda^l}$ and $\D(w') = \D(u') + \frac{\D^s(\letter{0}^\omega)}{\lambda^l}$. Hence, $|\D(u_i)-\D(u_j)|\leq \frac{1}{6}$.

Now, since $u\neq u'$, there is some index $1\leq i \leq n$, such that $u[i]=\letter{0}$ and $u'[i]=\letter{1_i}$, or vice versa. 
Consider the words $z=u\lcon (-1_i)^\omega$ and $z'=u'\lcon\letter{-1_i}^\omega$. One can observe that for every $0\leq j \leq n$, such that $j\neq i$, we have $\Cost(q_j,u)\geq 0$, $\Cost(q_j,u')\geq 0$, and $\A^{q_j}(\letter{-1}^\omega) = 0$. On the other hand, $\A(z) < 0$ and $\A(z') < 0$, going via $q_i$. Thus,  $|\A(z) - \A(z')| = |\Cost(q_i,u)-\Cost(q_i,u')|  > \frac{1}{3}$. Since $\D$ $\frac{1}{12}$-approximates $\A$, it follows that $|\D(z) - \D(z')| > \frac{1}{3} - 2\times \frac{1}{12} = \frac{1}{6}$. By the determinism of $\D$, we have $|\D(u) - \D(u')| >  \frac{1}{6}$. 

We have shown that $|\D(u) - \D(u')|$ is both smaller and bigger than $\frac{1}{6}$, contradicting the assumption that $\D$ reaches the same state upon reading two different words $u$ and $u'$. Hence, $\D$ has at least $2^n$ states.
\end{proof}

\section{Closure Properties}\label{sec:Closure}
Discounted-sum automata realize a function from words to numbers.
Hence, one may wish to consider their closure under arithmetic operations.
The operations are either between two automata, having the same discount factor, as addition and taking the minimum, or between an automaton and a scalar, as multiplication by a positive rational number $c$.

Formally, given automata $\A$ and $\B$, and a scalar $0\leq c\in\Rat$, we define
\begin{itemize}
\item $\C = \min (\A, \B)$ if for every word $w$, $\C(w) = \min (\A(w), \B(w))$.
\item $\C = \max (\A, \B)$ if for every word $w$, $\C(w) = \max (\A(w), \B(w))$.
\item $\C =  \A + \B$ if for every word $w$, $\C(w) = \A(w) + \B(w)$.
\item $\C =  \A - \B$ if for every word $w$, $\C(w) = \A(w) - \B(w)$.
\item $\C =  \A \cdot c $ if for every word $w$, $\C(w) = c \cdot \A(w)$.
\item $\C =  \A \cdot (-1)$ if for every word $w$, $\C(w) = - \A(w)$.
\end{itemize}

We consider the class of complete \NDAs, as well as two of its subclasses: \DDAs and integral \NDAs.

The closure properties, summarized in Table~\ref{tbl:Closure}, turn out to be the same for automata over finite words and over infinite-words. By arguments similar to those of Lemma~\ref{lem:FiniteEquivalenceImpliesInfiniteEquivalence}'s proof, it is enough to prove the positive results with respect to automata over finite words and the negative results with respect to automata over infinite words.

\begin{table}
    \begin{center}
        \begin{tabular}{|c||c|c|c|c|c|c|}
        \hline
	Class $\diagdown$ Operation & $~\min~$ & $~\max~$ & $~~~+~~~$ & $~~~-~~~$ & $\cdot c, c \geq 0$ & $\cdot(-1)$\\
        \hline \hline
	\NDAs & \NiceCheckMark & \NiceCross & \NiceCheckMark & \NiceCross & \NiceCheckMark & \NiceCross\\
        \hline
	\DDAs & \multicolumn{2}{c|}{\NiceCross}  & \multicolumn{4}{c|}{\NiceCheckMark}\\
        \hline
	Integral \NDAs& \multicolumn{6}{c|}{\NiceCheckMark}\\
        \hline
        \end{tabular}
    \end{center}
    \caption{Closure of discounted-sum automata under arithmetic operations.} \label{tbl:Closure}
\end{table}

Some of the positive results are straightforward, as follows.
\begin{itemize}
\item Nondeterministic: The automaton that provides the minimum between two automata is achieved by taking the union of the input automata, addition by taking the product of the input automata and adding the corresponding weights, and multiplication by a positive scalar $c$ is achieved by multiplying all weights by $c$. 
\item Deterministic: Addition/subtraction is achieved by taking the product of the input automata and adding/subtracting the corresponding weights. Multiplication by (positive or negative) scalar $c$ is achieved by multiplying all weights by $c$. 
\item  Discount factor $\in\Nat$: Since these automata can always be determinized, they obviously enjoy the closure properties of both the deterministic and non-deterministic classes.
\end{itemize}

\noindent All the negative results can be reduced to the $\max$ operation, as follows. Closure under subtraction implies closure under $(-1)$-multiplication, by subtracting the given automaton from a constant $0$ automaton. For nondeterministic automata, closure under $(-1)$-multiplication implies closure under the $\max$ operation, by multiplying the original automata by $(-1)$ and taking their minimum. As for deterministic automata, closure under the $\min$ and $\max$ operations are reducible to each other due to the closure under $(-1)$-multiplication. 

It is left to show the results with respect to the $\max$ operation. We start with the classes of deterministic and nondeterministic automata.

\begin{thm}\label{thm:MaxInclosure}
\NDAs and \DDAs are not closed under the $\max$ operation.
\end{thm}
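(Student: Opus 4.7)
The plan is to exhibit two explicit $\lambda$-\DDAs $\A$ and $\B$ for which no \NDA (with the same discount factor) can realize $\max(\A,\B)$; since every \DDA is trivially also an \NDA, this simultaneously yields both inclosure claims. A convenient choice is to take $\A$ and $\B$ as single-state \DDAs over the alphabet $\{a,b\}$ with self-loops of weights $\gamma_{\A}(a)=0$, $\gamma_{\A}(b)=1$ and $\gamma_{\B}(a)=1$, $\gamma_{\B}(b)=0$, sharing some fixed integral discount factor $\lambda$. Then for every infinite word $w$ one has $\A(w)+\B(w)=S$ for the constant $S=\lambda/(\lambda-1)$, so $\max(\A,\B)(w)=\max(\A(w),S-\A(w))$ is a V-shaped function of $\A(w)$: it coincides with $\A$ on $b$-heavy words and with $\B$ on $a$-heavy words.

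Suppose toward contradiction that an \NDA $\C$ with $n$ states realizes $\max(\A,\B)$. Every run of $\C$ on $w$ must then have value at least $\max(\A(w),\B(w))$, and on every word some run attains equality. The idea is to select a long ``nearly balanced'' finite prefix $u$ whose suffix continuation is decisive: continuing by $a^\omega$ forces $\max(\A,\B)$ to equal $\B$, while continuing by $b^\omega$ forces it to equal $\A$. Since $\C$ has only finitely many states, for $|u|>n$ every optimal $\C$-run on $u\cdot a^\omega$ must revisit some state within $u$, defining a pumpable cycle $\ell_a$; likewise the optimal run on $u\cdot b^\omega$ yields a cycle $\ell_b$. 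By a further pigeonhole argument across a sufficiently long family of prefixes, one can arrange that a \emph{single} pair of positions $(i,j)$ in $u$ and a \emph{single} state $q$ of $\C$ serves as the pumping point for both decisive suffixes, yielding a common cycle $\ell$ that is available in optimal runs under either continuation.

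The contradiction then emerges by comparing, for each pumping parameter $k\geq 0$, the value of $\C$'s pumped run on $u_1\ell^k u_2\cdot a^\omega$ and on $u_1\ell^k u_2\cdot b^\omega$ against the prescribed values of $\max(\A,\B)$. On the first word the target equals $\B(u_1\ell^k u_2\cdot a^\omega)$ and on the second it equals $\A(u_1\ell^k u_2\cdot b^\omega)$; summing the geometric series, both targets become explicit affine functions of $\lambda^{-k|\ell|}$. The $\C$-value along the pumped run is also affine in $\lambda^{-k|\ell|}$, with coefficients determined by the cycle's weight $\gamma(\ell)$ and by $\C^q$ on the continuation. Matching constant and $\lambda^{-k|\ell|}$ coefficients of both targets for all $k$ pins down $\gamma(\ell)$ via two relations, and by additionally varying the base prefix $u$ (to shift where the cycle sits relative to $\A$-$\B$ balance), one forces these relations to disagree, yielding the arithmetic contradiction.

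The main obstacle I expect is establishing the \emph{common} cycle $\ell$ that participates in optimal runs under both the $a^\omega$ and $b^\omega$ continuations, and then extracting the quantitative disagreement robustly. This is where the argument transcends Lemma~\ref{lem:DifferentGaps}: a recoverable gap of some state of $\C$ over $u$ under $a^\omega$ need not be recovered by the same run under $b^\omega$, so the usual single-suffix recoverability analysis does not distinguish the two cases. Handling this step requires the pumping-style analysis of recoverable gaps promised in the introduction, and is the technical heart of the proof.
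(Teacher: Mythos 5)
Your construction cannot work as stated: you fix an \emph{integral} discount factor $\lambda$, but by Theorem~\ref{thm:MaxClosure} the class of $\lambda$-\NDAs with $\lambda\in\Nat$ \emph{is} closed under $\max$, and by Theorem~\ref{thm:Determinizability} the resulting \NDA is even determinizable, so $\max(\A,\B)$ is realized by a $\lambda$-\DDA. Concretely, for your two automata all weights are integers, hence for every finite word $u$ the gap $\lambda^{|u|}\bigl(\A(u)-\B(u)\bigr)$ is an integer; only boundedly many such gaps are recoverable, and the product construction of Theorem~\ref{thm:MaxClosure} terminates and yields a finite deterministic automaton for $\max(\A,\B)$. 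No pumping argument, however arranged, can contradict an object that exists. The failure of $\max$-closure is specific to nonintegral discount factors (the table in Section~\ref{sec:Closure} marks integral \NDAs as closed under everything), and the paper's counterexample accordingly takes $\lambda=\frac{5}{2}$.

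Beyond the choice of $\lambda$, the engine of the paper's proof is a density property your weights cannot supply: with $\A$ the constant-$0$ automaton and $\B$ a one-state automaton over the alphabet $\{-1,-\frac{1}{2},-\frac{1}{4},-\frac{1}{8},0,\frac{2}{5}\}$, a number-theoretic induction produces finite words $u$ with $\Gap(\B,u)=\frac{5j}{2^k}$ for all suitable $j,k$, so the gap can be made \emph{arbitrarily small but nonzero} --- in particular squeezed strictly between $\lambda^{-2n}$ and $\lambda^{-n}$ for the assumed $n$-state \NDA $\C$. The pumping is then simpler than what you sketch: one fixes a \emph{single} word $w=u\cdot\letter{0}^n\cdot\letter{-1}\cdot\letter{0}^\omega$ on which $\B(w)<0$ and hence $\C(w)=0$, finds one repeated state in one optimal run of $\C$ on $w$, and reaches a trichotomy contradiction ($g_1\not<g_2$, $g_1\not>g_2$, $g_1\neq g_2$) by deleting the cycle, padding with zeros, or comparing directly. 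No common cycle across two different suffix continuations is required --- which is fortunate, since your pigeonhole step for producing a single cycle that is simultaneously part of optimal runs under both the $a^\omega$ and $b^\omega$ continuations is asserted but not justified, and I do not see how to make it go through.
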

\begin{proof}
We prove a stronger claim, showing that there are two \DDAs, $\A$ and $\B$, defined in Figure~\ref{fig:NoMax}, for which there is no \NDA equivalent to $\max(\A,\B)$. Intuitively, we show that the recoverable-gap between $\A$ and $\B$ can be arbitrarily small, and therefore, by pumping-arguments, an \NDA for $\max(\A,\B)$ cannot be of a finite size.

Assume, by contradiction, an \NDA $\C$ with $n$ states equivalent to $\max(\A,\B)$.
The value of $\A$ over every word is obviously $0$. Thus, for every infinite word $w$, $\C(w) = \B(w)$ if $\B(w)>0$ and $0$ otherwise.

For a finite word $u$, we shall refer to $\lambda^{|u|}\B(u)$ as the \emph{gap} of $\B$ over $u$, denoted $\Gap(\B,u)$. Intuitively, this gap stands for the weight that $\B$ should save over a suffix $z$ for having a negative value over the whole word. That is, $\B(uz) < 0$ if and only if $\B(z) < -\Gap(\B,u)$. Within this proof, $\lambda$ is fixed to $\frac{5}{2}$.

A key observation is that the gap of $\B$ can be arbitrarily small. Specifically, we show that for every natural numbers $k\geq3$ and  $j \leq \uwhole{\frac{2^k}{5}}$, there is a finite word $u_{j,k}$ such that $\Gap(\B,u_{j,k})=\frac{5j}{2^k}$. It goes by induction on $k$. For $k=3$, it holds with $u_{0,3}=$`$0$',  $u_{1,3}=$`$\frac{2}{5}$' `$-\frac{1}{2}$' `$-1$', and $u_{2,3}=$`$\frac{2}{5}$' `$-\frac{1}{2}$'. As for the induction step, consider a number $0\leq j \leq \uwhole{\frac{2^{k+1}}{5}}$. One may verify that multiplying $2^k$ by each of $1, \frac{1}{2},  \frac{1}{4}, \frac{1}{8}$, and $0$, provides a different reminder when divided by $5$. Hence, there is a number $v\in\{ -1, -\frac{1}{2},  -\frac{1}{4}, -\frac{1}{8}, 0\}$ and a natural number $j' \leq \uwhole{\frac{2^k}{5}}$ such that $j' = \frac{j - v 2^k}{5}$. Thus, we can have, by the induction assumption, the required word $u_{j,k+1}$, by $u_{j,k+1}=u_{j',k}$`$v$' , as $\Gap(\B,u_{j,k+1})=\frac{5}{2}(\Gap(\B,u_{j',k}) + v)  =\frac{5}{2}(\frac{5j'}{2^k} + v) = \frac{5}{2} (\frac{j-v2^k}{2^k}+v) =  \frac{5j}{2^{k+1}}$.

By the above observation, there is a finite word $u$, such that $\frac{1}{\lambda^{2n}} < \Gap(\B,u)<\frac{1}{\lambda^n}$. We define the infinite word $w=u $`$0$'$^n$ `$-1$'  `$0$'$^\omega$. Since a $0$-weighted letter multiplies the gap by $\lambda$, we get that $0 < \Gap(\B,u$`$0$'$^n) < 1$, and therefore $\B(w)<0$ and $\C(w)=0$.

Let $r$ be an optimal run of $\C$ on $w$. Since $\C$ has only $n$ states, there is a state $q$ of $\C$ and two positions $|u| \leq p_1 < p_2 < |u|+ n$, such that $r$ visits $q$ on both $p_1$ and $p_2$. Let $w_1$ and $w_2$ be the prefixes of $w$ of lengths $p_1$ and $p_2$, respectively. Let $z$ be the suffix of $w$ after $w_2$, that is $w=w_2 z$. Let $r_1$, $r_2$ and $r_z$ be the portions of $r$ on $w_1$, $w_2$ and $z$, respectively.

Let $v_1$ and $v_2$ be the values of $r_1$ and $r_2$, respectively, and define $g_1 = \lambda^{p1}v_1$ and $g_2 = \lambda^{p2}v_2$. Let $v_z$ be the value of a run equivalent to $r_z$. Since the value of $r$ is $0$, we have that $v_2 + \frac{v_z}{\lambda^{p_2}} = 0$, and therefore, $v_z = -g_2$.  

We shall reach a contradiction by showing that  $g_1 \not < g_2$, $g_1 \not > g_2$, and $g_1\neq g_2$. Indeed:
\begin{itemize}
\item If $g_1< g_2$ then there is a run $r'=r_1 r_z$ of $\C$ on the word $w'=w_1z$, whose value is $v_1 + \frac{v_z}{\lambda^{p_2}} = \frac{g_1 + v_z}{\lambda^{p_2}}$. However, since $g_1< g_2=-v_z$, it follows that the value of $\C$ on $w'$ is negative, which leads to a contradiction.
\item If $g_1> g_2$ then there is a negative-valued run of $\C$ on the word $w_1$`$0$'$^{2(p_2-p_1)}z$, analogously to the previous case.
\item If $g_1= g_2$ then there is a $0$-valued run of $\C$ on the word $w'=w_1$`$0$'$^{2n}z$, however $\B(w')>0$, leading to a contradiction.\qedhere
\end{itemize}
\end{proof}

\begin{figure}
% Exported from Xfig in a 50% ratio
\centering\input{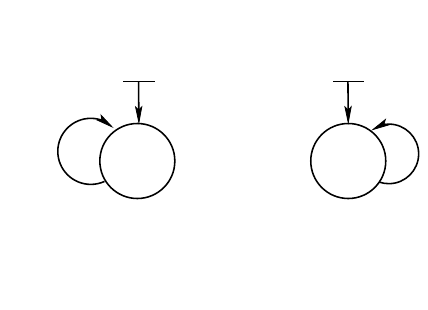_t} \caption{The \DDAs $\A$ and $\B$, for which there is no \NDA equivalent to $\max(\A,\B)$.}\label{fig:NoMax}
\end{figure}

\noindent We continue with the class of automata with an integral factor.
\begin{thm}\label{thm:MaxClosure}
For every $\lambda\in\Nat$, the class of $\lambda$-\NDAs is closed under the $\max$ operation.
\end{thm}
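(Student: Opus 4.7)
The plan is to reduce closure under $\max$ to three ingredients already at our disposal: (a) determinizability of every integral $\lambda$-\NDA (Theorem~\ref{thm:Determinizability}); (b) the trivial closure of \NDAs under $\min$ (take the disjoint union and prepend a fresh initial state that nondeterministically branches to the two original initial states, so every run of the union is a run of one of the summands); and (c) the trivial closure of \DDAs under pointwise negation of weights (negating every transition weight preserves determinism and the discount factor, and flips the sign of the unique run value on each word).

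Given integral $\lambda$-\NDAs $\A$ and $\B$, I would proceed in five short steps. First, apply Theorem~\ref{thm:Determinizability} to $\A$ and to $\B$ to obtain equivalent $\lambda$-\DDAs $\A'$ and $\B'$. Second, negate every weight of $\A'$ and $\B'$ to obtain $\lambda$-\DDAs $\hat{\A}$ and $\hat{\B}$ computing $-\A(w)$ and $-\B(w)$ respectively. Third, form the nondeterministic union $\mathcal{U}$ of $\hat{\A}$ and $\hat{\B}$; this is a $\lambda$-\NDA satisfying $\mathcal{U}(w) = \min(-\A(w),-\B(w)) = -\max(\A(w),\B(w))$. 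Fourth, observe that $\mathcal{U}$ is still an integral $\lambda$-\NDA (the discount factor $\lambda\in\Nat$ is unchanged, and all weights remain rational), so Theorem~\ref{thm:Determinizability} applies a second time and yields an equivalent $\lambda$-\DDA $\mathcal{U}'$. Fifth, negate every weight of $\mathcal{U}'$ to obtain a $\lambda$-\DDA, hence a $\lambda$-\NDA, $\C$ with $\C(w) = -\mathcal{U}'(w) = \max(\A(w),\B(w))$ for every word $w$.

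The only nontrivial step is the second invocation of Theorem~\ref{thm:Determinizability}, applied to $\mathcal{U}$, and this is precisely where integrality is crucial: without determinizing $\mathcal{U}$ we would only have a nondeterministic realization of $-\max(\A,\B)$, and Theorem~\ref{thm:MaxInclosure} shows that the $\max$ of two \DDAs need not be realizable by any \NDA, so the analogous argument must fail for nonintegral $\lambda$. Here it succeeds because the discount factor stays integral throughout the construction and is inherited unchanged by $\mathcal{U}$. All other steps are purely syntactic transformations on the state space and the weight function and require no further combinatorial argument.
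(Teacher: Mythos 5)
Your proof is correct and follows essentially the same route as the paper's: determinize $\A$ and $\B$ via Theorem~\ref{thm:Determinizability}, reduce $\max$ to $\min$ by negating the weights of the resulting deterministic automata, and then produce a deterministic automaton realizing the $\min$, relying on integrality exactly where you say it is needed. The only difference is one of packaging: where you take the nondeterministic union of $\hat{\A}$ and $\hat{\B}$ and invoke Theorem~\ref{thm:Determinizability} a second time as a black box, the paper instead spells out an explicit product-of-automata construction tracking recoverable gaps (which is precisely what the subset construction of Section~\ref{sec:Construction} yields on that union), so your version spares the reader the ``analogous'' termination and correctness proofs the paper appeals to.
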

\begin{proof}
Consider a discount-factor $1<\lambda\in\Nat$ and two $\lambda$-\NDAs, $\A$ and $\B$. By Theorem~\ref{thm:Determinizability}, $\A$ and $\B$ can be determinized to equivalent $\lambda$-\DDAs. Thus, we may only consider deterministic automata. Since deterministic automata are closed under $(-1)$-multiplication, we may also consider the $\min$ operation rather than the $\max$ operation.

The construction of a \DDA $\C$ equivalent to $\min(\A, \B)$ is analogous to the determinization construction of Section~\ref{sec:Construction}, with the difference of extending automata-product rather than the subset-construction. Namely, we iteratively construct the product of $\A$ and $\B$, where a state of $\C$ contains a state of $\A$ and a state of $\B$, together with their recoverable-gaps. That is, for a state $p$ of $\A$ and a state $q$ of $\B$, a state $c$ of $\C$ is of the form $c=\pair{\pair{p,g_p}, \pair{q,g_q}}$. When $\A$ and $\B$ read a finite word $u$ and reach the states $p$ and $q$, respectively, we have that $g_p = \lambda^{|u|}( \A(u) - \min(\A(u), \B(u)))$ and $g_q = \lambda^{|u|}( \B(u) - \min(\A(u), \B(u)))$. Once a gap is too large, meaning that it is bigger than twice the maximal difference between a weight in $\A$ and a weight in $\B$, it is changed to $\infty$.

The termination and correctness proofs of the above construction are analogous to the proofs of Lemmas \ref{lem:Termination} and \ref{lem:DetCorrectness}.
\end{proof}

\section{Conclusion}
Recently, there has been a considerable effort to extend formal verification from the Boolean setting to a quantitative one. Automata theory plays a key role in formal verification, and therefore quantitative automata, such as limit-average automata and discounted-sum automata, play a central role in quantitative formal verification. However, of the basic automata questions underlying a verification task, namely, emptiness, universality, and inclusion, only emptiness is known to be solvable for these automata. The other questions are either undecidable, as is the case with limit-average automata, or not known to be decidable, as is the case with discounted-sum automata. 

We showed that discounted-sum automata with integral discount factors form a robust class, having algorithms for all the above questions, being closed under natural composition relations, such as $\min$, $\max$, addition and subtraction, and allowing for determinization. For discounted-sum automata with a nonintegral factor, we showed that they can be determinized approximately with respect to any required precision, which is not the case with other quantitative automata, such as sum, average, and limit-average automata. Hence, we find the class of discounted-sum automata a promising direction in the development of formal quantitative verification.

\section*{Acknowledgement}
We thank Laurent Doyen for great ideas and valuable help, and the anonymous reviewers for their very helpful comments and suggestions.

\bibliographystyle{alpha}
\bibliography{bib}

\end{document}